\documentclass[12pt,reqno]{amsart}

\usepackage{lscape}
\usepackage{amsmath,amssymb,amsthm}
\usepackage[numbers]{natbib}
\usepackage[dvips]{graphicx}
\usepackage{float}
\usepackage{epsfig,array,multirow}
\usepackage[usenames]{color}
\usepackage{geometry}
\geometry{verbose,tmargin=1.25in,bmargin=1.25in,lmargin=1.25in,rmargin=1.25in}
\usepackage{setspace}
\setstretch{1.2}
\allowdisplaybreaks[4]


\theoremstyle{plain}
\newtheorem{theorem}{Theorem}[section]	
\newtheorem{lemma}{Lemma}[section]

\newtheorem{assumption}{Assumption}[section]

\theoremstyle{definition}

\newtheorem{remark}{Remark}[section]

\renewcommand{\qed}{\hfill{\tiny \ensuremath{\blacksquare} }}%

\renewcommand{\Pr}{{\mathrm{P}}}

\begin{document}

\title[Spectral and Post-Spectral Estimators for Grouped Panel Data Models]{Spectral and Post-Spectral Estimators for Grouped Panel Data Models}
\thanks{We are grateful to Tim Armstrong, Stephane Bonhomme, Victor Chernozhukov, Andrew Chesher, Jin Hahn, Yusuke Narita, Martin Weidner, Daniel Wilhelm, Andrei Zeleneev, participants at many seminars, and especially to Arturas Juodis, Andres Santos, and our discussant at Chamberlain's seminar Roger Moon for useful comments. We also thank Martin Weidner and Roger Moon for providing the code for their estimator.}

\author[Chetverikov]{Denis Chetverikov}
\author[Manresa]{Elena Manresa}

\address[D. Chetverikov]{
Department of Economics, UCLA, Bunche Hall, 8283, 315 Portola Plaza, Los Angeles, CA 90095, USA.
}
\email{chetverikov@econ.ucla.edu}

\address[E. Manresa]{
Department of Economics, New York University, 19 West 4th Street, New York, NY 10003, USA.
}
\email{em1849@nyu.edu}

\date{\today.}

\begin{abstract}
In this paper, we develop spectral and post-spectral estimators for grouped panel data models. Both estimators are consistent in the asymptotics where the number of observations $N$ and the number of time periods $T$ simultaneously grow large. In addition, the post-spectral estimator is $\sqrt{NT}$-consistent and asymptotically normal with mean zero under the assumption of well-separated groups even if $T$ is growing much slower than $N$. The post-spectral estimator has, therefore, theoretical properties that are comparable to those of the grouped fixed-effect estimator developed by Bonhomme and Manresa in \cite{BM15}. In contrast to the grouped fixed-effect estimator, however, our post-spectral estimator is computationally straightforward. 
\end{abstract}

\keywords{}

\maketitle

\section{Introduction}
Consider a grouped panel data model
\begin{equation}\label{eq: model}
y_{it} = x_{it}'\beta + \alpha_{g_i t} + v_{it},\quad\text{for all }i=1,\dots,N, \ t = 1,\dots,T,
\end{equation}
where $i$ denotes cross-sectional units, $t$ denotes time periods, $y_{it}\in\mathbb R$ is an observable dependent variable, $x_{it} = (x_{it1},\dots,x_{itd})'\in\mathbb R^d$ is a corresponding vector of observable covariates, $g_i\in\{1,\dots,G\}$ is an unobservable group-membership variable, $v_{i t}\in\mathbb R$ is an unobservable zero-mean noise random variable, $\beta = (\beta_1,\dots,\beta_d)'\in\mathbb R^d$ is a vector of parameters of interest, and $(\alpha_{1 t},\dots,\alpha_{G t})' \in\mathbb R^G$ is a vector of unobservable group-specific time effects. Here, we assume that the noise and covariates are uncorrelated, 
\begin{equation}\label{eq: mean independence}
\mathbb E[v_{it}x_{it}] = 0_d,\quad\text{for all }i=1,\dots,N, \ t = 1,\dots,T,
\end{equation}
where $0_d = (0,\dots,0)'\in\mathbb R^d$, but group-specific time effects and group-membership variables can be arbitrarily correlated with covariates. Also, throughout the paper, we assume that the number of groups $G$ is known (consistently estimated).

The model \eqref{eq: model} was originally introduced by Bonhomme and Manresa in \cite{BM15}, who also developed a so-called grouped fixed-effect estimator of the vector of parameters $\beta$ in this model.
This estimator has attractive theoretical properties but is computationally difficult. It is therefore of interest to see if there exist alternative estimators that would be easier to compute. In this paper, we answer this question affirmatively, under certain additional assumptions to be specified in the next paragraph, and propose an estimator of $\beta$, which we call the {\em post-spectral estimator}, that also has nice theoretical properties but, in contrast to the grouped fixed-effect estimator, is computationally simple.

Like in the previous papers on grouped panel data models, we consider large $(N,T)$-asymptotics, i.e. we assume that $T\to\infty$, potentially very slowly, as $N\to\infty$, since otherwise $\beta$ is in general not identified. In contrast to the previous papers, however, we impose a special structure on the data-generating process for the covariates $x_{it}$. In particular, we assume that for some $M\geq 1$,
\begin{equation}\label{eq: x process general}
x_{it} = \sum_{m=1}^{M} \rho_{i m}\alpha_{g_it}^{m} + z_{it},\quad\text{for all }i=1,\dots,N, \ t= 1,\dots,T,
\end{equation}
where $(\alpha_{1t}^m,\dots,\alpha_{Gt}^m)'\in\mathbb R^G$ for $m=1,\dots,M$ are group-specific time effects, $\rho_{im}\in \mathbb R^d$ for $m=1,\dots,M$ are individual-specific vectors of coefficients, and $z_{it}$ is a zero-mean component of $x_{it}$ that is independent of group-specific time effects, group-membership variable $g_{i}$, and vectors of coefficients $\rho_{i1},\dots\rho_{iM}$. Here, no quantity on the right-hand side of \eqref{eq: x process general} is observed, except for the number of time effects $M$, which we assume to be known (consistently estimated). Also, without loss of generality, we assume that $\alpha_{\gamma t}^1 = \alpha_{\gamma t}$ for all $\gamma = 1,\dots,G$ and $t = 1,\dots,T$. We believe that this factor-analytic model for the covariates $x_{it}$ is rather flexible as it allows for individual-specific correlations between covariates and group-specific time effects.
In Appendix \ref{sec: motivating example}, we also provide an example in terms of agricultural production functions and environmental economics to motivate equation (\ref{eq: x process general}).

Our post-spectral estimator consists of three steps. In the first step, we carry out preliminary estimation of $\beta$. To do so, we prove that as long as the data-generating process is given by equations \eqref{eq: model} and \eqref{eq: x process general}, there exists a convex quadratic function $f\colon \mathbb R^d\to\mathbb R$ such that (i) its unique minimum is achieved by $\beta$ and (ii) for each value of $b\in\mathbb R^d$, the value of $f(b)$ can be consistently estimated by the sum of $2G M+2$ largest in absolute value eigenvalues of a certain matrix. We then demonstrate that this function and its consistent estimator can be used to construct an estimator of $\beta$ that is both consistent and computationally simple. This estimator, which we call the {\em spectral estimator}, may have slow rate of convergence if $T$ is growing rather slowly, and so we proceed to the second and the third steps. In the second step, using the preliminary spectral estimator of $\beta$ obtained in the first step, we carry out classification of units $i=1,\dots,N$ into groups $\gamma = 1,\dots,G$. Importantly, our classification algorithm, which is a version of spectral clustering method (\cite{VW04, M18, V18, LZZ21}), is fast and does not require solving any non-convex optimization problems. 
In the third step, we obtain the {\em post-spectral estimator} of $\beta$ by performing OLS-type estimation pooling all units within the same group together. 

We prove that our post-spectral estimator is generally consistent and has particularly attractive properties under the assumption of well-separated groups, which means that the vectors $(\alpha_{\gamma 1},\dots,\alpha_{\gamma T})'$, $\gamma = 1,\dots,G$, are not too close to each other, and which was also used in \cite{BM15}.\footnote{We make no assumptions on the distance between vectors $(\alpha_{\gamma 1}^m,\dots,\alpha_{\gamma T}^m)'$ for $m\geq 2$.} Specifically, we show that under this assumption, the classifier constructed in the second step is consistent in the sense that with probability approaching one, any two units are getting classified into the same group if and only if they belong to the same group, and so the post-spectral estimator of $\beta$ is asymptotically equivalent to the pooled-OLS estimator with known group memberships (i.e., the oracle estimator). In turn, the latter is $\sqrt{NT}$ consistent and admits the standard OLS inference. Under the assumption of well-separated groups, our post-spectral estimator thus can be used for testing hypotheses and for constructing confidence intervals for $\beta$ using standard panel-data methods, ignoring the preliminary estimation and classification steps. Inference without the assumption of well-separated groups, however, remains an open (and challenging) question for future work. Theoretical properties of our post-spectral estimator are thus comparable to those of the grouped fixed-effect estimator, with the caveat that we impose the special structure on the data-generating process for the covariates $x_{it}$ given in \eqref{eq: x process general}.


To compare computational properties of the post-spectral estimator to those of the grouped fixed-effect estimator, we note that the latter, as well as many related estimators (\cite{AB16,AB17,CLW19,CSS19,ZWZ17}), are built around the K-means optimization problem. This optimization problem is known to be NP-hard (see \cite{ADHP09}), and so it is rather unlikely that there exists a fast algorithm for finding its solutions. Any proposed fast implementation of the aforementioned estimators therefore is likely to fail occasionally. For example, the grouped fixed-effect estimator is defined as the (global) minimizer of the sum of squared residuals over all parameter values and over all partitions of units into groups, and the main algorithm to calculate this minimizer in \cite{BM15} proceeds by initializing randomly selected values of parameters $\beta$ and $\{\alpha_{\gamma t}\}_{\gamma,t=1}^{G,T}$ and then alternating between two steps: (1) optimization over the values of group memberships $g_1,\dots,g_N$ given the values of parameters $\beta$ and $\{\alpha_{\gamma t}\}_{\gamma,t=1}^{G,T}$, and (2) optimization over the values of parameters $\beta$ and $\{\alpha_{\gamma t}\}_{\gamma,t=1}^{G,T}$ given the values of group memberships $g_1,\dots,g_N$. Since this procedure at best converges to a local minimum, it is repeated over many different initial values of parameters to find the global minimum, which corresponds to the grouped fixed-effect estimator. As \cite{BM15} notes, however, ``a prohibitive number of initial values may be needed to obtain reliable solutions.'' In addition, it seems never possible to say whether the global minimum has been found, as this in general would require minimizing the sum of squared residuals over parameter values for {\em each} partition of units into groups, and the number of these partitions, $G^N$, is tremendously large even in small samples. \cite{BM15} also proposed a few other algorithms to calculate the grouped fixed-effect estimator that tend to perform better in simulations but they are all subject to the same critique: if they are fast, they must fail occasionally. In contrast, our post-spectral estimator is easy to compute and does not suffer from the potential failure problem.

There is also growing literature on non-linear panel data models with group structure of individual-level parameters (\cite{HM10, LN12, SSP16, BML17, SJ18, WPS18, GV19, GS20}) originated by Hahn and Moon in \cite{HM10}. This literature is conceptually related to the grouped panel data model \eqref{eq: model} but estimation techniques developed in this literature are very different from those considered here because all aforementioned papers assume that the individual-level parameters are time-independent, and so preliminary consistent estimation of these parameters is possible by performing estimation separately for each unit. The latter is not possible in the model \eqref{eq: model} because individual effects $\alpha_{g_it}$ are varying over time, which creates one of the key challenges in estimating this model. 


We note also that the grouped panel data model \eqref{eq: model} is a special case of a panel data model with interactive fixed effects, corresponding to factor loadings with finite support in the latter model. The methods developed for estimating panel data models with interactive fixed effects can therefore be used to estimate $\beta$ in \eqref{eq: model} as well. To the best of our knowledge, however, most of these methods are either computationally difficult or require conditions that are substantially different from those used in our paper. For example, the estimator in \cite{B09} is based on a solution to a non-convex optimization problem, and the estimators in \cite{HNR88, ALS01, ALS13} require certain IV-type conditions. Like in our approach, the estimators in \cite{P06} also require restricting the data-generating process for the covariates $x_{it}$ but the nature of imposed restrictions is very different. In particular, \cite{P06} either imposes a certain rank condition, which can only be satisfied if the dimensionality of $x_{it}$ is sufficiently large, or requires the factor loadings in the equation for covariates to be independent of the factor loadings in the equation for the dependent variable, which in our model would correspond to assuming that $g_i$ in \eqref{eq: x process general} is different and independent of $g_i$ in \eqref{eq: model},\footnote{\cite{P06} claims that his estimators are consistent without requiring independence of the factors but a counter-example is given in \cite{WU13}, which proves that if the rank condition is not satisfied, then independence is essentially a necessary condition for consistency of the estimators in \cite{P06}.} thus leading to a random, rather than fixed, effect model. In fact, the only exception in this literature we are aware of is \cite{MW19}, who developed a computationally relatively simple method and used conditions that are similar to (actually somewhat weaker than) those used in our paper. Their estimator can potentially replace our preliminary spectral estimator. However, the convergence rate of their estimator is only $(T\wedge N)^{-1/2}$, which can be particularly slow if $T$ is growing much slower than $N$, a case of special interest in the grouped panel data model. In contrast, the rate of convergence of our spectral estimator is $(T\wedge N)^{-1}$, which seems much more acceptable for the preliminary estimation. Indeed, we find via simulations reported below that our spectral estimator leads to much better results in reasonably large samples. Finally, \cite{AWZ22} has recently developed a method for debiasing estimators with the slow convergence rate $(T\wedge N)^{-1/2}$ yielding estimators with the fast convergence rate $(T\wedge N)^{-1}$. An advantage of our estimator here is that we obtain the fast rate in just one step, instead of two steps, which again may be preferable when $T$ is relatively small, so that the original estimator to be debiased is rather imprecise and is hard to debias.

In addition, we study three extensions of the model \eqref{eq: model}. First, we consider a dynamic version of the model, where lagged values of $y_{it}$ appear on the right-hand side of \eqref{eq: model}. We demonstrate that our spectral and post-spectral estimators work for this model too, as long as the number of factors $M$ is appropriately modified. We allow for both pre-determined and exogenous covariates in this model. Second, we consider a high-dimensional version of the model \eqref{eq: model}, where the number of covariates $d$ is large, potentially much larger than $NT$, but the vector of coefficients $\beta$ is sparse in the sense that it has relatively few non-zero components. We demonstrate how to modify the spectral estimator via $\ell^1$-penalization to obtain a computationally simple and consistent estimator of $\beta$ in this case. Third, we consider an interactive fixed-effect panel data model, where $\alpha_{g_it}$ in \eqref{eq: model} is replaced by $\kappa_i'\phi_t$, with $\phi_t$ being a vector of factors and $\kappa_i$ being a vector of factor loadings. We demonstrate that the spectral estimator, with appropriately modified parameters $G$ and $M$, is consistent in this model with the convergence rate $(T\wedge N)^{-1}$. Being computationally simple, our spectral estimator thus can serve as an alternative to existing estimators in the literature on interactive fixed-effect panel data models. Note, however, that in all three extensions, we maintain a version of \eqref{eq: x process general}.

The rest of the paper is organized as follows. In the next section, we discuss details of implementation of our spectral and post-spectral estimators. In Section \ref{sec: asymptotic theory}, we state their asymptotic properties. In Section \ref{sec: extensions}, we provide the extensions of the baseline model. In Section \ref{sec: monte carlo}, we discuss results of a small-scale Monte Carlo simulation study that shed some light on finite-sample properties of our estimators. In Section \ref{sec: remaining proofs}, we present main proofs. In Appendix \ref{sec: technical lemmas}, we collect some technical lemmas that are useful for the proofs of our main results. In Appendix \ref{sec: proofs for extensions}, we present remaining proofs. In Appendix \ref{sec: randomized algorithm}, we describe a method for calculating eigenvalues of large matrices, which may be needed for implementing our estimators. In Appendix \ref{sec: additional details on dynamic model}, we provide some details on the assumptions of the dynamic model extension. In Appendix \ref{sec: motivating example}, we discuss an example motivating equation \eqref{eq: x process general}.

 
 
 

\section{Estimation}\label{sec: estimator}
Our proposed estimation procedure consists of three steps. The first step is preliminary consistent estimation of $\beta$, which is based on the spectral analysis of certain matrices and gives the spectral estimator. The second step is classification of units into groups. The third step is pooled-OLS estimation of $\beta$ on classified units, which gives the post-spectral estimator. Under the assumption of well-separated groups, the post-spectral estimator will be $\sqrt{NT}$-consistent and asymptotically normal with mean zero, making inference based on this estimator straightforward.

\subsection{Spectral Estimator}\label{sub: preliminary estimation definition}

For all $b\in\mathbb R^d$, let $A^b$ be an $N\times N$ matrix whose $(i,j)$-th element is
\begin{equation}\label{eq: matrix a}
A^b_{ij} = \frac{1}{NT}\sum_{t=1}^T\Big\{(y_{it} - x_{it}'b) - (y_{jt} - x_{jt}'b) \Big\}^2,\quad \text{for all }i,j = 1,\dots,N,
\end{equation}
Since $A^b$ is an $N\times N$ symmetric matrix, it has $N$ real eigenvalues. Let $\lambda^b_1,\dots,\lambda^b_{2GM+2}$ be its $2G M+2$ largest in absolute value eigenvalues. We will show below that under mild conditions,
\begin{equation}\label{eq: prelim prob limit}
\lambda^b_1 + \dots + \lambda^b_{2GM+2} = b'\Sigma b + S'b + L + o_P(1),\quad\text{for all }b\in\mathbb R^d,
\end{equation}
where $\Sigma$ is a $d\times d$ symmetric positive definite matrix, $S$ is a $d\times 1$ vector, $L$ is a scalar, and, importantly, $\beta$ is the unique minimizer of the function $b\mapsto f(b) = b'\Sigma b + S'b + L$. We therefore define our spectral estimator as
\begin{equation}\label{eq: spectral estimator definition}
\tilde\beta = (\tilde\beta_1,\dots,\tilde\beta_d)' = \arg\min_{b\in\mathbb R^d} \Big\{b'\hat\Sigma b + \hat S'b + \hat L\Big\},
\end{equation}
where $\hat\Sigma$, $\hat S$, and $\hat L$ are estimators of $\Sigma$, $S$, and $L$, respectively, to be constructed below. By the first-order conditions, the estimator $\tilde\beta$ can be equivalently defined as
$$
\tilde\beta = -\hat\Sigma^{-1}\hat S/2.
$$
We will prove in the next section that $\tilde\beta\to_P \beta$.

Next, we discuss estimators $\hat\Sigma$, $\hat S$, and $\hat L$. For brevity of notation, denote
\begin{equation}\label{eq: f function}
\hat f(b) = \lambda^b_1 + \dots + \lambda^b_{2GM+2},\quad\text{for all }b\in\mathbb R^d,
\end{equation}
so that by \eqref{eq: prelim prob limit},
\begin{equation}\label{eq: prelim 2 prob limit}
\hat f(b) = b'\Sigma b + S'b + L + o_P(1),\quad\text{for all }b\in\mathbb R^d.
\end{equation}
Also, for all $k=1,\dots,d$, let $e_k = (0,\dots,0,1,0,\dots,0)'$ be the $d\times 1$ vector with 1 in the $k$-th position and 0 in all other positions, and let $0_d = (0,\dots,0)'$ be the $d\times 1$ vector with 0 in all positions. Since \eqref{eq: prelim prob limit} implies $\hat f(0_d) = L + o_P(1)$, we set
$
\hat L = \hat f(0_d).
$
Further, since $\hat f(e_k) - \hat f(-e_k) \to_P 2S_k$, we set
$$
\hat S_k = \frac{\hat f(e_k) - \hat f(-e_k)}{2},\quad\text{for all }k=1,\dots,d,
$$
and $\hat S = (\hat S_1,\dots,\hat S_d)'$. Finally, since $\hat f(e_k) + \hat f(-e_k) = 2(\Sigma_{k k} + L) + o_P(1)$, we set
$$
\hat \Sigma_{k k} = \frac{\hat f(e_k) + \hat f(-e_k)}{2} - \hat L,\quad\text{for all }k=1,\dots,d,
$$
and since $\hat f(e_{k} + e_{l}) = \Sigma_{k k} + \Sigma_{l l} + 2\Sigma_{k l} + S_{k} + S_{l} + L + o_p(1)$, we set
$$
\hat \Sigma_{k l} = \hat \Sigma_{l k} = \frac{\hat f(e_{k} + e_{l}) - \hat\Sigma_{k k} - \hat\Sigma_{l l} - \hat S_{k} - \hat S_{l} - \hat L}{2},
$$
for all $k,l=1,\dots,d, \ k> l$, and let $\hat \Sigma$ be the matrix whose $(k,l)$-th component is $\hat \Sigma_{k l}$.\footnote{Note that the quality of the estimators $\hat\Sigma$, $\hat S$, and $\hat L$ could potentially be improved by exploiting additional values of the vector $b$ but we leave the question of optimal estimation for future work.}

Under result \eqref{eq: prelim prob limit}, the estimators $\hat\Sigma$ and $\hat S$ are consistent for $\Sigma$ and $S$, respectively, and so $\tilde\beta = -\hat\Sigma^{-1}\hat S/2 \to_P -\Sigma^{-1}S/2 = \beta$, as long as $\Sigma$ is invertible, which is the case under mild conditions. The bulk of our derivations in the next section will thus be related to proving \eqref{eq: prelim prob limit}.

Before we move on, however, we note that the $N\times N$ matrices $A^b$ may be rather large, and the reader might wonder how much time it takes to calculate their eigenvalues $\lambda^b_1,\dots,\lambda^b_{2GM+2}$. Fortunately, there exists a class of fast randomized algorithms that allow to calculate these eigenvalues arbitrarily well; see Appendix \ref{sec: randomized algorithm} for details.

\begin{remark}[Alternative Version of Spectral Estimator]
Given that we have \eqref{eq: prelim prob limit} and that $\beta$ is the unique minimizer of the function $b\mapsto b'\Sigma b + S'b + L$, it seems natural to consider
$$
\check\beta = \arg\min_{b\in\mathbb R^d}(\lambda^b_1 + \dots + \lambda^b_{2GM+2})
$$
as an alternative to the spectral estimator $\tilde\beta$ appearing in \eqref{eq: spectral estimator definition}. The minimization problem here, however, is not necessarily convex, even though the criterion function is asymptotically convex. Computing $\check\beta$ may therefore be difficult. In contrast, our spectral estimator $\tilde\beta$ circumvents this problem by employing the parametric structure of the limit of this criterion function.
\qed
\end{remark}
\begin{remark}[Tuning Parameters for Spectral Estimator]
Implementing the spectral estimator $\tilde\beta$ requires choosing the product $GM$ but does not require knowing $G$ and $M$ separately, which means that we only need one tuning parameter instead of two of them. In addition, the proof of Theorem \ref{thm: consistency spectral estimator} below reveals that consistency of the spectral estimator holds even if we replace $GM$ in the definition of the estimator $\tilde\beta$ by any number that is bigger than $GM$ (as long as it is independent of $N$ and $T$). Thus, to implement the spectral estimator, we actually only need an upper bound on the product $GM$. Moreover, the proof of Theorem \ref{thm: consistency spectral estimator} also shows that for any vector $b\in\mathbb R^d$, the matrix $A^b$ has at most $2GM+2$ eigenvalues that are not asymptotically vanishing. This suggests a method to estimate the product $GM$ by counting the number of eigenvalues of the matrix $A^b$ exceeding certain threshold, which is chosen to slowly converge to zero. This method can underestimate the product $GM$, which happens if the matrix $A^b$ actually has fewer than $2GM+2$ eigenvalues that are not asymptotically vanishing, but whenever this happens, we can lose only asymptotically vanishing eigenvalues in the sum \eqref{eq: f function}, which can not break consistency of the spectral estimator. For brevity of the paper, however, we leave the question of formally deriving results with an estimated product $GM$ to future work.
\qed
\end{remark}


\subsection{Classifier}\label{sub: classification}
To classify units into groups, we will use a version of the spectral clustering method.\footnote{Note that the special structure on the data-generating process for the covariates $x_{it}$ given in \eqref{eq: x process general} was used for the construction of the spectral estimator only. This structure has no role for our classifier and for the post-spectral estimator described below.} For reasons to be explained in Remark \ref{rem: sample splitting} in the next section, we will also rely on sample splitting. To this end, let $h_1,\dots,h_N$ be i.i.d. random variables that are independent of the data and that are taking values $0$ and $1$, each with probability 1/2. We split all cross-sectional units $i=1,\dots,N$ into two subsamples, $\mathcal I_0 = \{i =1,\dots,N\colon h_i=1\}$ and $\mathcal I_1 = \{i =1,\dots,N\colon h_i=0\}$. Further, for $i=1,\dots,N$, denote $y_i = (y_{i1},\dots,y_{iT})'$ and $x_i = (x_{i1},\dots,x_{iT})'$. Also, for $h=0,1$, let $\tilde\beta^h$ be the spectral estimator calculated using the subsample $\mathcal I_h$ and let $\hat B^h$ be a $T\times T$ matrix given by
\begin{equation}\label{eq: b matrix definition}
\hat B^h = \frac{2}{NT}\sum_{i\in \mathcal I_h} (y_{i} - x_{i}\tilde\beta^h)(y_{i} - x_{i}'\tilde\beta^h)'.
\end{equation}
Since $\hat B^h$ is a $T\times T$ symmetric positive definite matrix, it has $T$ non-negative eigenvalues and $T$ corresponding orthonormal eigenvectors. Let $\hat F_h$ be a $T\times G$ matrix whose columns are orthonormal eigenvectors corresponding to $G$ largest eigenvalues of the matrix $\hat B^h$. Moreover, for all $i=1,\dots,N$, let $\hat A_i$ be a $T\times1$ vector defined by
\begin{equation}\label{eq: p vector definition}
\hat A_{i} = \hat F_{h_i}\hat F_{h_i}'(y_{i} - x_{i}\tilde\beta^{h_i}).\footnote{Thus, for all units $i$ with $h_i=1$, we calculate $\hat A_i$ using $\hat F_1$ and $\tilde\beta_1$, which are obtained from the subsample $\mathcal I_1$ consisting of all units $j$ with $h_j=0$ and, vice versa, for all units $i$ with $h_i=0$, we calculate $\hat A_i$ using $\hat F_0$ and $\tilde\beta_0$, which are obtained from the subsample $\mathcal I_0$ consisting of all units $j$ with $h_j=1$.}
\end{equation}
Intuitively, the vectors $\hat A_1,\dots,\hat A_N$ estimate the vectors $\alpha_{g_1},\dots,\alpha_{g_N}$, where we denoted $\alpha_{\gamma} = (\alpha_{\gamma 1},\dots,\alpha_{\gamma T})'$ for all $\gamma=1,\dots,G$.
We therefore classify units $i=1,\dots,N$ into $G$ groups using these vectors. To do so, fix a tuning parameter $\lambda > 0$, to be chosen below, and consider the following algorithm:

\medskip
\noindent
{\bf Classification Algorithm.}

{\em Step 1:} set $\mathcal A_1 = \{1\}$, $m = 1$, and $i = 1$;

{\em Step 2:} replace $i$ by $i + 1$;

{\em Step 3:} if $i = N + 1$, stop the algorithm;

{\em Step 4:} set $\mathcal C_i = \{\gamma=1,\dots,m\colon \|\hat A_i - |\mathcal A_{\gamma}|^{-1}\sum_{l\in\mathcal A_{\gamma}}\hat A_l\| \leq \lambda\}$;

{\em Step 5:} if $\mathcal C_i$ is empty, replace $m$ by $m+1$, set $\mathcal A_m = \{i\}$, and go to Step 2;

{\em Step 6:} if $\mathcal C_i$ is not empty, replace $\mathcal A_{\gamma}$ by $\mathcal A_{\gamma}\cup \{i\}$ for $\gamma=\min\mathcal C_i$ and go to Step 2.

\medskip
\noindent

This algorithm creates $m$ groups $\mathcal A_1,\dots,\mathcal A_m$, with the number of groups $m$ depending on $\lambda$, so that $m = m(\lambda)$. Clearly, $\lambda\mapsto m(\lambda)$ is a right-continuous  function, and so
$$
\hat\lambda = \min\Big\{\lambda>0\colon m(\lambda)\leq G\Big\}
$$
is well-defined. (In practice, $\hat\lambda$ can be calculated using the values of $m(\lambda)$ on a fine grid.) We classify units $i=1,\dots,N$ into $G$ groups using this algorithm with $\lambda = \hat\lambda$. The result of the algorithm is then $m(\hat\lambda)\leq G$ groups $\mathcal A_1,\dots,\mathcal A_{m(\hat\lambda)}$, and for all $i=1,\dots,N$, there exists a unique $\gamma = \gamma(i)\in\{1,\dots,m(\hat\lambda)\}$ such that $i\in\mathcal A_{\gamma}$. We set
$$
\hat g_i = \gamma(i),\quad\text{for all }i=1,\dots,N,
$$ 
and $\hat g = (\hat g_1,\dots,\hat g_N)'$. Note that this classifier can occasionally lead to less than $G$ groups, which happens when $m(\hat\lambda) < G$, but we will show in the next section that it is consistent in the sense that
\begin{equation}\label{eq: classifier consistency statement}
\mathrm P\Big(\text{for all }i,j=1,\dots,N, \ \  \hat g_{i} = \hat g_{j}\text{ if and only if }g_{i} = g_{j}\Big)\to 1,
\end{equation}
under the assumption of well-separated groups.


\begin{remark}[Covariate-Based Classifiers]
Recall that we assume the group structure in the data-generating process for covariates $x_{it}$, equation \eqref{eq: x process general}. In principle, this structure could be used to classify units into groups as well. This seemingly sensible alternative to our procedures is interesting because it does not require estimating $\beta$ on the first step, and so looks much easier than our procedures. However, a substantial drawback of this procedure is that it may not be consistent if the group structure in the data-generating process for $x_{it}$'s is coarser than the group-structure in the data-generating process for $y_{it}$'s. For example, suppose that $M=2$ and $\rho_{i1}=0$ for all $i=1,\dots,N$. Then equation \eqref{eq: x process general} becomes
$$
x_{it}=\rho_{i2}\alpha_{g_it}^2 + z_{it},\quad \text{for all }i=1,\dots,N, \ t=1,\dots,T.
$$
Now, if we assume that $G=3$ but $\alpha_{1t}^2=\alpha_{2t}^2\neq \alpha_{3t}^2$, there are effectively only two groups in the data-generating process for $x_{it}$'s. Therefore, any reasonable classification based on this equation would merge groups 1 and 2, which would make \eqref{eq: classifier consistency statement} impossible. 
\qed
\end{remark}


\subsection{Post-Spectral Estimator}
Once we have classified units into groups, estimation of $\beta$ is straightforward. In particular, we rely upon a pooled-OLS estimator:
\begin{equation}\label{eq: pooled ols}
(\hat\beta,\hat \alpha) = \arg\min_{b\in\mathcal B,a\in\mathcal A_{G,T}}\sum_{i=1}^N\sum_{t=1}^T\left(y_{it} - x_{it}'b - a_{\hat g_i t}\right)^2,
\end{equation}
where $\mathcal B$ is a parameter space for the vector $\beta$, and $\mathcal A_{G,T}$ is a parameter space for the matrix $\{\alpha_{\gamma t}\}_{\gamma,t=1}^{G,T}$. We refer to $\hat\beta$ as the post-spectral estimator for grouped panel data models. We will show in the next section that under the assumption of well-separated groups, this estimator is asymptotically equivalent to the estimator based on correct classification,
\begin{equation}\label{eq: correct specification}
(\hat\beta^0,\hat \alpha^0) = \arg\min_{b\in\mathcal B,a\in\mathcal A_{G,T}}\sum_{i=1}^N\sum_{t=1}^T\left( y_{it} -  x_{it}'b - a_{g_it}\right)^2,
\end{equation}
and thus to the grouped fixed-effect estimator of \cite{BM15}. Hence, the standard OLS inference ignoring group classification applies.
\begin{remark}[Estimating $\beta$ by OLS of $y_{it}$ on estimated $z_{it}$]
Observe that our data-generating process for covariates $x_{it}$ in \eqref{eq: x process general} is given by a factor-analytic model; namely, it can be written as
$$
x_{it} = \sum_{m=1}^M\sum_{\gamma=1}^G \rho_{im}1\{g_i = \gamma\}\alpha_{\gamma t}^m + z_{it} =  \omega_i'\phi_t + z_{it},\quad\text{for all } i=1,\dots,N, \ t=1,\dots,T,
$$
where $\phi_t = (\alpha_{1t}^1,\dots,\alpha_{Gt}^M)'$ is a $GM\times 1$ vector of factors and $\omega_i = (\rho_{i1}1\{g_i=1\},\dots,\rho_{iM}1\{g_i = G\})'$ is the $GM\times d$ matrix of factor loadings. Here, factors $\phi_t$ and the factor loadings $\omega_i$ can be estimated by the method of asymptotic principle components as in Section 3 of \cite{BN02}; see also \cite{CK86,CK88,SW99,FHLR00}. Denoting these estimators $\hat\phi_t$ and $\hat\omega_i$ and letting $\hat z_{it} = x_{it} - \hat\omega_i'\hat\phi_t$, we are then able to obtain an estimator of $\beta$ by simply running OLS of $y_{it}$ or $\hat z_{it}$. This estimator is easy to compute and is consistent under weak conditions since $z_{it}$ is uncorrelated with both $\alpha_{g_it}$ and $x_{it} - z_{it}$. However, it performs poorly in the case of weak factors, i.e. when the factor loadings $\omega_i$ are close to zero, as factors $\phi_t$ can not be consistently estimated in this case; see \cite{CPT11, O12} for details. In particular, our simulation experience confirms that the post-spectral estimator substantially outperforms this simple estimator in the case of weak factors.
\end{remark}
\begin{remark}[Tuning Parameters for Post-Spectral Estimator]
Implementing the post-spectral estimator $\hat\beta$ requires choosing the number of groups $G$ but does not require to specify the number of time effects $M$ in the equation for covariates. Thus, like in the case of the spectral estimator, we only need one tuning parameter instead of two to implement the post-spectral estimator. In turn, estimating the number of groups $G$ is relatively easy. In particular, we can employ penalization techniques as developed in \cite{BN02}, in the same fashion as discussed in \cite{BM15}. However, for brevity of the paper, we leave the question of formally deriving results with an estimated number of groups $G$ to future work.
\qed
\end{remark}

\section{Asymptotic Theory}\label{sec: asymptotic theory}
In this section, we derive asymptotic properties of the procedures described above. For convenience, we do so in three separate subsections: spectral estimator, classifier, and post-spectral estimator. 

Throughout the rest of the paper, we assume that membership variables $g_i$, group-specific time effects $\alpha_{\gamma t}^m$ and individual specific vectors of coefficients $\rho_{im}$ are non-stochastic, i.e. our analysis is conditional on these random quantities. Also, given that we set $\alpha_{\gamma t}^1 = \alpha_{\gamma t}$, group-specific time effects $\alpha_{\gamma t}$ are non-stochastic as well. Moreover, we assume that the units $i$ are independent.

\subsection{Spectral Estimator}

Let $\mathcal S^T$ denote the unit sphere in $\mathbb R^T$, i.e. $\mathcal S^T = \{u\in\mathbb R^T\colon \|u\|=1\}$. Also, for any random variable $w$, let $\|w\|_{\psi_2}$ denote the sub-Gaussian norm of $w$, i.e.
$$
\|w\|_{\psi_2} = \inf\Big\{\epsilon > 0\colon \mathbb E[\exp(w^2/\epsilon^2)]\leq 2\Big\};
$$
see Section 2.5.2 in \cite{V18} on properties of the sub-Gaussian norm.\footnote{Sub-Gaussian norm also often appears in the literature under the name of Orlicz norm.} Intuitively, a random variable has a finite sub-Gaussian norm if the tails of its distribution are not heavier than tails of the Gaussian distribution. For example, every bounded random variable has a finite sub-Gaussian norm. To prove consistency and to derive the rate of convergence of the spectral estimator $\tilde\beta$, we will use the following assumptions.

\begin{assumption}\label{as: spectral v}
(i) For some constant $C_1>0$, we have $\|\sum_{t=1}^T u_t v_{it}\|_{\psi_2}\leq C_1$ for all $i=1,\dots,N$ and $u = (u_1,\dots,u_T)'\in\mathcal S^T$. (ii) In addition, for some constant $C_2>0$, we have $\|\sum_{t=1}^T u_t z_{itk}\|_{\psi_2}\leq C_2$ for all $i=1,\dots,N$, $u=(u_1,\dots,u_T)'\in\mathcal S^T$, and $k=1,\dots,d$.
\end{assumption}
By Hoeffding's inequality (Proposition 2.6.1 in \cite{V18}), Assumption \ref{as: spectral v}(i) holds if the random variables $v_{it}$ have finite sub-Gaussian norm and are independent across $t$. More generally, due to numerous versions of Hoeffding's inequality for time series data (e.g., see \cite{D09,V02}), Assumption \ref{as: spectral v}(i) holds as long as the dependence of the random variables $v_{it}$ across $t$ is not too strong. Assumption \ref{as: spectral v}(ii) is similar to Assumption \ref{as: spectral v}(i) but imposes the integrability and time series dependence restrictions on $z_{it}$ instead of $v_{it}$. We admit that the assumption of random variables having finite sub-Gaussian norm may be somewhat strong but we emphasize that a version of Theorem \ref{thm: consistency spectral estimator} below, with slower rates, can be derived under weaker integrability assumptions. We have chosen to work with Assumption \ref{as: spectral v} in order to minimize technicalities of our analysis.

\begin{assumption}\label{as: spectral vz}
(i) We have $\|(NT)^{-1}\sum_{i=1}^N\sum_{t=1}^T v_{it}z_{it}\| = O_P(1/\sqrt{NT})$. (ii) In addition, $(NT)^{-1}\sum_{i=1}^N\sum_{t=1}^T z_{it}z_{it}' = \Sigma/2 + O_P(1/\sqrt{NT})$, where $\Sigma$ is a positive definite $d\times d$ matrix.
\end{assumption}

Since $\mathbb E[v_{it}] = 0$, $\mathbb E[v_{it} x_{it}] = 0_d$, and we assume that $\alpha_{g_it}^m$ and $\rho_{im}$ are non-stochastic, it follows from \eqref{eq: x process general} that $\mathbb E[v_{it}z_{it}] = 0_d$. Hence, Assumption \ref{as: spectral vz}(i) is a quantitative law of large numbers for the random vector $(NT)^{-1}\sum_{i=1}^N\sum_{t=1}^T v_{it}z_{it}$. Similarly, Assumption \ref{as: spectral vz}(ii) is a quantitative law of large numbers for the random matrix $(NT)^{-1}\sum_{i=1}^N\sum_{t=1}^T z_{it}z_{it}'$. Assumption \ref{as: spectral vz}(ii) also imposes the constraint that the probability limit of this matrix is positive-definite, which is an identification condition.

\begin{theorem}[Rate of Convergence of Spectral Estimator $\tilde\beta$]\label{thm: consistency spectral estimator}
Under Assumptions \ref{as: spectral v} and \ref{as: spectral vz},
\begin{equation}\label{eq: convergence rate spectral estimator}
\tilde\beta = \beta + O_P\left(\frac{1}{T\wedge N}\right).
\end{equation}
\end{theorem}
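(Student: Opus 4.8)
\medskip
\noindent\textbf{Proof proposal.}
The plan is to reduce the whole statement to a rate-version of \eqref{eq: prelim prob limit}, namely $\hat f(b) = b'\Sigma b + S'b + L + O_P(1/(T\wedge N))$, evaluated at the finitely many points $b\in\{0_d,\pm e_k,e_k+e_l\}$ that enter the construction of $\hat\Sigma,\hat S,\hat L$. Since $\tilde\beta=-\hat\Sigma^{-1}\hat S/2$, $\beta=-\Sigma^{-1}S/2$, and $\Sigma$ is invertible by Assumption \ref{as: spectral vz}(ii), a union bound over these $O(d^2)$ points plus the smoothness of matrix inversion then yields \eqref{eq: convergence rate spectral estimator}. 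So everything comes down to analysing the eigenvalue sum $\hat f(b)$ for fixed $b$. Writing $c=\beta-b$ and $r_{it}=y_{it}-x_{it}'b=\mu_{it}+\xi_{it}$, where by \eqref{eq: x process general} and $\alpha^1_{\gamma t}=\alpha_{\gamma t}$ the signal is $\mu_{it}=(1+\rho_{i1}'c)\alpha^1_{g_it}+\sum_{m=2}^M(\rho_{im}'c)\alpha^m_{g_it}$ and the mean-zero noise is $\xi_{it}=z_{it}'c+v_{it}$, independent across $i$, I would start from the exact identity $A^b_{ij}=(NT)^{-1}\|r_i-r_j\|^2=(NT)^{-1}(d_i+d_j-2r_i'r_j)$ with $d_i=\|r_i\|^2$.

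The key structural step is to isolate the low-rank part. Let $P$ and $E$ be the $N\times T$ matrices with rows $\mu_i'$ and $\xi_i'$, so $r_i'r_j=(PP'+PE'+EP'+EE')_{ij}$, and set
\[
\tilde A^b=\frac{1}{NT}\big(d\mathbf 1'+\mathbf 1 d'-2PP'\big),\qquad \Delta^b=A^b-\tilde A^b=-\frac{2}{NT}\big(PE'+EP'+EE'\big).
\]
Because the $\mu_i$ lie in the span of the at most $GM$ distinct factor vectors $\alpha^m_\gamma$, the matrix $PP'$ has rank at most $GM$, so $\tilde A^b$ has rank at most $2GM+2$; hence selecting the $2GM+2$ largest-in-absolute-value eigenvalues captures all of its nonzero eigenvalues and $\sum_{\mathrm{top}}\lambda_i(\tilde A^b)=\tr(\tilde A^b)=\tfrac{2}{NT}\big(\sum_i\|\xi_i\|^2+2\sum_i\mu_i'\xi_i\big)$. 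The second sum is mean zero and $O_P(1/\sqrt{NT})$, while $\tfrac{2}{NT}\sum_{i,t}\xi_{it}^2=c'\,\big(\tfrac{2}{NT}\sum_{i,t}z_{it}z_{it}'\big)\,c+\tfrac{2}{NT}\sum_{i,t}v_{it}^2+O_P(1/\sqrt{NT})$; using $\mathbb E[v_{it}z_{it}]=0_d$ and Assumption \ref{as: spectral vz}(ii) this equals $(\beta-b)'\Sigma(\beta-b)+K+O_P(1/\sqrt{NT})$ for a $b$-free constant $K$. This matches $b'\Sigma b+S'b+L$ with $S=-2\Sigma\beta$ and $L=\beta'\Sigma\beta+K$, a strictly convex quadratic minimized at $b=\beta$, and pins down $\tilde\beta$.

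It remains to show that passing from $\tilde A^b$ to $A^b$ changes the eigenvalue sum by only $O_P(1/(T\wedge N))$, and this is the heart of the argument. The crude Weyl bound gives only $\|\Delta^b\|_{\op}$, which is too weak: under Assumption \ref{as: spectral v} the rows of $E$ are sub-Gaussian, so $\|E\|_{\op}=O_P(\sqrt N+\sqrt T)$, whence $\tfrac{2}{NT}\|PE'+EP'\|_{\op}=O_P(1/\sqrt{T\wedge N})$ and $\tfrac{2}{NT}\|EE'\|_{\op}=O_P(1/(T\wedge N))$ (a Marchenko--Pastur edge bound). The plan is to feed these into a spectral-projector (Riesz/resolvent) expansion $\hat f(b)-\tr(\tilde A^b)=\tr(\Delta^b\tilde\Pi)+O_P(\|\Delta^b\|_{\op}^2)$, where $\tilde\Pi$ is the (rank $O(1)$, deterministic up to the $d$-direction) projector onto the top eigenspace of $\tilde A^b$. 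Two observations make the rate work. First, the large cross terms $PE'+EP'$ have column space contained in that of $P$, i.e.\ inside the range of $\tilde\Pi$, so they never rotate the retained eigenspace away from the signal space; only $EE'$ acts on the orthogonal complement, with the harmless norm $O_P(1/(T\wedge N))$, and the remainder term is therefore $O_P(\|\Delta^b\|_{\op}^2)=O_P(1/(T\wedge N))$. Second, the first-order term $\tr(\Delta^b\tilde\Pi)$ is mean zero and, because $\tilde\Pi$ has rank $O(1)$ with range spanned by $\mathbf 1$, $d$, and the factor loadings, a direct variance computation exploiting independence across $i$ gives $\tr(\Delta^b\tilde\Pi)=O_P(1/\sqrt{NT})$, far better than the operator norm would suggest. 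Collecting $O_P(1/\sqrt{NT})+O_P(1/(T\wedge N))$ gives the claim.

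The main obstacle is precisely this sharp control of the eigenvalue \emph{sum}: one must beat the $O_P(1/\sqrt{T\wedge N})$ operator-norm perturbation by showing it is confined to the signal subspace and enters only quadratically, while the genuinely first-order contribution is tamed by a bespoke variance bound rather than by spectral-norm estimates; one must also verify an $O(1)$ separation between the retained signal eigenvalues and the $O_P(1/(T\wedge N))$ noise bulk so that the top-$(2GM+2)$ eigenspace is stably the signal space. Establishing these facts uniformly, with constants independent of $N$ and $T$, is where the bulk of the technical work will lie.
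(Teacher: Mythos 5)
Your overall skeleton matches the paper at several points: the reduction to pointwise analysis of $\hat f(b)$ at the finitely many values $b\in\{0_d,\pm e_k,e_k+e_l\}$, the identification of the limit quadratic with $S=-2\Sigma\beta$, and the sub-Gaussian $\varepsilon$-net bound giving $\|EE'\|/NT=O_P(1/(T\wedge N))$. The gap is in how you treat the signal--noise cross terms. You place $PE'+EP'$ in the perturbation $\Delta^b$ and then try to recover the rate through a spectral-projector expansion $\hat f(b)-\tr(\tilde A^b)=\tr(\Delta^b\tilde\Pi)+O_P(\|\Delta^b\|_{\op}^2)$. That expansion is valid only when the nonzero eigenvalues of $\tilde A^b$ are separated from zero by (a multiple of) $\|\Delta^b\|_{\op}$, and nothing in Assumptions \ref{as: spectral v} and \ref{as: spectral vz} delivers this: these assumptions say nothing about the size of the signal, so the nonzero eigenvalues of $PP'$ can be arbitrarily small or exactly zero (e.g., $\alpha^m_{\gamma t}\equiv0$, or weak factors with $b$ near $\beta$), in which case $\tilde\Pi$ is ill-defined or unstable and the remainder bound fails --- yet the theorem remains true in exactly these cases, so they cannot be excluded. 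Your supporting geometric claim is also incorrect: $PE'$ has column space inside that of $P$, but $EP'$ has column space inside that of $E$, so the symmetric cross term does rotate the top eigenspace away from the signal space; what is true is only that it has rank at most $2GM$. Finally, both your trace term $(2/NT)\sum_i\mu_i'\xi_i$ and the bound $\|PE'\|_{\op}/NT=O_P(1/\sqrt{T\wedge N})$ require the signal to be bounded (essentially Assumption \ref{as: bounded rho}), which Theorem \ref{thm: consistency spectral estimator} does not impose.

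The missing idea --- and the paper's fix --- is to absorb the cross terms into the \emph{low-rank} matrix rather than into the perturbation: since signal-times-noise matrices such as $\sum_t F_t(Z_t+V_t)'$ and $\sum_t (Z_t+V_t)F_t'$ each have rank at most $GM$, the matrix $A_0=A^b+(2/NT)EE'$ still has rank at most $2GM+2$, while the leftover perturbation $R=-(2/NT)EE'$ is purely noise-quadratic and satisfies $\|R\|=O_P(1/(T\wedge N))$ by your own net argument. Because $\tr(A^b)=0$ (the diagonal of $A^b$ vanishes), the cross terms cancel identically in $\tr(A_0)=-\tr(R)=(2/NT)\sum_{i,t}(z_{it}'(\beta-b)+v_{it})^2$, so no bound on the signal and no variance computation for cross terms is ever needed. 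The paper then concludes with an unconditional eigenvalue-sum comparison (Lemma \ref{lem: eigenvalue comparison}): for symmetric $A,B$ with $\mathrm{rank}(B)\le K$, the sum of the $K$ largest-in-absolute-value eigenvalues of $A$ is within $3K\|A-B\|$ of $\tr(B)$. This Weyl-type bound requires no eigengap whatsoever; that single lemma replaces your entire resolvent expansion and removes both obstacles you flag at the end of your proposal.
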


\begin{remark}[Relaxing Data-Generating Process for Covariates]
Inspecting the proof of Theorem \ref{thm: consistency spectral estimator} reveals that the theorem continues to hold even if we allow for a substantially larger class of data-generating processes for covariates instead of that specified in \eqref{eq: x process general}. Indeed, if we simply assume that $x_{it} =\varsigma_{it} + z_{it}$ for all $i=1,\dots,N$ and $t =1,\dots,T$ and some $N\times T$ matrix $\varsigma$ of rank $M$, then Theorem \ref{thm: consistency spectral estimator} holds as long as the spectral estimator $\hat\beta$ uses $2(G+M+1)$ instead of $2GM+2$ eigenvalues of the matrices $A^b$. Throughout the paper, however, we prefer to work with \eqref{eq: x process general} as this seems to be the most natural assumption on the data-generating process for covariates.\footnote{The representation $x_{it} = \varsigma_{it} + z_{it}$, however, emphasizes the fact that our methods are able to deal with the case when the equation for covariates $x_{it}$ have more groups than the equation for the dependent variable $y_{it}$.}
\qed
\end{remark}

\subsection{Classifier}
For all $\gamma = 1,\dots,G$, let $N_\gamma = 1\{g_i = \gamma\}$ be the number of units $i$ within group $\gamma$. To prove consistency of the classifier $\hat g$, we will use the following assumptions.

\begin{assumption}\label{as: bounded rho}
(i) For some constant $C_3>0$, we have $\|\rho_{im}\|\leq C_3$ for all $m=1,\dots,M$ and $i=1,\dots,N$. (ii) In addition, for some constant $C_4>0$, we have $|\alpha^m_{\gamma t}|\leq C_4$ for all $m=1,\dots,M$, $\gamma=1,\dots,G$, and $t=1,\dots,T$.
\end{assumption}


\begin{assumption}\label{as: well separated groups}
For some constant $c_1>0$, we have $T^{-1}\sum_{t=1}^T(\alpha_{\gamma_1 t} - \alpha_{\gamma_2 t})^2\geq c_1$ for all $\gamma_1,\gamma_2=1,\dots,G$ such that $\gamma_1\neq\gamma_2$.
\end{assumption}

Assumption \ref{as: bounded rho} is self-explanatory. Assumption \ref{as: well separated groups} means that the groups are well-separated in the sense that the vectors of group-specific time effects, $(\alpha_{\gamma 1},\dots,\alpha_{\gamma T})'$ for $\gamma = 1,\dots,G$, are not too close to each other. We will use this assumption to prove consistency of the classifier $\hat g$ and to derive asymptotic normality of the post-spectral estimator $\hat\beta$ but we will not use it to prove consistency of the post-spectral estimator $\hat\beta$. Note, however, that if groups are not well-separated, it is possible that the spectral estimator actually outperforms the post-spectral one. Also, note that Assumption \ref{as: well separated groups} is sufficient for consistency of the classifier $\hat g$ but by no means necessary. In particular, using more sophisticated arguments as in \cite{LZZ21} and stronger conditions on the noise variables $v_{it}$ (i.e. isotropic Gaussianity, which means that the random variables $v_{it}$ are i.i.d. centered Gaussian), one can replace Assumption \ref{as: well separated groups} by a much weaker condition $\sum_{t=1}^T(\alpha_{\gamma_1t} - \alpha_{\gamma_2t})^2\geq c \log N$ for all $\gamma_1,\gamma_2=1,\dots,G$ such that $\gamma_1\neq\gamma_2$ and a suitable constant $c>0$.

\begin{assumption}\label{as: group sizes new}
For some constant $c_2>0$, we have $N_{\gamma} \geq c_2 N$ for all $\gamma = 1,\dots,G$.
\end{assumption}

Assumption \ref{as: group sizes new} requires that each group $\gamma = 1,\dots,G$ constitutes a non-trivial fraction of all units. If we were to assume random group assignment, where each unit is assigned to group $\gamma$ with probability $\mathrm{p}_{\gamma}>0$, so that $\sum_{\gamma=1}^G\mathrm p_{\gamma}=1$, and units are assigned independently, this assumption would be satisfied with probability approaching one as $N\to\infty$. 

\begin{assumption}\label{as: N and T}
We have $\log N = o(T)$ and $\log T = o(N)$.
\end{assumption}

Assumption \ref{as: N and T} specifies how fast $T$ and $N$ are required to grow relative to each other in the asymptotics. The most important observation here is that we allow $T$ to be much smaller than $N$, which is the main case of interest for grouped panel data models; see \cite{BM15}.

\begin{theorem}[Consistency of Classifier $\hat g$]\label{thm: classifier consistency new}
Under Assumptions \ref{as: spectral v}--\ref{as: N and T}, we have
$$
\mathrm P\Big(\text{for all }i,j=1,\dots,N, \ \text{we have } \hat g_i = \hat g_j\text{ if and only if }g_i = g_j\Big)\to 1,
$$
as $N\to\infty$.
\end{theorem}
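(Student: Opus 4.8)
The plan is to reduce everything to a uniform bound on the estimation error of the group-effect vectors and then feed this into the clustering algorithm, exploiting well-separation. Write $\alpha_\gamma = (\alpha_{\gamma 1},\dots,\alpha_{\gamma T})'$ and $v_i = (v_{i1},\dots,v_{iT})'$, so that $y_i - x_i\beta = \alpha_{g_i} + v_i$. By Assumption \ref{as: well separated groups} we have $\|\alpha_{\gamma_1}-\alpha_{\gamma_2}\|\ge\sqrt{c_1 T}$ for distinct groups, while by Assumption \ref{as: bounded rho}(ii) each $\|\alpha_\gamma\|\le C_4\sqrt T$. The key claim I would establish is
\[
\max_{1\le i\le N}\|\hat A_i - \alpha_{g_i}\| = o_P(\sqrt T).
\]
Granting this, let $\delta_N$ be a deterministic sequence with $\delta_N/\sqrt T\to 0$ that dominates the left-hand side with probability approaching one, and work on that event. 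Then same-group units satisfy $\|\hat A_i - \hat A_j\|\le 2\delta_N$ and different-group units satisfy $\|\hat A_i - \hat A_j\|\ge\sqrt{c_1 T}-2\delta_N$, a separation I would use to show the algorithm classifies correctly at $\hat\lambda$ (final paragraph).

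For the key claim I would decompose, using $y_i - x_i\tilde\beta^{h_i} = \alpha_{g_i} + v_i + x_i(\beta-\tilde\beta^{h_i})$,
\[
\hat A_i - \alpha_{g_i} = \big(\hat F_{h_i}\hat F_{h_i}' - I\big)\alpha_{g_i} + \hat F_{h_i}\hat F_{h_i}' v_i + \hat F_{h_i}\hat F_{h_i}' x_i(\beta - \tilde\beta^{h_i}),
\]
and bound the three terms separately. The crucial structural fact is the cross-fitting built into the construction: $\hat F_{h_i}$ and $\tilde\beta^{h_i}$ depend only on the subsample $\mathcal I_{h_i}$, whereas $i\in\mathcal I_{1-h_i}$, so $(x_i,v_i)$ is independent of $(\hat F_{h_i},\tilde\beta^{h_i})$. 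For the noise term, conditional on $\mathcal I_{h_i}$ each coordinate $\hat f_k'v_i$ of $\hat F_{h_i}'v_i$ is, by Assumption \ref{as: spectral v}(i), sub-Gaussian with norm at most $C_1$ (its columns lie in $\mathcal S^T$); a union bound over the $N$ units and the $G$ columns then gives $\max_i\|\hat F_{h_i}'v_i\| = O_P(\sqrt{\log N})$, which is $o_P(\sqrt T)$ precisely because $\log N = o(T)$ (Assumption \ref{as: N and T}). For the plug-in term, the projection is a contraction, $\|\beta - \tilde\beta^{h_i}\| = O_P(1/(T\wedge N))$ by Theorem \ref{thm: consistency spectral estimator} applied to the subsample, and $\max_i\|x_i\|_{\mathrm{op}} = O_P(\sqrt T)$ under Assumptions \ref{as: spectral v}(ii) and \ref{as: bounded rho} (again using $\log N = o(T)$), so this term is $O_P(\sqrt T/(T\wedge N)) = o_P(\sqrt T)$ under Assumption \ref{as: N and T}.

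The main obstacle is the first term, which requires showing that the estimated factor space $\mathrm{range}(\hat F_h)$ approximately contains the span $\mathcal M = \mathrm{span}\{\alpha_1,\dots,\alpha_G\}$. I would proceed by a Davis--Kahan (sin-theta) argument. Writing $B^{h,*} = \tfrac{2}{NT}\sum_{i\in\mathcal I_h}\alpha_{g_i}\alpha_{g_i}'$, the matrix $\hat B^h$ equals $B^{h,*}$ plus a perturbation $E$ collecting the quadratic noise term $\tfrac{2}{NT}\sum_{i\in\mathcal I_h}v_iv_i'$, the cross term $\tfrac{2}{NT}\sum_{i\in\mathcal I_h}(\alpha_{g_i}v_i'+v_i\alpha_{g_i}')$, and the $\tilde\beta$-plug-in terms. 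I would show $\|E\|_{\mathrm{op}} = o_P(1)$: the cross term is $O_P(1/\sqrt N)$ and the plug-in terms are $O_P(1/(T\wedge N)^2)$, while the quadratic noise term has operator norm $o_P(1)$ by a sample-covariance concentration bound for sub-Gaussian rows (its mean has operator norm $O(1/T)$ because Assumption \ref{as: spectral v}(i) bounds the covariance of each $v_i$ in operator norm, and the fluctuation is controlled by matrix concentration). In parallel I would show the nonzero eigenvalues of $B^{h,*}$ are bounded below by a constant: each group is present in $\mathcal I_h$ with at least $\approx c_2 N/2$ units by Assumption \ref{as: group sizes new} and a binomial concentration bound, so $\mathcal M$ is recovered with the correct rank, and well-separation together with boundedness forces the normalized pairwise inner products of the $\alpha_\gamma$ away from one, making the nonzero spectrum of $B^{h,*}$ of order one (a fact I would isolate as a technical lemma). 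Since the signal gap is $\Theta(1)$ and $\|E\|_{\mathrm{op}} = o_P(1)$, the sin-theta theorem gives $\|(I - \hat F_h\hat F_h')P_{\mathcal M}\|_{\mathrm{op}} = o_P(1)$ for each of the two values $h\in\{0,1\}$; applying this to $\alpha_{g_i}\in\mathcal M$ with $\|\alpha_{g_i}\|\le C_4\sqrt T$ bounds the first term by $o_P(1)\cdot O(\sqrt T) = o_P(\sqrt T)$, uniformly in $i$ because there are only two subsamples. Combining the three bounds yields the key claim.

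It remains to turn the key claim into correct classification. On the good event I would analyze the greedy algorithm as a function of $\lambda$. Because the running centroid of any cluster contained in a single true group stays within $\delta_N$ of the corresponding $\alpha_\gamma$, for every $\lambda < \sqrt{c_1 T}-2\delta_N$ a unit can never be assigned to a cluster seeded by a different true group (the relevant distance exceeds $\sqrt{c_1 T}-2\delta_N>\lambda$); hence for such $\lambda$ each cluster lies in one true group and, all $G$ groups being represented, $m(\lambda)\ge G$. Conversely, for $\lambda\ge 2\delta_N$ a unit always lies within $\lambda$ of a cluster already seeded by its own group, so no over-splitting occurs and $m(\lambda) = G$ for $\lambda\in[2\delta_N,\sqrt{c_1 T}-2\delta_N)$. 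Consequently $\hat\lambda = \min\{\lambda\colon m(\lambda)\le G\}\le 2\delta_N < \sqrt{c_1 T}-2\delta_N$, so at $\hat\lambda$ there are no wrong merges while $m(\hat\lambda)\le G$; combined with $m(\hat\lambda)\ge G$ this forces $m(\hat\lambda)=G$, i.e. a bijection between clusters and true groups in which each cluster coincides with a full true group. This is exactly $\hat g_i = \hat g_j$ if and only if $g_i = g_j$, and since the good event has probability approaching one, the theorem follows.
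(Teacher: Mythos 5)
Your overall skeleton matches the paper's proof: the key claim $\max_{1\leq i\leq N}\|\hat A_i - \alpha_{g_i}\| = o_P(\sqrt T)$, the three-term decomposition of $\hat A_i - \alpha_{g_i}$, the use of cross-fitting plus sub-Gaussianity plus $\log N = o(T)$ for the noise term, and the final analysis of the greedy algorithm are all essentially the paper's Steps 1, 2, and 8 and its concluding argument, and those parts are sound. However, there is a genuine gap in your treatment of the bias term $(\hat F_{h_i}\hat F_{h_i}' - I)\alpha_{g_i}$, and it sits exactly where the paper's proof is most delicate. Your proposed technical lemma --- that Assumptions \ref{as: bounded rho} and \ref{as: well separated groups} force the normalized pairwise inner products of the $\alpha_\gamma$ away from one, so that the nonzero spectrum of $B^{h,*}$ is of order one --- is false. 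Take $G=2$ and $\alpha_1 = 2\alpha_2$ with $\|\alpha_2\|^2/T \geq c_1$: both assumptions hold, yet the vectors are exactly collinear (inner product equal to one). This is precisely the example in Remark \ref{rem: sample splitting} of the paper. Worse, the near-collinear configuration $\alpha_1 = 2\alpha_2 + \epsilon_{N}w$ with $w\perp\alpha_2$, $\|w\|=\sqrt T$, bounded entries, and $\epsilon_N\to0$ is also admissible; there $\mathcal M$ is two-dimensional but the second nonzero eigenvalue of $B^{h,*}$ is of order $\epsilon_N^2 = o(1)$, which is eventually dominated by $\|E\|_{\op}$. In that regime the sin-theta theorem yields nothing for the weak direction, and your intermediate claim $\|(I-\hat F_h\hat F_h')P_{\mathcal M}\|_{\op} = o_P(1)$ is simply not true: no estimator can recover a direction carrying asymptotically vanishing signal. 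So the proof as written fails on data-generating processes allowed by the theorem.

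The theorem survives because the vectors $\alpha_\gamma$ themselves load only weakly on whatever directions carry weak signal, so losing those directions costs only $o(\sqrt T)$; but establishing this requires a weighted argument rather than uniform subspace recovery. The paper does this by writing $\alpha_\gamma = \sqrt T F p_\gamma$ and $\Lambda = N^{-1}\sum_{i=1}^N p_{g_i}p_{g_i}'$, applying the Davis--Kahan theorem only to those eigenvectors whose eigenvalue gaps exceed a vanishing threshold $\sqrt{\psi_n}$ calibrated to the perturbation size, and bounding the remaining directions through the identity $N^{-1}\sum_{i=1}^N\|(\tilde F - F)'Fp_{g_i}\|^2 = \sum_{\gamma_1,\gamma_2}\lambda_{\gamma_1}\{F_{\gamma_1}'(\tilde F_{\gamma_2} - F_{\gamma_2})\}^2$, where the small weights $\lambda_{\gamma_1}\leq G\sqrt{\psi_n}$ on the weak directions make their total contribution negligible; the resulting average bound is then converted into a per-group bound via Assumption \ref{as: group sizes new} (its Steps 5--7). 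If you replace your subspace-recovery step with this weighted argument (or an equivalent one showing $\|\hat F_{h}\hat F_{h}'\alpha_\gamma - \alpha_\gamma\| = o_P(\sqrt T)$ for each fixed $\gamma$ directly, rather than in operator norm over all of $\mathcal M$), the remainder of your proof goes through.
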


\begin{remark}[On the Role of Sample Splitting in Theorem \ref{thm: classifier consistency new}]\label{rem: sample splitting}
Using sample splitting to construct the vectors $\hat A_1,\dots,\hat A_N$, which are in turn used in the Classification Algorithm to obtain the classifier $\hat g$, is important for our analysis. Specifically, sample splitting allows us to avoid some restrictive assumptions on the geometry of group-specific time effects. Indeed, suppose that we do full-sample estimation, i.e. we set
$
\hat A_{i} = \hat F\hat F'(y_{i} - x_{i}\tilde\beta)
$
for all $i=1,\dots,N$, where $\tilde \beta$ is the full-sample spectral estimator and $\hat F$ is the $T\times G$ matrix consisting of orthonormal eigenvectors corresponding to $G$ largest eigenvalues of the matrix
$$
\hat B = \frac{1}{NT}\sum_{i=1}^N (y_i - x_i'\tilde\beta)(y_i - x_i'\tilde\beta)',
$$
and consider the following example. Let $G=2$ and $\alpha_{1} = 2\alpha_{2}$, where $\|\alpha_2\|\geq c\sqrt T$ for some constant $c>0$. In this example, the assumption of well-separated groups (Assumption \ref{as: well separated groups}) is satisfied but the $T\times T$ matrix $B = N^{-1}\sum_{i=1}^N\alpha_{g_i}\alpha_{g_i}$ has only one non-zero eigenvalue. Therefore, given that $\hat B$ consistently estimate $B$, the matrix $\hat F$ may not have a probability limit. As a result, $T^{-1/2}\hat F'(v_{i1},\dots,v_{iT})'$ may not converge to zero in probability (which is guaranteed in the construction based on sample splitting), and the vectors $\hat A_1,\dots,\hat A_N$ may turn out poor estimators of the vectors $\alpha_{g_1},\dots,\alpha_{g_N}$, leading to inconsistency of the classifier $\hat g$. More generally, with full-sample estimation, we would have to impose in Theorem \ref{thm: classifier consistency new} an extra assumption that the matrix $B = N^{-1}\sum_{i=1}^N \alpha_{g_i}\alpha_{g_i}'$ has $G$ eigenvalues bounded away from zero, which seems difficult to justify. See, however, \cite{LZZ21}, who are able to avoid such conditions without using sample splitting under the isotropic Gaussianity condition mentioned above.\footnote{As a side note, we also observe that \cite{VW04} do not use sample splitting to estimate vectors $\alpha_{g_1},\dots,\alpha_{g_N}$ but some parts of their derivations are difficult to verify. In particular, in Section 5, they use an observation that the projection of a Gaussian random vector on any subspace remains Gaussian but in fact the subspace in their construction is random, in which case the projection may not be Gaussian.}
\qed
\end{remark}

\subsection{Post-Spectral Estimator}
In this subsection, we present two results on our post-spectral estimator $\hat\beta$. First, we show that this estimator is generally consistent. Second, we show that under the assumption of well-separated groups, Assumption \ref{as: well separated groups}, this estimator is $\sqrt{NT}$-consistent and has simple asymptotic distribution.

For all $\nu = (\nu_1,\dots,\nu_N)'\in\{1,\dots,G\}^N$ and $\gamma_1,\gamma_2 = 1,\dots,G$, denote
$$
\mathcal I(\nu,\gamma_1,\gamma_2) = \Big\{i=1,\dots,N\colon \nu_i =\gamma_1\text{ and }g_i = \gamma_2\Big\}
$$
and
$$
\bar x_{\nu,\gamma_1,\gamma_2,t} = \frac{1}{|\mathcal I(\nu,\gamma_1,\gamma_2)|}\sum_{i\in\mathcal I(\nu,\gamma_1,\gamma_2)}x_{it},\quad\text{for all }t=1,\dots,T.
$$
To derive consistency of the post-spectral estimator $\hat\beta$, we will use the following conditions:
\begin{assumption}\label{as: eigenvalues}
For some constant $c_3>0$, the minimal eigenvalue of the matrix
$$
\frac{1}{NT}\sum_{i=1}^N\sum_{t=1}^T(x_{it} - \bar x_{\nu,\nu_i,g_i,t})(x_{it} - \bar x_{\nu,\nu_i,g_i,t})'
$$
is bounded from below by $c_3$ for all $\nu\in \{1,\dots,G\}^N$ with probability $1 - o(1)$.
\end{assumption}

\begin{assumption}\label{as: compact}
(i) The set $\mathcal B$ is compact and (ii) for some constant $C_5>0$, all elements $\{a_{\gamma t}\}_{\gamma,t=1}^{G,T}$ of the set $\mathcal A_{G,T}$ satisfy $|a_{\gamma t}|\leq C_5$ for all $\gamma = 1,\dots,G$ and $t=1,\dots,T$.
\end{assumption}

Assumption \ref{as: eigenvalues} requires that covariates $x_{it}$ have sufficient within-group variation over time and across units. This assumption was used in \cite{BM15} as well. Assumption \ref{as: compact} is a standard compactness condition used in the statistical analysis of non-linear models.

\begin{theorem}[Consistency of Post-Spectral Estimator $\hat\beta$]\label{thm: main consistency}
Under Assumptions \ref{as: spectral v}--\ref{as: bounded rho} and \ref{as: group sizes new}--\ref{as: compact}, we have
$\hat\beta\to_P\beta$.
\end{theorem}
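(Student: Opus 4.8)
The plan is to run the standard M-estimation argument (bound the estimator's objective by a feasible benchmark, then invert a quadratic lower bound), but with one essential twist: because the classification $\hat g$ is held \emph{fixed} rather than optimized jointly with $(\beta,\alpha)$ as in \cite{BM15}, I cannot compare the minimized criterion to the oracle value $\frac{1}{NT}\sum v_{it}^2$ obtained under the true groups $g$. The price of this is that I will need a separate—and genuinely new—input: that the within-estimated-group dispersion of the \emph{true} group effects vanishes, and this must be established \emph{without} the well-separation Assumption \ref{as: well separated groups}.

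First I would profile out the time effects. For an assignment $\nu\in\{1,\dots,G\}^N$ let $\tilde w_{it}(\nu)=w_{it}-\bar w_{\nu_i t}(\nu)$ denote within-$\nu$-group, time-$t$ demeaning of a generic array $w$, and set $\tilde\alpha_{it}(\nu)=\alpha_{g_i t}-\bar\alpha_{\nu_i t}(\nu)$. Since $\hat a$ is the constrained minimizer in $a$ given $\hat\beta$ while the \emph{unconstrained} minimizer is the group-time mean, and since the feasible point $a^*_{\gamma t}=\bar\alpha_{\gamma t}(\hat g)$ lies in $\mathcal A_{G,T}$ (being an average of true effects, each bounded by $C_4\le C_5$), I get
\[
\frac{1}{NT}\sum_{i,t}\big(\tilde x_{it}(\hat g)'(\beta-\hat\beta)+\tilde\alpha_{it}(\hat g)+\tilde v_{it}(\hat g)\big)^2\ \le\ \frac{1}{NT}\sum_{i,t}\big(\tilde\alpha_{it}(\hat g)+v_{it}\big)^2 .
\]
Expanding, cancelling the common $\frac{1}{NT}\sum\tilde\alpha_{it}(\hat g)^2$ and $\frac{1}{NT}\sum v_{it}^2$, and using the identity $\frac{1}{NT}\sum\tilde v_{it}(\hat g)^2=\frac{1}{NT}\sum v_{it}^2-W$ with $W=\frac{1}{NT}\sum_{t,\gamma}N_\gamma\bar v_{\gamma t}(\hat g)^2\ge0$, the master inequality becomes
\[
\frac{1}{NT}\sum_{i,t}\big(\tilde x_{it}(\hat g)'(\beta-\hat\beta)\big)^2\ \le\ W+o_P(1)+\|\beta-\hat\beta\|\cdot o_P(1).
\]
By Assumption \ref{as: eigenvalues} the left side is at least $c_3\|\beta-\hat\beta\|^2$ with probability $1-o(1)$, and compactness of $\mathcal B$ (Assumption \ref{as: compact}) turns $\|\beta-\hat\beta\|\cdot o_P(1)$ into $o_P(1)$; hence $c_3\|\beta-\hat\beta\|^2\le o_P(1)$ and $\hat\beta\to_P\beta$ once the displayed $o_P(1)$ terms are justified.

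All the remaining work is a handful of uniform-in-$\nu$ bounds evaluated at the random $\nu=\hat g$, for which I would prove the supremum over all $G^N$ assignments. These are: (i) $W=o_P(1)$ and $\sup_\nu\|\frac{1}{NT}\sum_{i,t}\tilde x_{it}(\nu)v_{it}\|=o_P(1)$; and (iii) $\frac{1}{NT}\sum_{i,t}\tilde\alpha_{it}(\hat g)^2=o_P(1)$. Given (iii), every cross term in the expansion is handled by Cauchy--Schwarz using the crude bounds $\frac{1}{NT}\sum\|\tilde x_{it}(\nu)\|^2\le\frac{1}{NT}\sum\|x_{it}\|^2=O_P(1)$ and $\frac{1}{NT}\sum v_{it}^2=O_P(1)$; note also that $\frac{1}{NT}\sum\tilde x_{it}(\nu)\tilde v_{it}(\nu)=\frac{1}{NT}\sum\tilde x_{it}(\nu)v_{it}$ by idempotence of demeaning, so the only noise cross term falls under (i). Item (i) is a concentration-plus-union-bound exercise: for fixed $\nu$ the relevant objects are sub-Gaussian bilinear and quadratic forms by Assumption \ref{as: spectral v}, whose $\exp(-cN)$-type tails, combined with $\log N=o(T)$ and $\log T=o(N)$ (Assumption \ref{as: N and T}), dominate the $N\log G$ entropy of the assignment set; I expect this to follow from the technical lemmas of Appendix \ref{sec: technical lemmas}.

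The heart of the argument, and the step I expect to be the main obstacle, is (iii). I would start from the decomposition $\frac{1}{NT}\sum\tilde\alpha_{it}(\hat g)^2\le \frac{2}{N}\sum_i\frac1T\|\alpha_{g_i}-\hat A_i\|^2+\frac{2}{N}\sum_i\frac1T\|\hat A_i-\bar{\hat A}_{\hat g_i}\|^2$, obtained by replacing the optimal within-group means by those of $\hat A$. Writing $\hat A_i=\hat F_{h_i}\hat F_{h_i}'(\alpha_{g_i}+v_i+x_i(\beta-\tilde\beta^{h_i}))$, the first (estimation-error) term splits into a noise piece controlled by sample splitting (so $\hat F_{h_i}$ is independent of $v_i$ and $\frac1N\sum_i\frac1T\|\hat F_{h_i}'v_i\|^2=O_P(1/T)$), a $\tilde\beta$ piece controlled by Theorem \ref{thm: consistency spectral estimator}, and a bias piece $\frac1T\Tr\big((I-\hat F_h\hat F_h')B_0(I-\hat F_h\hat F_h')\big)$ with $B_0=N^{-1}\sum_i\alpha_{g_i}\alpha_{g_i}'$. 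The last one is $o_P(1)$ by a clean Weyl argument that needs \emph{no} eigenvalue gap: since $\mathrm{rank}(B_0)\le G$ one has $\sum_{k=1}^G\lambda_k(B_0/T)=\Tr(B_0/T)$, so the top-$G$ eigenspace of $\hat B^h\approx B_0/T$ captures essentially all of $B_0$ and the reconstruction error is at most $G\,\|\hat B^h-B_0/T\|_{\mathrm{op}}=o_P(1)$—this is precisely where separation is \emph{avoided}. For the second (dispersion) term, the classification algorithm places a unit only within distance $\hat\lambda$ of a running centroid, so a per-group diameter bound of order $\hat\lambda$ gives $\frac1T\|\hat A_i-\bar{\hat A}_{\hat g_i}\|^2=O_P(\hat\lambda^2/T)$, and I would show $\hat\lambda=o_P(\sqrt T)$: the within-\emph{true}-group dispersion of $\hat A$ is $2\max_i\|\hat F_{h_i}'v_i\|+o_P(1)=O_P(\sqrt{\log N})=o_P(\sqrt T)$ under Assumption \ref{as: N and T}, so a threshold of that order already produces at most $G$ groups, forcing $\hat\lambda$ to be that small. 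The two delicate points I would have to nail down are exactly this diameter bound for the sequential centroid-comparison clustering—ruling out that chaining across nearby true groups inflates a cluster beyond $O(\hat\lambda)$—and the uniform-in-$\nu$ concentration of item (i); the rest is bookkeeping.
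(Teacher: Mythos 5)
Your skeleton matches the paper's proof quite closely: the paper (Steps 2, 3, 5 of its proof) also compares the minimized criterion against the feasible point $(\beta,\check\alpha,\hat g)$, where $\check\alpha$ is the within-estimated-group average of the true effects, invokes Assumption \ref{as: eigenvalues} for the quadratic lower bound, handles the noise cross terms uniformly over assignments (though by a Cauchy--Schwarz decoupling rather than your union bound over $G^N$ assignments), and reduces everything to the single genuinely new input, your item (iii): $\frac{1}{NT}\sum_{i,t}(\alpha_{g_it}-\check\alpha_{\hat g_i t})^2=o_P(1)$ without Assumption \ref{as: well separated groups}. Your rank/Weyl trace argument for the reconstruction-bias piece is a legitimate and arguably cleaner substitute for the paper's Davis--Kahan step (the average bound upgrades to a per-group bound via Assumption \ref{as: group sizes new}). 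But the proof of item (iii) is where your proposal has a genuine gap, and your proposed shortcut there does not work. You assert that because the within-true-group dispersion of the $\hat A_i$ is $o_P(\sqrt T)$, ``a threshold of that order already produces at most $G$ groups, forcing $\hat\lambda$ to be that small.'' The sequential centroid-comparison algorithm does not have this property for the reason you yourself flag: when two true groups sit at a distance comparable to the threshold, early units from both can be merged into one algorithm-group, the running centroid then drifts between the true centers, and the subsequent behavior of the algorithm (which units join, which spawn new groups, how large clusters' diameters become) is no longer controlled by the within-true-group dispersion alone. The same drift phenomenon defeats your per-group diameter bound: a unit is within $\hat\lambda$ of the centroid only at the moment it joins, and a cluster's diameter can in principle inflate to order $\hat\lambda\log N$ by chaining (points added one at a time, each at distance exactly $\hat\lambda$ from the running centroid on the same side). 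So the two ``delicate points'' you defer are not bookkeeping; they are the entire content of the step, and the justification you sketch for the first of them is false as stated.

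The paper's Step 4 supplies exactly the missing idea: an iterative ball-inflation construction ($R_1=\lambda_N$, $R_\gamma = 9R_{\gamma-1}+6\lambda_N$, at most $G-1$ iterations) that locates a scale $R_0\leq \bar C_G\lambda_N$ at which the $G$ true centers split into equivalence classes whose internal diameter ($\leq R_0$) is small relative to the gap between classes ($>8R_0+6\lambda_N$). Only with this gap-to-diameter ratio in hand can one run the induction along the order in which the algorithm adds units and rule out both pathologies at once: no cluster ever contains units from two different classes (so chaining across nearby true groups is impossible), and units from the same class are never split. This yields $m(3R_0+2\lambda_N)\leq G$, hence $\hat\lambda\leq C_G\lambda_N=o(\sqrt T)$ --- note the constant $C_G$ is geometric in $G$, not of the order of the dispersion itself, which is all one can get --- and, crucially, the companion statement that any two units classified together at $\lambda=\hat\lambda$ have true effects within $\tilde C_G\lambda_N$ of each other. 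That companion statement is precisely your item (iii). Without this multi-scale equivalence-class argument (or an equivalent device), your proof does not go through; with it, the rest of your outline is sound and essentially coincides with the paper's.
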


Finally, we prove $\sqrt{NT}$-consistency and asymptotic normality of the post-spectral estimator $\hat\beta$. To do so,
denote
$$
\bar x^{\gamma, t} = \frac{1}{N_{\gamma}}\sum_{i\colon g_i = \gamma}x_{it},\quad\text{for all }\gamma = 1,\dots,G, \ t = 1,\dots, T
$$
and
$$
\check x_{it} = x_{it} - \bar x^{g_i, t},\quad\text{for all }i=1,\dots,N, \ t = 1,\dots,T.
$$
We will use the following condition:
\begin{assumption}\label{as: distribution}
We have (i) $(NT)^{-1}\sum_{i=1}^N\sum_{t=1}^T\check x_{it}\check x_{it}'\to_P\check\Sigma$ for some positive-definite $d\times d$ matrix $\check\Sigma$ and (ii) $(NT)^{-1/2}\sum_{i=1}^N\sum_{t=1}^T v_{it}\check x_{it}\to_D N(0,\Omega)$ for some symmetric $d\times d$ matrix $\Omega$.
\end{assumption}
This assumption is similar to the corresponding assumptions in \cite{BM15}.
\begin{theorem}[Asymptotic Distribution of Post-Spectral Estimator $\hat\beta$]\label{thm: asymptotic distribution}
Under Assumptions \ref{as: spectral v}--\ref{as: distribution}, we have
$$
\sqrt{NT}(\hat\beta - \beta)\to N(0_d,\check\Sigma^{-1}\Omega\check\Sigma^{-1}).
$$
\end{theorem}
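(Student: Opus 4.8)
The plan is to show that the post-spectral estimator $\hat\beta$ coincides with the oracle estimator $\hat\beta^0$ of \eqref{eq: correct specification} with probability approaching one, and then to obtain the limiting distribution of $\hat\beta^0$ by standard within-group (fixed-effects) algebra. The decisive input is Theorem \ref{thm: classifier consistency new}, which gives \emph{exact} recovery of the group partition with probability tending to one under Assumption \ref{as: well separated groups}.

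First I would establish the reduction $\mathrm P(\hat\beta=\hat\beta^0)\to 1$. Let $\mathcal E_N$ be the event that $\hat g_i=\hat g_j$ if and only if $g_i=g_j$ for all $i,j=1,\dots,N$; by Theorem \ref{thm: classifier consistency new}, $\mathrm P(\mathcal E_N)\to 1$. On $\mathcal E_N$ the partition of units induced by $\hat g$ coincides with the one induced by $g$, and by Assumption \ref{as: group sizes new} it has exactly $G$ cells, so $m(\hat\lambda)=G$. The objectives in \eqref{eq: pooled ols} and \eqref{eq: correct specification} depend on the membership vector only through the induced partition: relabeling the groups by a permutation of $\{1,\dots,G\}$ leaves the sum of squares unchanged and maps the (permutation-invariant) parameter space $\mathcal A_{G,T}$ to itself. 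Hence on $\mathcal E_N$ the two problems have the same value and the same minimizer in $b$, so $\hat\beta=\hat\beta^0$. It therefore suffices to derive the limit distribution of $\hat\beta^0$.

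For the oracle estimator I would concentrate out the group effects. Given $b$, the minimizing $a_{\gamma t}$ is the within-group average $N_\gamma^{-1}\sum_{i\colon g_i=\gamma}(y_{it}-x_{it}'b)$; the box constraint $|a_{\gamma t}|\le C_5$ of Assumption \ref{as: compact} is slack with probability approaching one because this average concentrates around the bounded true effect $\alpha_{\gamma t}$ by Assumption \ref{as: bounded rho}(ii) and the sub-Gaussian tails of Assumption \ref{as: spectral v}. Thus $\hat\beta^0$ is the OLS coefficient in the within-group transformed regression of $y_{it}-\bar y^{g_i,t}$ on $\check x_{it}$. Writing $\check v_{it}=v_{it}-\bar v^{g_i,t}$ with $\bar v^{\gamma,t}=N_\gamma^{-1}\sum_{i\colon g_i=\gamma}v_{it}$, the group effect $\alpha_{g_it}$ is annihilated, so $y_{it}-\bar y^{g_i,t}=\check x_{it}'\beta+\check v_{it}$ and
\begin{equation*}
\hat\beta^0-\beta=\Big(\sum_{i,t}\check x_{it}\check x_{it}'\Big)^{-1}\sum_{i,t}\check x_{it}\check v_{it}.
\end{equation*}
Since $\sum_{i\colon g_i=\gamma}\check x_{it}=0$ for every $\gamma$ and $t$, the centering in $\check v_{it}$ cancels and $\sum_{i,t}\check x_{it}\check v_{it}=\sum_{i,t}\check x_{it}v_{it}$, giving
\begin{equation*}
\sqrt{NT}(\hat\beta^0-\beta)=\Big(\frac{1}{NT}\sum_{i,t}\check x_{it}\check x_{it}'\Big)^{-1}\frac{1}{\sqrt{NT}}\sum_{i,t}\check x_{it}v_{it}.
\end{equation*}
Assumption \ref{as: distribution}(i) gives $(NT)^{-1}\sum_{i,t}\check x_{it}\check x_{it}'\to_P\check\Sigma$ with $\check\Sigma$ positive definite, and Assumption \ref{as: distribution}(ii) gives $(NT)^{-1/2}\sum_{i,t}v_{it}\check x_{it}\to_D N(0,\Omega)$, so by the continuous mapping theorem and Slutsky's lemma the right-hand side converges to $\check\Sigma^{-1}N(0,\Omega)=N(0,\check\Sigma^{-1}\Omega\check\Sigma^{-1})$. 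Combined with $\mathrm P(\hat\beta=\hat\beta^0)\to 1$, this yields the claim.

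The main obstacle is the reduction step rather than the distributional calculation, which is routine. What makes it work is that Theorem \ref{thm: classifier consistency new} delivers exact partition recovery with probability tending to one, not merely a vanishing misclassification rate: even a small positive fraction of misclassified units would contaminate the within-group means $\bar x^{g_i,t}$ and $\bar v^{g_i,t}$, break the exact cancellation $\sum_{i\colon g_i=\gamma}\check x_{it}=0$, and could introduce a bias surviving the $\sqrt{NT}$ scaling. The strong form of classifier consistency available under well-separated groups is precisely what lets us treat $\hat\beta$ as though the group memberships were known.
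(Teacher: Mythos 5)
Your proof is correct and follows essentially the same route as the paper's: reduce to the oracle estimator $\hat\beta^0$ via the exact-recovery result of Theorem \ref{thm: classifier consistency new}, then express $\hat\beta^0$ through within-group (Frisch--Waugh--Lovell) algebra and conclude by Slutsky's lemma and Assumption \ref{as: distribution}. The only differences are that you spell out details the paper leaves implicit (the relabeling/permutation invariance in the reduction step and the slackness of the box constraint when concentrating out the group effects), which only strengthens the argument.
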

\begin{remark}[Variance-Covariance Matrix Estimation]
Theorem \ref{thm: asymptotic distribution} leads to standard inference on the vector of parameters $\beta$ as long as we can consistently estimate the asymptotic variance-covariance matrix $\check\Sigma^{-1}\Omega\check\Sigma^{-1}$. In turn, the latter is simple. Indeed, Assumption \ref{as: distribution}(i) implies that we can consistently estimate $\check\Sigma$ by $(NT)^{-1}\sum_{i=1}^N\sum_{t=1}^T\check x_{it}\check x_{it}'$. Also, to estimate $\Omega$, we can use a formula from \cite{A87}: $(NT)^{-1}\sum_{i=1}^N\sum_{t_1 = 1}^T\sum_{t_2=1}^T\hat v_{it_1}\hat v_{it_2} \check x_{it_1}\check x_{it_2}'$, where $\hat v_{it} = y_{it} -x_{it}'\hat\beta - \hat\alpha_{\hat g_i t}$ for all $i=1,\dots,N$ and $t=1,\dots,T$. Conditions for consistency of this formula are proven in \cite{H07}. For more detailed discussion of variance-covariance matrix estimation, please refer to \cite{BM15}.
\qed
\end{remark}
\begin{remark}[On Assumptions of Theorems \ref{thm: consistency spectral estimator}-\ref{thm: asymptotic distribution}]\label{rem: noncentered noise}
Recall that we stated in the Introduction that $\mathbb E[v_{it}] = 0$ and $\mathbb E[z_{it}] = \mathbb E[v_{it}x_{it}]=0_d$. These identities are quite intuitive and help motivate Assumptions \ref{as: spectral v} and \ref{as: spectral vz}(i). However, we emphasize that these identities are actually not used in the proofs of Theorems \ref{thm: consistency spectral estimator}-\ref{thm: asymptotic distribution}: the results of the theorems hinge only on assumptions that are explicitly mentioned in the statements of the theorems, as well as \eqref{eq: x process general}. This observation will be helpful below, when we discuss dynamic grouped panel data models.
\qed
\end{remark}

\section{Extensions}\label{sec: extensions}
In this section, we consider three extensions of the model we studied above. The first extension is concerned with a dynamic version of the model, i.e. a model that allows for lagged values of $y_{it}$ on the right-hand side of equation \eqref{eq: model}. The second extension is concerned with a high-dimensional version of the model, i.e. a model with high-dimensional $\beta$. The third extension is concerned with an interactive fixed-effect model.

\subsection{Dynamic Model}
Consider a dynamic grouped panel data model
\begin{equation}\label{eq: dynamic panel}
y_{it} = \theta y_{it-1} + x_{it}'\beta + \alpha_{g_it}+v_{it},\quad\text{for all }i=1,\dots,N, \ t=1,\dots,T,
\end{equation}
where $x_{it}$ is a $d\times 1$ vector of pre-determined covariates. As before, assume also that \eqref{eq: x process general} is satisfied. This model is different from the model we studied above as we now allow for the lagged dependent variable on the right-hand side of \eqref{eq: dynamic panel}. In this section, we explain what changes one has to carry out in the spectral and post-spectral estimators to estimate parameters $\theta$ and $\beta$ of this model. We will assume throughout that $|\theta|<1$ since otherwise the extension seems difficult. The results below can be easily extended to allow for additional lagged values of $y_{it}$ on the right-hand side of \eqref{eq: dynamic panel}, i.e. $y_{it-2}$, $y_{it-3}$, etc.

To motivate our approach, note that iterating \eqref{eq: dynamic panel} and substituting \eqref{eq: x process general} yields
\begin{align*}
y_{it-1} 
& = \theta^{t-1} y_{i0} + \sum_{r=0}^{t-2} \theta^r(x_{it-r-1}'\beta + \alpha_{g_it-r-1} + v_{it-r-1}) \\
& = \sum_{m=1}^M \rho_{im}^y\alpha_{g_it}^{m,y} + z_{it}^y,\quad\text{for all }i=1,\dots,N, \ t=1,\dots,T,
\end{align*}
where we denoted
$$
\rho_{im}^y = 1\{m=1\} + \rho_{im}'\beta,\quad\text{for all }i=1,\dots,N , m=1,\dots,M,
$$
$$
\alpha_{\gamma t}^{m,y} = \sum_{r=0}^{t-2}\theta^r \alpha_{\gamma t-r-1}^m,\quad\text{for all }t=1,\dots,T, \ \gamma = 1,\dots,G, \ m=1,\dots,M,
$$
$$
z_{it}^y = \theta^{t-1}y_{i0} + \sum_{r=0}^{t-2}\theta^r(z_{it-r-1}'\beta + v_{it-r-1}),\quad\text{for all }i=1,\dots,N, \ t=1,\dots,T,
$$
where the sum $\sum_{r=0}^{-1}$ is treated as zero. Hence, we can write
$$
y_{it} = \mathring{x}_{it}'\mathring\beta + \alpha_{g_it}+v_{it},
$$
where we denoted $\mathring x_{it} = (y_{it-1},x_{it}')'$ and $\mathring\beta = (\theta,\beta')'$ and the vector of covariates $\mathring x_{it}$ satisfies
$$
\mathring x_{it} = \sum_{m=1}^{2M}\mathring\rho_{im}\mathring\alpha_{g_it}^m + \mathring z_{it},\quad\text{for all }i=1,\dots,N, \ t=1,\dots,T,
$$
where
$$
\mathring\rho_{im} = (0,\rho_{im}')'1\{m\leq M\} + (\rho_{im-M}^y,0_d')'1\{m>M\},
$$
$$
\mathring \alpha_{g_it}^m = \alpha_{g_it}^m1\{m\leq M\} + \alpha_{g_it}^{m-M,y}1\{m>M\},
\text{ and }\mathring z_{it} = (z_{it}^y,z_{it}')'.
$$
Thus, the dynamic model considered here reduces to the model studied in Sections \ref{sec: estimator} and \ref{sec: asymptotic theory} with $x_{it}$, $\beta$, $\rho_{im}$, $\alpha_{\gamma t}^m$, $z_{it}$, and $M$ replaced by $\mathring x_{it}$, $\mathring \beta$, $\mathring\rho_{im}$, $\mathring\alpha_{\gamma t}^m$, $\mathring z_{it}$, and $2M$, respectively. Therefore, the parameters of the dynamic model, $\mathring\beta = (\theta,\beta')'$, can be estimated by the spectral and post-spectral estimators as in Section \ref{sec: estimator} with $M$, $x_{it}$, $d$, and $\mathcal B$ replaced by $2M$, $\mathring x_{it}$, $d+1$, and $\mathring{\mathcal B}$, respectively, where $\mathring{\mathcal B}$ is a parameter space for $\mathring\beta$, e.g. $\mathring{\mathcal B} = [-1+\delta,1-\delta]\times\mathcal B$ for some $\delta\in(0,1)$.


To state the formal results, let Assumptions 4.1--4.12 be the same as Assumptions 3.1--3.12 with $z_{it}$, $x_{it}$, $d$, $\Sigma$, and $\mathcal B$ replaced by $\mathring z_{it}$, $\mathring x_{it}$, $d+1$, $\mathring\Sigma$, and $\mathring{\mathcal B}$, respectively.
\begin{theorem}[Dynamic Model]\label{thm: dynamic model}
In the setting of this section, the statements of Theorems \ref{thm: consistency spectral estimator}-\ref{thm: asymptotic distribution} continue to hold, with $\beta$ replaced by $\mathring\beta$, if Assumptions \ref{as: spectral v}--\ref{as: distribution} are replaced by Assumptions 4.1--4.12.
\end{theorem}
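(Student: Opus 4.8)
The plan is to deduce Theorem~\ref{thm: dynamic model} as a direct corollary of Theorems~\ref{thm: consistency spectral estimator}--\ref{thm: asymptotic distribution} by means of the reduction already displayed above, treating the dynamic model as a relabeled instance of the static one. The only substantive content is to confirm that the reduction is exact and, more delicately, that the proofs of the static theorems tolerate the composite noise term that the reduction produces. Since Assumptions 4.1--4.12 are \emph{defined} to be Assumptions \ref{as: spectral v}--\ref{as: distribution} with $(z_{it},x_{it},d,\Sigma,\mathcal B)$ replaced by $(\mathring z_{it},\mathring x_{it},d+1,\mathring\Sigma,\mathring{\mathcal B})$, once the reduction is in place the hypotheses of the static theorems hold verbatim for the ring quantities, and the stated conclusions transfer to $\mathring\beta$ with no additional estimates.

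First I would record the algebraic reduction rigorously. Iterating \eqref{eq: dynamic panel} and substituting \eqref{eq: x process general} writes $y_{it-1}$ as a factor-analytic object with $M$ factors, which yields the representation $y_{it}=\mathring x_{it}'\mathring\beta+\alpha_{g_it}+v_{it}$ with $\mathring x_{it}=(y_{it-1},x_{it}')'$ and $\mathring x_{it}=\sum_{m=1}^{2M}\mathring\rho_{im}\mathring\alpha^m_{g_it}+\mathring z_{it}$. The key point is that this is literally \eqref{eq: x process general} with $M$ replaced by $2M$ and $d$ by $d+1$. I would check that $|\theta|<1$ guarantees convergence of the geometric sums defining $\mathring\alpha^{m}_{\gamma t}$ and $\mathring z_{it}$, so these objects are well defined, and that the associated mean matrix has rank at most $2M$. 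Accordingly the spectral estimator is run with $2G(2M)+2=4GM+2$ eigenvalues; even if the transcribed factors turn out linearly dependent, the remark on tuning parameters for the spectral estimator shows that overcounting the product of the number of groups and factors cannot break consistency, so this count is safe.

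The hard part, and the single place where care is genuinely required, is that the composite idiosyncratic term $\mathring z_{it}=(z^y_{it},z_{it}')'$ is not the benign exogenous component of the static model: through $z^y_{it}=\theta^{t-1}y_{i0}+\sum_{r=0}^{t-2}\theta^r(z_{it-r-1}'\beta+v_{it-r-1})$ it carries current and lagged structural noise $v$ together with the initial condition $y_{i0}$. Consequently the intuitive identities that motivated the static assumptions, namely $\mathbb E[\mathring z_{it}]=0_{d+1}$ and $\mathbb E[v_{it}\mathring z_{it}]=0_{d+1}$, fail in general, and $\mathring z_{it}$ need not even be independent of the group structure because $y_{i0}$ depends on the history of $\alpha_{g_i\cdot}$. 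The resolution is Remark~\ref{rem: noncentered noise}, which certifies precisely that the proofs of Theorems~\ref{thm: consistency spectral estimator}--\ref{thm: asymptotic distribution} never invoke these zero-mean or orthogonality identities and rely only on the explicitly stated assumptions together with the decomposition \eqref{eq: x process general}. I would therefore cite Remark~\ref{rem: noncentered noise} to conclude that substituting the dynamic composite noise for the static exogenous component disturbs no step of the argument.

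Finally I would note, for interpretive completeness rather than as part of the formal deduction, that the dynamic analog of Assumption~\ref{as: spectral vz}(i)---the quantitative law of large numbers for $(NT)^{-1}\sum_{i,t}v_{it}\mathring z_{it}$---is exactly where the predeterminedness of $x_{it}$ and a martingale-difference structure for $v_{it}$ enter: under these conditions the cross terms $v_{it}v_{it-r-1}$ with $r\ge 0$, as well as $v_{it}y_{i0}$ and $v_{it}z_{it-r-1}$, are mean zero, so the assumed high-level bound is plausible despite the failure of unconditional orthogonality. With the reduction verified and this one obstacle settled through Remark~\ref{rem: noncentered noise}, Theorem~\ref{thm: dynamic model} follows from Theorems~\ref{thm: consistency spectral estimator}--\ref{thm: asymptotic distribution} applied to the ring quantities, with $\beta$ replaced by $\mathring\beta$ throughout.
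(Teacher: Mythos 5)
Your overall strategy --- reduce the dynamic model to the static one via the ring quantities and invoke Theorems \ref{thm: consistency spectral estimator}--\ref{thm: asymptotic distribution} --- is exactly the paper's strategy, and your two side observations (the non-centered, $v$-dependent composite noise $\mathring z_{it}$ is harmless by Remark \ref{rem: noncentered noise}; overcounting eigenvalues cannot break consistency) are correct and match the paper's own discussion. However, there is a genuine gap in your central claim that ``the hypotheses of the static theorems hold verbatim for the ring quantities \ldots with no additional estimates.'' They do not. Assumptions 4.1--4.12 are obtained from Assumptions 3.1--3.12 by replacing only $z_{it}$, $x_{it}$, $d$, $\Sigma$, and $\mathcal B$; the loadings $\rho_{im}$ and the factors $\alpha^m_{\gamma t}$ are \emph{not} on the replacement list. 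Hence the dynamic analogs of Assumption \ref{as: bounded rho} (Assumptions 4.5--4.6) still bound only the original $\rho_{im}$ and $\alpha^m_{\gamma t}$, whereas applying Theorems \ref{thm: classifier consistency new}--\ref{thm: asymptotic distribution} to the ring model requires uniform bounds on the $2M$ transformed objects, $\|\mathring\rho_{im}\|\leq C$ and $|\mathring\alpha^m_{\gamma t}|\leq C$. Deriving these is the entire content of the paper's proof: one checks
$$
\max_{1\leq m\leq 2M}\|\mathring\rho_{im}\|\leq 1+C_3(1+\|\beta\|),\qquad
\max_{1\leq m\leq 2M}|\mathring\alpha^m_{\gamma t}|\leq C_4\left(1+\frac{1}{1-\theta}\right),
$$
the first using $\rho^y_{im}=1\{m=1\}+\rho_{im}'\beta$ and finiteness of $\|\beta\|$, the second using $|\theta|<1$ to bound the geometric sum uniformly in $t$.

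Your proposal gestures at half of this when you ``check that $|\theta|<1$ guarantees convergence of the geometric sums defining $\mathring\alpha^m_{\gamma t}$ and $\mathring z_{it}$,'' but you frame it as well-definedness, which is vacuous here (the sums are finite for each $t$), rather than as the uniform-in-$(\gamma,t,m)$ boundedness that Assumption \ref{as: bounded rho} demands as $T\to\infty$; and you never address the loadings $\mathring\rho_{im}$ at all. So while the missing estimate is elementary, your proof as written omits the one step that actually has to be proved and explicitly asserts that it is unnecessary; to repair it, replace the ``verbatim'' claim by the displayed bounds above and then conclude exactly as you do.
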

\begin{remark}[Relation between Assumptions 3.1--3.12 and Assumptions 4.1--4.12]
Assumptions 4.1--4.12 are stronger than Assumptions 3.1--3.12 in the sense that they require Assumptions 3.1--3.12 to be supplemented by some extra conditions. We provide a detailed analysis of these conditions in Appendix \ref{sec: additional details on dynamic model} but we note here that one of these extra conditions is that the noise variables $v_{it}$ satisfy
\begin{equation}\label{eq: model with predetermined v's}
\mathbb E[v_{it}|y_{it-1},\dots,y_{i0},x_{it},\dots,x_{i1}] = 0,\quad\text{for all }i=1,\dots,N, \ t=1,\dots,T,
\end{equation}
which is rather standard in the literature and allows for predetermined covariates $x_{it}$. In addition, we note that these conditions do not require the random variables $y_{i0}$ to have mean zero. The latter means that the random vectors $\mathring z_{it}$ may not be centered but this does not contradict the results of Theorem \ref{thm: dynamic model} per our discussion in Remark \ref{rem: noncentered noise}\qed
\end{remark}


\subsection{High-Dimensional Model}
Consider a high-dimensional grouped panel data model
$$
y_{it} = x_{it}'\beta + \alpha_{g_it} +v_{it},\quad\text{for all }i=1,\dots,N, \ t=1,\dots,T,
$$
where $x_{it}$ is a $d\times1$ vector of covariates and $d$ can be large, potentially much larger than $NT$, but the vector of coefficients $\beta$ is sparse, i.e. $s = \|\beta\|_0  = \sum_{k=1}^d 1\{\beta_k\neq 0\} $ is relatively small (in the sense to be made precise later). As before, assume also that \eqref{eq: x process general} is satisfied. Estimating this model requires introducing penalized versions of the spectral and post-spectral estimators. For brevity, however, we focus here on the penalized spectral estimator only, as deriving results for the penalized post-spectral estimator requires taking care of a lot of technicalities but does not seem to bring any new insight.\footnote{Under consistent classification \eqref{eq: classifier consistency statement}, the $\ell^1$-penalized post-spectral estimator will be similar to the usual Lasso estimator, with comparable properties.}

To define the penalized spectral estimator, let $\hat S$ and $\hat\Sigma$ be the same $d\times1$ vector and $d\times d$ matrix as those appearing in Section \ref{sub: preliminary estimation definition}. Note that calculating these quantities requires only $O(d^2)$ operations, and thus is computationally rather simple. We then define the penalized spectral estimator as
$$
\hat\beta_{\lambda} = \arg\min_{b\in\mathbb R^d}\Big\{ b'\hat\Sigma b + \hat S'b +\lambda\|b\|_1 \Big\},
$$
where $\lambda>0$ is a penalty parameter and $\|b\|_1 = \sum_{k=1}^d |b_k|$ denotes the $\ell^1$-norm of the vector $b=(b_1,\dots,b_d)'$. The optimization problem here is convex and can be carried out using standard software.

To analyze this estimator, we are going to rely on the triangular array asymptotics, where the model, as well as the dimension $d$ of the vector of covariates, are allowed to depend on $N$ and $T$ but for brevity of notation, we keep this dependence implicit. Also, we now have to modify Assumption \ref{as: spectral vz}. Indeed, Assumption \ref{as: spectral vz}(i) is impossible to satisfy if $d$ is growing together with $N$ and $T$ because of the $\ell^2$-norm appearing in the assumption. Assumption \ref{as: spectral vz}(ii) also has to be modified as the concept $O_P(\cdot)$ is not well-defined when applied to matrices of growing dimensions. However, since the required modification are only used in this subsection, we spell them directly in the statement of the theorem. In addition, to state the formal result, for any matrix $A$, we will use $\|A\|_{\infty}$ to denote its $\ell^{\infty}$-norm, i.e. the maximum of absolute values of components of $A$.

\begin{theorem}[High-Dimensional Model]\label{thm: hd model}
In the setting of this section, suppose that Assumption \ref{as: spectral v} is satisfied. In addition, suppose that 
\begin{equation}\label{eq: assumption 3.3 hd case}
\max_{1\leq k\leq d}\left|\frac{1}{NT}\sum_{i=1}^N\sum_{t=1}^T v_{it}z_{itk}\right|=O_P\left(\sqrt{\frac{\log d}{NT}}\right)
\end{equation}
and 
\begin{equation}\label{eq: assumption 3.4 hd case}
\max_{1\leq k,l\leq d}\left|\frac{1}{NT}\sum_{i=1}^N\sum_{t=1}^Tz_{itk}z_{itl} - \frac{\Sigma_{kl}}{2}\right|=O_P\left(\sqrt{\frac{\log d}{NT}}\right)
\end{equation}
for some positive definite $d\times d$ matrix $\Sigma$. Also, suppose that for some constant $C_z>0$, we have $\|\sum_{t=1}^Tu_tz_{it}'\beta\|_{\psi_2}\leq C_z$ for all $i=1,\dots,N$ and $u=(u_1,\dots,u_T)'\in\mathcal S^T$. Moreover, suppose that $\log d = o(N)$ and $\|\beta\|_1\leq C_{\beta}$ for some constant $C_{\beta}>0$. Finally, let $c_{\Sigma}>0$ be the minimal eigenvalue of the matrix $\Sigma$, $c_{\lambda}>1$ be some constant, and $S = -2\Sigma\beta$ be a $d\times 1$ vector. Then
\begin{equation}\label{eq: main convergence in hd model}
\|\hat S - S\|_{\infty}\vee\|\hat\Sigma - \Sigma\|_{\infty} = O_P\left(\frac{1}{T\wedge N} + \sqrt{\frac{\log d}{NT}}\right).
\end{equation}
Moreover, on the intersection of events
\begin{equation}\label{eq: lambda dominance}
\lambda \geq c_{\lambda}\Big( \|\hat S-S\|_{\infty} + 2C_{\beta}\|\hat\Sigma - \Sigma\|_{\infty} \Big)
\end{equation}
and
\begin{equation}\label{eq: second event dominance}
s\|\hat\Sigma - \Sigma\|_{\infty} \leq \frac{c_{\Sigma}}{2(1+\bar c_{\lambda})^2},
\end{equation}
we have
\begin{equation}\label{eq: implication l1 norm hd case}
\|\hat\beta_{\lambda} - \beta\|_1 \leq \frac{2(1+c_{\lambda})(1+\bar c_{\lambda})s\lambda}{c_{\Sigma}c_{\lambda}}\quad\text{and}\quad \|\hat\beta_{\lambda}-\beta\| \leq \frac{2(1+c_{\lambda})\sqrt s\lambda}{c_{\Sigma}c_{\lambda}}
\end{equation}
where $\bar c_{\lambda} = (c_{\lambda} + 1)/(c_{\lambda}-1)$.
\end{theorem}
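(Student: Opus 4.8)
The plan is to treat the two conclusions separately: first the $\ell^\infty$ rate \eqref{eq: main convergence in hd model} for $\hat S$ and $\hat\Sigma$, which is a maximal (high-dimensional) refinement of the expansion behind Theorem \ref{thm: consistency spectral estimator}, and then the $\ell^1$/$\ell^2$ bounds \eqref{eq: implication l1 norm hd case}, which on the events \eqref{eq: lambda dominance}--\eqref{eq: second event dominance} follow from a purely deterministic Lasso argument. For the first part I would reduce \eqref{eq: main convergence in hd model} to a single uniform statement about $\hat f$. Recalling the explicit formulas for $\hat L$, $\hat S_k$, $\hat\Sigma_{kk}$, and $\hat\Sigma_{kl}$ from Section \ref{sub: preliminary estimation definition}, each entry of $\hat S - S$ and of $\hat\Sigma - \Sigma$ is a fixed linear combination, with bounded coefficients, of the errors $\hat f(b) - f(b)$ evaluated on the grid $\mathcal G = \{0_d,\ \pm e_k,\ e_k + e_l\}$ with $f(b) = b'\Sigma b + S'b + L$. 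Since $|\mathcal G| = O(d^2)$, it suffices to show $\max_{b\in\mathcal G}|\hat f(b) - f(b)| = O_P(1/(T\wedge N) + \sqrt{\log d/NT})$.

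For this I would re-run the eigenvalue expansion proving Theorem \ref{thm: consistency spectral estimator}, tracking the dependence on $b$ and $d$. Writing $y_{it} - x_{it}'b = \xi_{it}(b) + \eta_{it}(b)$, where $\xi_{it}(b)$ is the rank-at-most-$GM$ signal coming from the group-factor structure of \eqref{eq: x process general} and $\eta_{it}(b) = (z_{it}'\beta + v_{it}) - z_{it}'b$ is the residual, the matrix $A^b$ equals $q\mathbf 1' + \mathbf 1 q' - 2R$ with $R = (NT)^{-1}rr'$ and $r$ the $N\times T$ residual matrix. Splitting $R$ into signal ($\Xi\Xi'$), cross ($\Xi H' + H\Xi'$), and noise ($HH'$) parts, the proof of Theorem \ref{thm: consistency spectral estimator} shows that the sum $\hat f(b)$ of the $2GM+2$ largest-in-absolute-value eigenvalues equals $f(b)$ plus (i) a perturbation remainder controlled by $(NT)^{-1}\|H(b)\|_{\op}^2$ and (ii) explicit linear functionals of the cross term that reduce to the averages $(NT)^{-1}\sum_{i,t}v_{it}z_{itk}$ and $(NT)^{-1}\sum_{i,t}(z_{itk}z_{itl} - \Sigma_{kl}/2)$. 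Term (ii) is handled by bounding its maximum over $\mathcal G$ directly with \eqref{eq: assumption 3.3 hd case} and \eqref{eq: assumption 3.4 hd case}, yielding $O_P(\sqrt{\log d/NT})$. For (i), the key observation is that for $b\in\mathcal G$ one has $H(b)\in\{\Psi,\ \Psi\pm Z_k,\ \Psi - Z_k - Z_l\}$, where $\Psi = (z_{it}'\beta + v_{it})_{i,t}$ and $Z_k = (z_{itk})_{i,t}$, so $\|H(b)\|_{\op}\le \|\Psi\|_{\op} + 2\max_{1\le k\le d}\|Z_k\|_{\op}$ uniformly over $\mathcal G$. The new hypothesis $\|\sum_t u_t z_{it}'\beta\|_{\psi_2}\le C_z$ together with Assumption \ref{as: spectral v} makes the rows of $\Psi$ and of each $Z_k$ sub-Gaussian with a dimension-free parameter, so a standard sub-Gaussian matrix bound gives $\|\Psi\|_{\op},\,\|Z_k\|_{\op} = O_P(\sqrt N + \sqrt T + \sqrt{\log d})$, and under $\log d = o(N)$ the $\sqrt{\log d}$ term is absorbed. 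Hence $(NT)^{-1}\max_{b\in\mathcal G}\|H(b)\|_{\op}^2 = O_P((N+T)/(NT)) = O_P(1/(T\wedge N))$. I expect this to be the main obstacle: keeping the perturbation term at the fast rate $1/(T\wedge N)$ without a $\log d$ inflation, which is precisely what the dimension-free sub-Gaussian control of $z_{it}'\beta$ and the union bound over the $d$ matrices $Z_k$ (absorbed because $\log d = o(N)$) are designed to deliver.

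With \eqref{eq: main convergence in hd model} established, I would prove \eqref{eq: implication l1 norm hd case} by the standard argument, now deterministic on the stated events. Let $\Delta = \hat\beta_\lambda - \beta$, let $J = \{k : \beta_k\ne 0\}$ with $|J| = s$, and start from the defining inequality $\hat\beta_\lambda'\hat\Sigma\hat\beta_\lambda + \hat S'\hat\beta_\lambda + \lambda\|\hat\beta_\lambda\|_1 \le \beta'\hat\Sigma\beta + \hat S'\beta + \lambda\|\beta\|_1$. Substituting $S = -2\Sigma\beta$ and rearranging (using $\hat\beta_\lambda'\hat\Sigma\hat\beta_\lambda - \beta'\hat\Sigma\beta = \Delta'\hat\Sigma\Delta + 2\Delta'\hat\Sigma\beta$ and $\hat S'\Delta = -2\Delta'\Sigma\beta + (\hat S - S)'\Delta$), one obtains
$$
\Delta'\hat\Sigma\Delta \le \lambda(\|\beta\|_1 - \|\hat\beta_\lambda\|_1) + \big(\|\hat S - S\|_\infty + 2C_\beta\|\hat\Sigma - \Sigma\|_\infty\big)\|\Delta\|_1,
$$
where I bound $|(\hat S - S)'\Delta|\le\|\hat S - S\|_\infty\|\Delta\|_1$ and $|2\Delta'(\hat\Sigma - \Sigma)\beta|\le 2\|\hat\Sigma - \Sigma\|_\infty\|\beta\|_1\|\Delta\|_1\le 2C_\beta\|\hat\Sigma - \Sigma\|_\infty\|\Delta\|_1$. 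On the event \eqref{eq: lambda dominance} the bracket is at most $\lambda/c_\lambda$, and since $\|\beta\|_1 - \|\hat\beta_\lambda\|_1\le\|\Delta_J\|_1 - \|\Delta_{J^c}\|_1$ this gives
$$
\Delta'\hat\Sigma\Delta \le \lambda(1 + 1/c_\lambda)\|\Delta_J\|_1 - \lambda(1 - 1/c_\lambda)\|\Delta_{J^c}\|_1.
$$

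Convexity of the objective means $\hat\Sigma\succeq 0$, so $\Delta'\hat\Sigma\Delta\ge 0$ and the right-hand side above is nonnegative, which is exactly the cone condition $\|\Delta_{J^c}\|_1\le\bar c_\lambda\|\Delta_J\|_1$ with $\bar c_\lambda = (c_\lambda + 1)/(c_\lambda - 1)$. The cone gives $\|\Delta\|_1\le(1 + \bar c_\lambda)\sqrt s\,\|\Delta\|$, so the deterministic lower bound $\Delta'\hat\Sigma\Delta\ge c_\Sigma\|\Delta\|^2 - \|\hat\Sigma - \Sigma\|_\infty\|\Delta\|_1^2$ combined with the event \eqref{eq: second event dominance} yields the restricted-strong-convexity estimate $\Delta'\hat\Sigma\Delta\ge (c_\Sigma/2)\|\Delta\|^2$. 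Pairing this with the upper bound $\Delta'\hat\Sigma\Delta\le\lambda(1 + 1/c_\lambda)\sqrt s\,\|\Delta\|$ produces the $\ell^2$ bound in \eqref{eq: implication l1 norm hd case}, and reinserting it into $\|\Delta\|_1\le(1 + \bar c_\lambda)\sqrt s\,\|\Delta\|$ produces the $\ell^1$ bound. This last part is routine; the only subtlety worth flagging is the appeal to convexity ($\hat\Sigma\succeq 0$) to extract the cone condition, which is necessary because $\ell^\infty$-closeness of $\hat\Sigma$ to $\Sigma$ does not by itself control the negative part of the spectrum of $\hat\Sigma$.
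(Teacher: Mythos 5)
Your proposal is correct and takes essentially the same route as the paper's proof: reduce \eqref{eq: main convergence in hd model} to uniform control of $\hat f$ over the grid $\{0_d,\pm e_k,e_k+e_l\}$, handle the trace terms by H\"older's inequality together with \eqref{eq: assumption 3.3 hd case}--\eqref{eq: assumption 3.4 hd case} and $\|\beta-b\|_1\leq C_\beta+2$, handle the low-rank perturbation term by sub-Gaussian operator-norm concentration with the union bound absorbed by $\log d = o(N)$, and then run the standard deterministic cone/restricted-eigenvalue Lasso argument on the events \eqref{eq: lambda dominance}--\eqref{eq: second event dominance}. The only differences are cosmetic: you union-bound over the $d$ coordinate matrices $(z_{itk})_{i,t}$ after a triangle inequality, whereas the paper union-bounds directly over the $O(d^2)$ grid matrices built from $z_{it}'(\beta-b)$, and your appeal to $\hat\Sigma\succeq 0$ (via convexity) for the cone condition is exactly the same unargued step the paper takes when it asserts $(\hat\beta_\lambda-\beta)'\hat\Sigma(\hat\beta_\lambda-\beta)\geq 0$.
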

\begin{remark}[Main Implication of Theorem \ref{thm: hd model}]
By the result \eqref{eq: main convergence in hd model}, it follows that selecting a constant $C>0$ large enough and setting
\begin{equation}\label{eq: lambda choice}
\lambda = C\left(\frac{1}{T\wedge N} + \sqrt{\frac{\log d}{NT}}\right)
\end{equation}
will ensure that the event \eqref{eq: lambda dominance} holds with probability arbitrarily close to one. Also, the result \eqref{eq: main convergence in hd model} ensures that the event \eqref{eq: second event dominance} holds with probability approaching one as long as
\begin{equation}\label{eq: sparsity convergence}
s\left(\frac{1}{T\wedge N} + \sqrt{\frac{\log d}{NT}}\right) \to 0.
\end{equation}
Thus, setting $\lambda$ according to \eqref{eq: lambda choice} and assuming \eqref{eq: sparsity convergence} ensures that
$$
\|\hat\beta_{\lambda} - \beta\|_1 \leq \bar Cs\left(\frac{1}{T\wedge N} + \sqrt{\frac{\log d}{NT}}\right)\ \text{and}\  \|\hat\beta_{\lambda} - \beta\|_2 \leq \bar C\sqrt s\left(\frac{1}{T\wedge N} + \sqrt{\frac{\log d}{NT}}\right)
$$
with probability arbitrarily close to one, where $\bar C$ is some constant. Here, \eqref{eq: sparsity convergence} explains how small $s$ has to be in order for our results to go through. 

The choice of the penalty parameter $\lambda$ in \eqref{eq: lambda choice} is not practical, as it does not specify the constant $C>0$. From the asymptotic point of view, we can of course set $C = \log\log n$, in which case the bounds on $\|\hat\beta_{\lambda} - \beta\|_1$ and $\|\hat\beta_{\lambda} - \beta\|_2$ presented above hold with probability approaching one, with $\bar C$ replaced by $\bar C\log\log n$, but this choice may of course perform poorly in finite samples. However, the question of practical choices of $\lambda$ is beyond the scope of this paper and we leave it for future work.
\qed
\end{remark}
\begin{remark}[Conditions of Theorems \ref{thm: hd model}]
Conditions \eqref{eq: assumption 3.3 hd case} and \eqref{eq: assumption 3.4 hd case} used in Theorem \ref{thm: hd model} are a natural extension of conditions used in the literature on high-dimensional models and can be derived from appropriate maximal inequalities. For example, assuming that components of $z_{it}$ are bounded and $v_{it}$ is sub-Gaussian, both \eqref{eq: assumption 3.3 hd case} and \eqref{eq: assumption 3.4 hd case} follow from a combination of the union bound and a time-series version of Hoeffding's inequality as long as the dependence of random vectors $(v_{it},z_{it}')'$ across $t$ is not too strong. Moreover, a version of Theorem \ref{thm: hd model} can be derived if conditions \eqref{eq: assumption 3.3 hd case} and \eqref{eq: assumption 3.4 hd case} are relaxed, in which case we would simply have to correspondingly modify the bound \eqref{eq: main convergence in hd model}.
\qed
\end{remark}

\subsection{Interactive Fixed-Effect Model}
Consider an interactive fixed-effect panel data model
\begin{equation}\label{eq: ifepd}
y_{it} = x_{it}'\beta + \kappa_i'\phi_t + v_{it},\quad\text{for all }i=1,\dots,N, \ t=1,\dots,T,
\end{equation}
where $\phi_t$ is a $J\times 1$ vector of factors and $\kappa_i$ is a $J\times 1$ vector of factor loadings. \cite{B09} developed an OLS-type interactive fixed-effect estimator of $\beta$ in this model. The proposed estimator, however, is a solution to a non-convex optimization problem, and like in the case of the grouped fixed-effect estimator, it may be difficult to find (ensure that we found) the global solution to this optimization problem. To fix this issue, \cite{MW19} developed an estimator of $\beta$ based on the nuclear-norm minimization. Their estimator solves a convex optimization problem and so is computationally rather simple. However, the convergence rate of their estimator is only $(T\wedge N)^{-1/2}$, which can be rather slow, especially if $T\ll N$ or $N\ll T$. It is therefore of interest to look for alternative estimators of $\beta$ in this model.

In this subsection, we demonstrate that a trivial modification of our spectral estimator can be used to estimate $\beta$ in this model with the rate $(T\wedge N)^{-1}$ as long as we assume that the covariates $x_{it}$ satisfy the factor model as well, namely
\begin{equation}\label{eq: ifepd covariates}
x_{it} = \omega_i'\phi_t + z_{it},\quad\text{for all }i=1,\dots,N, \ t=1,\dots,T,
\end{equation}
where $\phi_t$ is a $J\times 1$ vector of factors and $\omega_i$ is a $J\times d$ matrix of factor loadings. Note here that it is without loss of generality to assume that the same factors $\phi_t$ appear both in \eqref{eq: ifepd} and in \eqref{eq: ifepd covariates} as we can always merge the factors from two equations.

Let $\tilde\beta_J$ be the same spectral estimator $\tilde\beta$ as that defined in Section \ref{sub: preliminary estimation definition} with $J$ replacing $GM$. We then have the following result.
\begin{theorem}[Interactive Fixed-Effect Model]\label{thm: interactive model}
In the setting of this section, suppose that Assumptions \ref{as: spectral v} and \ref{as: spectral vz} are satisfied. Then
$$
\tilde\beta_J = \beta + O_P\left(\frac{1}{T\wedge N}\right).
$$
\end{theorem}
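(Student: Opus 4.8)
The plan is to reduce the interactive fixed-effect model to exactly the rank-structured form that underlies Theorem~\ref{thm: consistency spectral estimator}, with the integer $J$ playing the role of the product $GM$, and then to run the same spectral argument. First I would residualize the outcome: writing $w_{it}^b = y_{it} - x_{it}'b$ and substituting \eqref{eq: ifepd} and \eqref{eq: ifepd covariates} gives
\[
w_{it}^b = \big(\omega_i(\beta - b) + \kappa_i\big)'\phi_t + z_{it}'(\beta - b) + v_{it}.
\]
Thus the $N\times T$ matrix $w^b$ splits as $w^b = \Xi^b + R^b$, where the signal matrix $\Xi^b$, with entries $(\omega_i(\beta-b)+\kappa_i)'\phi_t$, factors as $\Xi^b = \tilde K^b\Phi'$ (with $\Phi$ the $T\times J$ matrix whose rows are $\phi_t'$) and hence has rank at most $J$, while the noise matrix $R^b$ has entries $r_{it}^b := z_{it}'(\beta - b) + v_{it}$ of precisely the form appearing in the baseline model. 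The crucial point, which is what makes $J$ rather than $2J$ the relevant count, is that both equations share the same factors $\phi_t$, so the combined residual signal collapses to a single rank-$\le J$ object; this is the direct analogue of the relaxed data-generating process in the Remark following Theorem~\ref{thm: consistency spectral estimator}, but with the outcome and covariate loadings living in a common factor space.

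Next I would run the spectral decomposition of the matrix $A^b$ in \eqref{eq: matrix a}. Using the distance-to-Gram identity $A^b = g\mathbf 1' + \mathbf 1 g' - \tfrac{2}{NT}w^b(w^b)'$, where $g = \diag(\tfrac{1}{NT}w^b(w^b)')$ and $\mathbf 1$ is the all-ones vector, I would expand $w^b(w^b)' = \Xi^b\Xi^{b\prime} + \Xi^bR^{b\prime} + R^b\Xi^{b\prime} + R^bR^{b\prime}$ and collect every term except the last into $P^b$, leaving $E^b = -\tfrac{2}{NT}R^bR^{b\prime}$. The column spaces of the three Gram terms in $P^b$ all lie in $\operatorname{col}(\Xi^b) + \operatorname{col}(R^b\Phi)$, a subspace of dimension at most $2J$, and the piece $g\mathbf 1' + \mathbf 1 g'$ adds at most $2$; hence $\operatorname{rank}(P^b)\le 2J+2$. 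Meanwhile Assumption~\ref{as: spectral v} makes each (independent) row of $R^b$ a sub-Gaussian vector in $\mathbb R^T$ of bounded norm, so the standard sub-Gaussian operator-norm bound yields $\|R^b\|_{\mathrm{op}} = O_P(\sqrt N + \sqrt T)$ and therefore $\|E^b\|_{\mathrm{op}} = O_P(1/(T\wedge N))$.

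With these two facts in hand, the remainder parallels the proof of Theorem~\ref{thm: consistency spectral estimator}. Because $P^b$ has at most $2J+2$ nonzero eigenvalues while $\|E^b\|_{\mathrm{op}} = O_P(1/(T\wedge N))$, a Weyl/Lidskii perturbation argument shows that the signed sum of the $2J+2$ largest-in-absolute-value eigenvalues of $A^b = P^b + E^b$ equals $\tr(P^b) + O_P(1/(T\wedge N))$. Since $A^b$ has zero diagonal, $\tr(A^b)=0$, and the trace identity collapses the signal contributions cleanly: $\tr(P^b) = -\tr(E^b) = \tfrac{2}{NT}\sum_{i,t}(r_{it}^b)^2$. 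Hence, with $\hat f_J$ denoting the sum in \eqref{eq: f function} using $2J+2$ eigenvalues,
\[
\hat f_J(b) = \frac{2}{NT}\sum_{i=1}^N\sum_{t=1}^T\big(z_{it}'(\beta - b) + v_{it}\big)^2 + O_P\!\left(\frac{1}{T\wedge N}\right).
\]
Expanding the square and invoking Assumption~\ref{as: spectral vz} (which gives $\tfrac{1}{NT}\sum z_{it}z_{it}' = \Sigma/2 + O_P(1/\sqrt{NT})$ and $\tfrac{1}{NT}\sum z_{it}v_{it} = O_P(1/\sqrt{NT})$, both $O_P(1/(T\wedge N))$ since $1/\sqrt{NT}\le 1/(T\wedge N)$) produces the sharpened version of \eqref{eq: prelim prob limit}, namely $\hat f_J(b) = b'\Sigma b + S'b + L + O_P(1/(T\wedge N))$ with $S = -2\Sigma\beta$. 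Feeding $b\in\{0_d,\pm e_k, e_k+e_l\}$ into this expansion as in Section~\ref{sub: preliminary estimation definition} then gives $\hat\Sigma = \Sigma + O_P(1/(T\wedge N))$ and $\hat S = -2\Sigma\beta + O_P(1/(T\wedge N))$, so that $\tilde\beta_J = -\hat\Sigma^{-1}\hat S/2 = \beta + O_P(1/(T\wedge N))$.

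I expect the main obstacle to be the eigenvalue bookkeeping in the second and third steps rather than any new phenomenon specific to the interactive model. The delicate point is that the cross terms $\Xi^bR^{b\prime} + R^b\Xi^{b\prime}$ have operator norm only $O_P(1/\sqrt{T\wedge N})$, which is too large to discard as a perturbation; it is precisely the over-count $2J+2$ (rather than the naive $J+2$) that lets these slowly-vanishing cross terms be absorbed into the rank of $P^b$ and summed exactly, after which the linear trace identity disposes of them without cost. Verifying this perturbation lemma carefully, and confirming that no feature special to the grouped model (finite support of $g_i$, the normalization $\alpha^1_{\gamma t}=\alpha_{\gamma t}$) is used anywhere, is the crux; everything else is identical to the argument already carried out for Theorem~\ref{thm: consistency spectral estimator}.
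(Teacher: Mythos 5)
Your proposal is correct and is essentially the paper's own argument: the paper's proof also reduces everything to the proof of Theorem \ref{thm: consistency spectral estimator} by setting $f_{it} = (\kappa_i + \omega_i(\beta-b))'\phi_t$, so that the signal-plus-cross-term part of $A^b$ (your $P^b$) factors through the $N\times J$ matrix $\mathcal{K} = (\kappa_1+\omega_1(\beta-b),\dots,\kappa_N+\omega_N(\beta-b))'$ and has rank at most $2J+2$, while the pure-noise Gram term is the $O_P(1/(T\wedge N))$ perturbation. The remaining steps you outline --- the zero-trace identity, absorbing the non-negligible cross terms into the low-rank part rather than the perturbation, and the eigenvalue bookkeeping (which the paper makes rigorous in Lemma \ref{lem: eigenvalue comparison}, precisely because the largest-in-absolute-value ordering prevents a direct appeal to Weyl's inequality) --- coincide with the paper's proof of Theorem \ref{thm: consistency spectral estimator}, here rerun with $J$ in place of $GM$.
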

\begin{remark}[Combining $\tilde\beta_J$ and Interactive Fixed-Effect Estimator]
Although the convergence rate of our estimator $\tilde\beta_J$ is faster than the convergence rate of the estimator proposed in \cite{MW19}, it is still slower than the convergence rate of the interactive fixed-effect estimator proposed in \cite{B09}, which is $(NT)^{-1/2}$. Like in \cite{MW19}, we therefore consider our estimator as a starting point in the problem of finding a local minimum of the optimization problem used to define the interactive fixed-effect estimator. The resulting estimator will be asymptotically equivalent to the interactive fixed-effect estimator under certain conditions.
\qed
\end{remark}

\section{Monte Carlo Simulation Study}\label{sec: monte carlo}
In this section, we present results of a Monte Carlo simulation study. We compare the performance in finite samples of the spectral (S) and post-spectral (P-S) estimators to several natural alternatives. Specifically, we consider the grouped fixed effect (GFE) estimator proposed in \cite{BM15}, the least squares (LS) estimator proposed in \cite{B09}, and the penalized nuclear norm (Pen NN) estimator proposed in \cite{MW19}. The comparison with LS and Pen NN estimators is pertinent because equation (\ref{eq: model}) can be seen as a particular case of an interactive fixed effects model where the factors are the group trends, $\alpha_{gt}$, and the loadings are restricted to be of the form $(0,\ldots,1,\ldots,0)$, with 1 in the $g_i^{th}$ position. 

Computation of the GFE estimator is not trivial. Even with $N=100$ and $G=2$, it is hard to be sure that any given algorithm produces the global minimum of the GFE optimization problem since this would require separately considering $G^N$ partitions of units $i=1,\dots,N$ into $G$ groups. Having in mind this issue, we calculate the GFE estimator using Algorithm 1 in \cite{BM15} with 500 random initial values, which is in a nutshell the Lloyd algorithm with covariates. Given a  classification, this algorithm obtains the regression coefficients and the group-specific trends by minimizing OLS, and given the parameters, units are classified into groups according to the smallest individual-specific residual. We generate the initial values from the standard normal distribution in $\mathbb R^{2 + G\times T}$, where the first two and the last $G\times T$ components correspond to $\beta = (\beta_1,\beta_2)'$ and $\{\alpha_{\gamma t}\}_{\gamma,t=1}^{G,T}$, respectively. 

We also consider an infeasible GFE estimator, which uses only one initial value but this value is chosen to be the oracle estimator, i.e. the pooled OLS estimator based on the true partition of units into groups. Using this value substantially increases the chances to find the global minimum of the GFE optimization problem as we expect the solution to be near the oracle estimator. In what follows, we refer to this infeasible GFE estimator as I-GFE.

Computation of the LS estimator can also be problematic, as its objective function is prone to local minima as well. To deal with this problem, like in the case of the GFE estimator, we use 500 random initial values and choose the one that gives the best value of the LS criterion function.


Computation of the Pen-NN estimator is relatively straightforward: it minimizes a convex objective function. To make the comparison with the other estimators fair, however, we choose a penalty parameter for this estimator so that the total number of generated groups is equal to the true value, $G$, instead of using the data-driven procedure proposed by the authors. Also, instead of presenting results for the plain Pen-NN estimator, following the original paper \cite{MW19}, we actually present results for the LS estimator that uses the Pen-NN estimator as initial value, as adding an extra step of least squares minimization is supposed to make the estimator more precise. With some abuse of notation, we still refer to this estimator as Pen-NN.\footnote{\cite{MW19} proposed several other estimators as well but we have chosen to present results for the Pen-NN estimator only, as it dominates the other estimators in our simulations.} 


Finally, for all data-generating processes, we also report results for the oracle estimator, which is the pooled OLS estimator based on the true partition of units into groups. Although we do not discuss these results, an interested reader can use them as a benchmark for the results on the other estimators.



Next, we discuss the data-generating processes. We generate the data $(x_{it}',y_{it})'$, $i=1,\dots,N$ and $t=1,\dots,T$, according to equations \eqref{eq: model} and \eqref{eq: x process general}, where we set $d=2$, so that the model contains two covariates. Depending on the experiment, we set $N=100$, $200$, or $400$ and $T=20$, $50$, or $100$. We also set $G=2$ or $7$ and $M=1$ or $2$. The random variables $v_{it}$ and components of the random vectors $z_{it}$  are generated from a truncated standard normal distribution, $Z1\{|Z|\leq C\}$, where $Z\sim N(0,1)$ and $C=20$, independently of each other and across indices. The random variables $\alpha_{\gamma t}^m$ are generated from a truncated normal distribution as well but with different variance values, $Z1\{|Z|\leq C\}$, where $Z\sim N(0,\sigma^2)$, independently across indices and of all other random variables. Depending on the experiment, we set $\sigma^2$ equal to either 1 or 4. We have checked that results are robust with respect to different specifications of the truncation constant $C$. Further, we set the number of observations $i$ within each group $\gamma$ to be the same, except for the last group, which contains more observations if $G$ does not divide $N$. We have also checked with uneven units across groups and results are unchanged.

Throughout all experiments, we set $\beta = (\beta_1,\beta_2)'$, where $\beta_1=-1$ and $\beta_2 = .8$ but we note that the particular choice of these values does not seem to matter much for the simulation results. Also, when $M=1$, we set $\rho_{i1} = (\rho_{i11},\rho_{i12})'$ with $\rho_{i11} = \varrho + Z_{i1}1\{|Z_{i1}|\leq C\}$ and $\rho_{i12} = \varrho + Z_{i2}1\{|Z_{i2}|\leq C\}$, where $Z_{i1}$ and $Z_{i2}$ are independent of each other, across $i$, and of all other random variables in the data-generating process, and where $\varrho = 3$. Here, $\varrho$ can be understood as a measure of endogeneity in the model as it govern the correlation between covariates and grouped-specific time effects. When $M=2$, we set $\rho_{i11} = \varrho + Z_{i11}1\{|Z_{i11}|\leq C\}$, $\rho_{i21} = 1 + Z_{i21}1\{|Z_{i21}|\leq C\}$, and $\rho_{i12} = \varrho + Z_{i12}1\{|Z_{i12}|\leq C\}$ and $\rho_{i22} = Z_{i22}1\{|Z_{i22}|\leq C\}$, where random variables $Z_{imj}$ for $m=1,2$ have standard normal distribution and are independent of each other, across indices, and of all other random variables in the data-generating process.

We present results in Tables 1--4 at the end of Supplementary Materials. For each combination of $N$ and $T$, these tables give the mean absolute error (MAE) for S, P-S, LS, Pen-NN, I-GFE, GFE, and oracle estimators. In addition, the tables give the fraction of misclassified units based on our classification algorithm (which is based on the spectral estimator, and which is the second step in constructing our post-spectral estimator) and the fraction of misclassified units in the construction of the GFE estimator. Each table presents results in the upper panel for $G =2$ and in the lower panel for $G = 7$. Tables 1 and 2 correspond to $M=1$ and Tables 3 and 4 correspond to $M=2$. In addition, Tables 1 and 3 correspond to $\sigma^2=1$, and Tables 2 and 4 correspond to $\sigma^2=4$. When increasing the variance $\sigma^2$, we are effectively making the groups more separated, and hence group classification improves. Both the MAE and the fraction of misclassified units are calculated as the average over 50 simulations. 

We now describe the results. We start with the behavior of the S and P-S estimators. In most cases, the P-S estimator outperforms the S estimator. Hence, it seems that doing the "post" step delivers a better estimator. Also, the P-S estimator tends to be comparable to the oracle estimator across all tables as long as $N$ and $T$ increase sufficiently. In addition, while the P-S estimator seems to have more trouble when the separation of the groups is less, i.e. in Tables 1 and 3, it still quickly achieves the oracle for moderate values of $N$ and $T$. Moreover, when $G=2$, the P-S estimator seems to achieve the oracle faster than when $G = 7$, but even for small values of $N$ and $T$, the P-S estimator is a robust reliable estimator. 

Let us now turn to the comparison with the grouped fixed effect estimators. Not surprisingly, the infeasible I-GFE  estimator is often comparable to the oracle. However, the performance of the feasible GFE estimator varies substantially depending on the number and separation of the groups. For instance, in the upper panel of Table 1 ($G=2)$, where separation of groups is moderate, and for moderate values of $T$, the GFE estimator is close to the I-GFE estimator, indicating that the minimum might have been achieved. However, in the lower panel of Table 1 ($G = 7$), it is unlikely that 500 initial values cover a significant part of the space of all possible partitions of $N$ units into $G$ groups, and the performance of the GFE estimator deteriorates.  In fact, missclassification for the GFE estimator is on the level of $70\%$ for moderate values of $T$ when $G = 7$. In this case, the P-S estimator has a comparative advantage relative to the GFE estimator. This same pattern is reproduced in Table 2. However, since the groups are more separated, and the r-squared of the model is higher, the GFE estimator improves. Interestingly, the behavior of the GFE estimator when $M=2$ improves relative to $M=1$.


As far the LS estimator is concerned, when the number of groups increases, the finite sample performance of this estimator deteriorates, although it improves as both $N$ and $T$ increase. However, it is only for large values of $T$ that the behavior is comparable to the oracle. At least in this exercise, the P-S estimator outperforms the LS estimator.

Finally, the pen NN estimator is, in general, dominated by the other estimators for our data-generating processes. However, like with the LS estimator, its performance improves as we increase $N$ and $T$. For example, in Table 4, when $G=2$, the estimator is actually comparable to the oracle for large values of $N$ and $T$.

\section{Proofs of Theorems \ref{thm: consistency spectral estimator}, \ref{thm: classifier consistency new}, and \ref{thm: main consistency}}\label{sec: remaining proofs}
\begin{proof}[Proof of Theorem \ref{thm: consistency spectral estimator}]
Throughout this proof, we use $c$ and $C$ to denote strictly positive constants that can change from place to place but can be chosen to depend on $C_1$ and $C_2$ only.

We will prove that 
\begin{equation}\label{eq: key convergence proof}
\lambda^b_1 + \dots + \lambda^b_{2GM+2} = b'\Sigma b + S'b + L + O_P\left(\frac{1}{T\wedge N}\right),\quad\text{for all }b\in\mathbb R^d,
\end{equation}
where $\Sigma$ is a positive definite matrix appearing in Assumption \ref{as: spectral vz}(ii), $S = -2\Sigma \beta$, and $L = L_N = \beta'\Sigma\beta + 2(NT)^{-1}\sum_{i=1}^N\sum_{t=1}^T v_{it}^2$. Then  $\beta = -\Sigma^{-1}S/2$, and so $\tilde\beta = -\hat\Sigma^{-1}\hat S/2$ satisfies \eqref{eq: convergence rate spectral estimator} by the delta method since \eqref{eq: key convergence proof} implies that 
$$
\|\hat S - S\|\vee\|\hat\Sigma - \Sigma\| = O_P\left(\frac{1}{T\wedge N}\right)
$$
by construction in Section \ref{sub: preliminary estimation definition}.

To prove \eqref{eq: key convergence proof}, fix any $b\in\mathbb R^d$. For brevity of notation, we will write $A$ instead of $A^b$. Then
\[
A_{ij}=\frac{1}{NT}\sum_{t=1}^{T}\Big\{ f_{it}-f_{jt}+z_{it}'(\beta-b)-z_{jt}'(\beta-b)+v_{it}-v_{jt}\Big\} ^{2},\quad i,j=1,\dots,N,
\]
where
$$
f_{it} = \alpha_{g_it}^1(1+\rho_{i1}'(\beta - b)) + \sum_{m=2}^M \alpha_{g_it}^m\rho_{im}'(\beta - b),\quad i=1,\dots,N,\,t=1,\dots,T.
$$
Also, let $R$ be an $N\times N$ matrix whose $(i,j)$-th element is
$$
R_{ij}=-\frac{2}{NT}\sum_{t=1}^{T}\Big\{(\beta-b)'z_{it}z_{jt}(\beta-b)+v_{it}v_{jt}+v_{it}z_{jt}'(\beta-b)+v_{jt}z_{it}'(\beta-b)\Big\},\ i,j=1,\dots,N.
$$
As we will see below, this matrix represents the asymptotically negligible component of
$A$ in the sense that
\begin{equation}
\|R\|=O_{P}\left(\frac{1}{T\wedge N}\right).\label{eq: rate}
\end{equation}
On the other hand,
\begin{align*}
\textrm{tr}(R) & =-\frac{2}{NT}\sum_{i=1}^{N}\sum_{t=1}^{T}\Big\{(\beta-b)'z_{it}z_{it}(\beta-b)+v_{it}^{2}+2v_{it}z_{it}'(\beta-b)\Big\}\\
 & =-\frac{2}{NT}\sum_{i=1}^{N}\sum_{t=1}^{T}\Big\{(\beta-b)'z_{it}z_{it}(\beta-b)+v_{it}^{2}\Big\}+O_{P}\left(\frac{1}{\sqrt{NT}}\right)
\end{align*}
by Assumption \ref{as: spectral vz}(i). Thus, given that $\textrm{tr}(A)=0$, the matrix $A_0=A-R$
satisfies
\begin{align*}
\textrm{tr}(A_0)&=\frac{2}{NT}\sum_{i=1}^{N}\sum_{t=1}^{T}\Big\{(\beta-b)'z_{it}z_{it}(\beta-b)+v_{it}^{2}\Big\}+O_{P}\left(\frac{1}{\sqrt{NT}}\right)\\
& = b'\Sigma b + S'b + L + O_{P}\left(\frac{1}{\sqrt{NT}}\right)
\end{align*}
by Assumption \ref{as: spectral vz}(ii). In addition, we will show below that $A_0$ has at most $2GM+2$ non-zero eigenvalues. Hence,
$$
\lambda_{1}+\dots+\lambda_{2GM+2}=b'\Sigma b + S'b + L+O_P\left(\frac{1}{\sqrt{NT}}\right)+O_{P}\left(\frac{1}{T\wedge N}\right)
$$
by Lemma \ref{lem: eigenvalue comparison} and \eqref{eq: rate}. This gives \eqref{eq: key convergence proof} since $\sqrt{NT}\geq T\wedge N$. Therefore, it remains to prove (\ref{eq: rate}) and to show that $A_0$ has at most $2GM+2$ non-zero eigenvalues. 

To derive the bound on the number of non-zero eigenvalues of $A_0$, let
us first introduce some notation. For all $t=1,\dots,T$, denote
$$
F_{t}=(f_{1t},\dots,f_{Nt})',\ V_{t}=(v_{1t},\dots,v_{Nt})', \ Z_{t}=(z_{1t}'(\beta-b),\dots,z_{Nt}'(\beta-b))'.
$$
Also, denote $1_{N}=(1,\dots,1)'\in\mathbb{R}^{N}$. In addition, for any
vectors $a=(a_{1},\dots,a_{N})'$ and $b=(b_{1},\dots,b_{N})'$, write
$(ab)=(a_{1}b_{1},\dots,a_{N}b_{N})'$ and $a^{2}=(a_{1}^{2},\dots,a_{N}^{2})'$.
Then
\begin{align*}
A & =\frac{1}{NT}\sum_{t=1}^{T}\Big(F_{t}^{2}1_{N}'+1_{N}(F_{t}^{2})'+Z_{t}^{2}1_{N}'+1_{N}(Z_{t}^{2})'+V_{t}^{2}1_{N}'+1_{N}(V_{t}^{2})'\Big)\\
 & +\frac{2}{NT}\sum_{t=1}^{T}\Big((F_{t}Z_{t})1_{N}'+1_{N}(F_{t}Z_{t})'+(F_{t}V_{t})1_{N}'+1_{N}(F_{t}V_{t})'+(Z_{t}V_{t})1_{N}'+1_{N}(Z_{t}V_{t})'\Big)\\
 & -\frac{2}{NT}\sum_{t=1}^{T}\Big(F_{t}Z_{t}'+Z_{t}F_{t}'+F_{t}V_{t}'+V_{t}F_{t}'+Z_{t}V_{t}'+V_{t}Z_{t}' +F_tF_t' +Z_tZ_t' + V_tV_t'\Big).
\end{align*}
Also,
$$
R=-\frac{2}{NT}\sum_{t=1}^{T}\Big(Z_{t}Z_{t}'+V_{t}V_{t}'+V_{t}Z_{t}'+Z_{t}V_{t}'\Big).
$$
Hence, denoting
\[
Q=\frac{1}{NT}\sum_{t=1}^{T}\Big(F_{t}^{2}+Z_{t}^{2}+V_{t}^{2}+2(F_{t}Z_{t})+2(F_{t}V_{t})+2(Z_{t}V_{t})\Big)
\]
and recalling $A_0 = A - R$, we have
\begin{align*}
A_0&=Q1_{N}'+1_{N}Q'-\frac{2}{NT}\sum_{t=1}^{T}\Big(F_{t}F_{t}'+F_{t}Z_{t}'+Z_{t}F_{t}'+F_{t}V_{t}'+V_{t}F_{t}'\Big)\\
&=Q1_{N}'+1_{N}Q'-\frac{2}{NT}\sum_{t=1}^{T}F_{t}(F_{t}+Z_{t}+V_{t})'-\frac{2}{NT}\sum_{t=1}^{T}(Z_{t}+V_{t})F_{t}'.
\end{align*}
The last expression makes bounding the number of non-zero eigenvalues of the matrix $A_0$ straightforward. Indeed, $\textrm{rank}(Q1_{N}')=\textrm{rank}(1_{N}Q')=1$.
Also, for each $m=1,\dots,M$ and $\gamma=1,\dots,G$, define an $N\times 1$ vector $\varrho_{m\gamma}$ whose $i$-th element is 
\[
\varrho_{m\gamma i}=\begin{cases}
\mathbb I\{m=1\}+\rho_{im}'(\beta - b) & \text{if }g_{i}=\gamma,\\
0 & \text{if }g_{i}\neq\gamma,
\end{cases}\quad i=1,\dots,N.
\]
Then
$
F_{t}=\sum_{m=1}^{M}\sum_{\gamma=1}^{G}\alpha_{\gamma t}^{m}\varrho_{m\gamma}
$
for all $t=1,\dots,T$, and so, for any vector $a=(a_{1},\dots,a_{N})'$, the vector
\begin{align*}
\sum_{t=1}^{T}F_{t}(F_{t}+Z_{t}+V_{t})'a&=\sum_{m=1}^{M}\sum_{\gamma=1}^{G}\sum_{t=1}^{T}\alpha_{\gamma t}^{m}\varrho_{m\gamma}(F_{t}+Z_{t}+V_{t})'a\\
&=\sum_{m=1}^{M}\sum_{\gamma=1}^{G}\varrho_{m\gamma}\sum_{t=1}^{T}\alpha_{\gamma t}^{m}(F_{t}+Z_{t}+V_{t})'a
\end{align*}
belongs to the linear subspace of $\mathbb R^N$ spanned by the vectors $\varrho_{11},\dots,\varrho_{MG}$. Hence,
\[
\textrm{rank}\left(\sum_{t=1}^{T}F_{t}(F_{t}+Z_{t}+V_{t})'\right)\leq MG.
\]
Similarly, using the fact that the row rank is equal to the column
rank,
\[
\textrm{rank}\left(\sum_{t=1}^{T}(Z_{t}+V_{t})F_{t}'\right)\leq MG.
\]
Thus, given that the rank operator is sub-additive, we conclude that
\[
\textrm{rank}(A_0)\leq2MG+2,
\]
and so $A_0$ has at most $2MG+2$ non-zero eigenvalues, as claimed.

It remains to prove (\ref{eq: rate}). To do so, we will show that
\begin{equation}
\left\Vert \frac{1}{NT}\sum_{t=1}^{T}V_{t}V_{t}'\right\Vert =O_{P}\left(\frac{1}{T\wedge N}\right)\label{eq: remains to prove}
\end{equation}
and note that $\|(NT)^{-1}\sum_{t=1}^T Z_t Z_t'\| = O_P(1/(T\wedge N))$ follows from the exactly same argument (the former is obtained from Assumption \ref{as: spectral v}(i) whereas the latter is from Assumption \ref{as: spectral v}(ii)). Then
$$
\left\|\frac{1}{NT}\sum_{t=1}^T V_t Z_t'\right\| \leq \sqrt{\left\|\frac{1}{NT}\sum_{t=1}^T V_t V_t'\right\|}\sqrt{\left\|\frac{1}{NT}\sum_{t=1}^T Z_t Z_t'\right\|} = O_P\left(\frac{1}{T\wedge N}\right)
$$
by Lemma \ref{lem: cs matrices}. This gives (\ref{eq: rate}).

To prove \eqref{eq: remains to prove}, we proceed by appropriately modifying the proof of Theorem 4.7.1
in \cite{V18}. Denote $\mathcal{V}=(v_{1},\dots,v_{N})'$, where $v_{i}=(v_{i1},\dots,v_{iT})'$
for all $i=1,\dots,N$. Then uniformly over $u\in\mathcal{S}^{T}$,
\[
u'\mathbb{E}\left[\frac{\mathcal{V}'\mathcal{V}}{N}\right]u=\frac{1}{N}\sum_{i=1}^{N}\mathbb{E}\left[u'v_{i}v_{i}'u\right]=\frac{1}{N}\sum_{i=1}^{N}\mathbb{E}\left[(v_{i}'u)^{2}\right]\leq C
\]
by Assumption \ref{as: spectral v}(i). Thus,
\begin{equation}
\left\Vert \mathbb{E}\left[\frac{\mathcal{V}'\mathcal{V}}{N}\right]\right\Vert \leq C.\label{eq: expectation}
\end{equation}
Further, by Exercise 4.4.3, part 2, in \cite{V18},
\begin{equation}
\left\Vert \frac{\mathcal{V}'\mathcal{V}}{N}-\mathbb{E}\left[\frac{\mathcal{V}'\mathcal{V}}{N}\right]\right\Vert \leq2\max_{u\in\mathcal{N}}\left|u'\left(\frac{\mathcal{V}'\mathcal{V}}{N}-\mathbb{E}\left[\frac{\mathcal{V}'\mathcal{V}}{N}\right]\right)u\right|,\label{eq: discretization}
\end{equation}
where $\mathcal{N}$ is any $(1/4)$-net in $\mathcal{S}^{T}$. Moreover, by Corollary 4.2.13 in \cite{V18}, the net $\mathcal N$ can be chosen so that $|\mathcal{N}_{\epsilon}|\leq9^{T}$, which we are going to use below.

Next, fix any $u\in\mathcal{N}$. Then by Assumption \ref{as: spectral v}(i) and Lemma 2.7.6
in \cite{V18}, for all $i=1,\dots,N$, we have $\|(u'v_{i})^{2}\|_{\psi_{1}}\leq C$, and so by Exercise
2.7.10 in \cite{V18}, $\|(u'v_{i})^{2}-\mathbb{E}[(u'v_{i})^{2}]\|_{\psi_{1}}\leq C$.
Hence, by Bernstein's inequality (Corollary 2.8.3 in \cite{V18}), for
any $\epsilon>0$,
\begin{align*}
&\mathrm{P}\left(\left|u'\left(\frac{\mathcal{V}'\mathcal{V}}{N}-\mathbb{E}\left[\frac{\mathcal{V}'\mathcal{V}}{N}\right]\right)u\right|\geq\epsilon\right)\\
&\qquad =\mathrm{P}\left(\left|\frac{1}{N}\sum_{i=1}^N \Big((v_i'u)^2 - \mathbb E[(v_i'u)^2]\Big)\right|\geq \epsilon\right) \leq2\exp\left[-c(\epsilon\wedge\epsilon^{2})N\right],
\end{align*}
and so by the union bound,
\[
\mathrm{P}\left(\max_{u\in\mathcal{N}}\left|u'\left(\frac{\mathcal{V}'\mathcal{V}}{N}-\mathbb{E}\left[\frac{\mathcal{V}'\mathcal{V}}{N}\right]\right)u\right|\geq\epsilon\right)\leq2\times9^{T}\times\exp\left[-c(\epsilon\wedge\epsilon^{2})N\right].
\]
Therefore, setting $\epsilon=c^{-1}(\log9)(T/N)+1$, we obtain
\[
\mathrm{P}\left(\max_{u\in\mathcal{N}}\left|u'\left(\frac{\mathcal{V}'\mathcal{V}}{N}-\mathbb{E}\left[\frac{\mathcal{V}'\mathcal{V}}{N}\right]\right)u\right|\geq \frac{CT}{N}+1\right)\leq2\exp(-cN).
\]
Combining this bound with (\ref{eq: expectation}) and (\ref{eq: discretization})
gives
\begin{equation}\label{eq: useful bound from proof of spectral consistency}
\left\Vert \frac{\mathcal{V}'\mathcal{V}}{N}\right\Vert =O_{P}\left(\frac{T}{N}+1\right).
\end{equation}
Hence,
$
\|\mathcal{V}\|=O_{P}(\sqrt{T}+\sqrt{N}),
$
and so
\[
\left\Vert \frac{1}{T}\sum_{t=1}^{T}V_{t}V_{t}'\right\Vert =\left\Vert \frac{\mathcal{V}\mathcal{V}'}{T}\right\Vert =\frac{\|\mathcal{V}\|^{2}}{T}=O_{P}\left(1+\frac{N}{T}\right).
\]
Conclude that
\[
\left\Vert \frac{1}{NT}\sum_{t=1}^{T}V_{t}V_{t}'\right\Vert =O_{P}\left(\frac{1}{T}+\frac{1}{N}\right) = O_{P}\left(\frac{1}{T\wedge N}\right),
\]
which gives (\ref{eq: remains to prove}) and completes the proof.
\end{proof}

\begin{proof}[Proof of Theorem \ref{thm: classifier consistency new}]
First, we introduce some notations. For all $i=1,\dots,N$, denote $v_i = (v_{i1},\dots,v_{iT})'$ and for all $\gamma=1,\dots,G$, denote $\alpha_{\gamma} = (\alpha_{\gamma_1},\dots,\alpha_{\gamma T})'$. Then the model \eqref{eq: model} can be written as
$$
y_i = x_i\beta+ \alpha_{g_i} + v_i,\quad i=1,\dots,N.
$$
Further, let $F$ be a $T\times G$ matrix whose columns are orthonormal and span the vectors $\alpha_{1},\dots,\alpha_{G}$. Then for each $\gamma=1,\dots,G$, there exists a $G\times1$ vector $p_{\gamma}$ such that $\alpha_{\gamma}=\sqrt TF p_{\gamma}$. Here, without loss of generality, we assume that $\Lambda = N^{-1}\sum_{i=1}^N p_{g_i}p_{g_i}'$ is a diagonal matrix $\mathrm{diag}(\lambda_1,\dots,\lambda_G)$ with $\lambda_1\geq\dots\geq\lambda_G\geq 0$ since otherwise we can consider the eigenvalue decomposition $S\Lambda S'$ of the matrix $N^{-1}\sum_{i=1}^N p_{g_i}p_{g_i}'$ and replace $F$ and $p_{1},\dots,p_{G}$ by $FS$ and $S^{-1}p_1,\dots,S^{-1}p_{G}$, respectively. Also, throughout this proof, we use $c$ and $C$ to denote strictly positive constants that can change from place to place but can be chosen to depend on $C_1$, $C_2$, $C_3$, $C_4$, $c_1$, and $c_2$ only.

We will show below that
\begin{equation}\label{eq: A estimator bound}
\|\hat A_i - \alpha_{g_i}\| = o_P(\sqrt T)\quad\text{uniformly over }i=1,\dots,N.
\end{equation}
Therefore, by Assumption \ref{as: well separated groups}, there exists a constant $\bar C>0$ such that with probability approaching one,
\begin{equation}\label{eq: good separation 1}
\|\hat A_i - \hat A_j\| \leq \bar C\sqrt T\quad\text{for all $i,j=1,\dots,N$ such that $g_i = g_j$}
\end{equation}
and
\begin{equation}\label{eq: good separation 2}
\|\hat A_i - \hat A_j\| > 2\bar C\sqrt T\quad\text{for all $i,j=1,\dots,N$ such that $g_i \neq g_j$}.
\end{equation}
Now, assume that \eqref{eq: good separation 1} and \eqref{eq: good separation 2} are satisfied and let $\mathcal A_1(\lambda),\dots,\mathcal A_{m(\lambda)}(\lambda)$ be the partition of units generated by the Classification Algorithm from Section \ref{sub: classification} for any given value of the tuning parameter $\lambda$ and let $\tilde g(\lambda) = (\tilde g_1(\lambda),\dots,\tilde g_N(\lambda))'\in\mathbb \{1,\dots,m(\lambda)\}^N$ be the corresponding vector of group assignments, so that $\hat g = \tilde g(\hat\lambda)$. 
Then observe that for any $\lambda\leq \bar C\sqrt T$, the vector $\tilde g(\lambda)$ has the following property:
\begin{equation}\label{eq: key property classification}
\text{for all }i,j=1,\dots,N,\text{ we have }\tilde g_i(\lambda)\neq\tilde g_j(\lambda)\text{ if }g_i\neq g_j,
\end{equation}
i.e. units from different groups can not be classified into the same group. To see why this is so, suppose to the contrary that for some $\lambda\leq \bar C\sqrt T$, there exist $i,j=1,\dots,N$ such that $g_i\neq g_j$ but $\tilde g_i(\lambda)=\tilde g_j(\lambda)$. In this case, we can consider the first misclassified unit, say unit $i_0$, in the Classification Algorithm. This unit satisfies the following inequality for some $\mathcal A\subset\{1,\dots,N\}$:
\begin{equation}\label{eq: simple implication classification}
\left\|\hat A_{i_0} - \frac{1}{|\mathcal A|}\sum_{l\in\mathcal A}\hat A_l\right\|\leq \lambda,
\end{equation}
where  all units $l\in\mathcal A$ are coming from the same group and $i_0$ is coming from another group, i.e. $g_{i_0}\neq \gamma = g_l$ for all $l\in\mathcal A$. But then, for any $j\in \mathcal A$,
\begin{align*}
\|\hat A_{i_0} - \hat A_j\|
& \leq \left\|\hat A_{i_0} - \frac{1}{|\mathcal A|}\sum_{l\in\mathcal A}\hat A_l \right\|  + \left\|\frac{1}{|\mathcal A|}\sum_{l\in\mathcal A}\hat A_l - \hat A_j\right\|\\
& \leq \lambda + \frac{1}{\mathcal A}\sum_{l\in\mathcal A}\|\hat A_l - \hat A_j\| \leq 2\bar C\sqrt T,
\end{align*}
where we used the triangle inequality as well as \eqref{eq: good separation 1} and \eqref{eq: simple implication classification}. This contradicts \eqref{eq: good separation 2}, and so \eqref{eq: key property classification} indeed holds for all $\lambda \leq \bar C\sqrt T$.

In turn, \eqref{eq: key property classification} implies that if $\tilde g_i(\lambda)\neq \tilde g_j(\lambda)$ for some $i,j=1,\dots,N$ with $g_i=g_j$, then $m(\lambda) > G$. Therefore, given than $m(\hat\lambda) = G$, it follows that if $\hat\lambda\leq \bar C\sqrt T$, then the vector $\hat g = \tilde g(\hat \lambda)$ has the following property: for all $i,j=1,\dots,N$, we have $\hat g_i=\hat g_j$ if and only if $g_i=g_j$. But we claim that $\hat\lambda$ indeed satisfies the inequality $\hat\lambda\leq\bar C\sqrt T$. To see why this is so, observe that $m(\bar C\sqrt T)\geq G$ by \eqref{eq: key property classification} and $m(\bar C\sqrt T)\leq G$ by \eqref{eq: good separation 1}. Hence, $m(\bar C\sqrt T)= G$, and so $\hat\lambda\leq \bar C\sqrt T$, as required. Thus, the asserted claim of the theorem follows since  \eqref{eq: good separation 1} and \eqref{eq: good separation 2} hold with probability approaching one. It thus remains to prove \eqref{eq: A estimator bound}. We do so in eight steps.

\smallskip
{\bf Step 1.} Here we show that 
$$
\|\hat F_{h_i}'v_i\| = o_P(\sqrt T)\quad\text{uniformly over }i=1,\dots,N.
$$
To do so, note that for all $i=1,\dots,N$, the random vectors $v_i$ and $\hat F_{h_i}$ are independent conditional on $h_1,\dots,h_N$.  Therefore, by Assumption \ref{as: spectral v}(i) and (2.14) in \cite{V18}, for all $\epsilon > 0$, $i=1,\dots,N$, and $\gamma = 1,\dots,G$,
$$
\mathrm{P}(|\hat F_{h_i \gamma}'v_i|>\epsilon)\leq 2\exp(-c\epsilon^2),
$$
where $\hat F_{h_i \gamma}$ denotes the $\gamma$-th column of the matrix $\hat F_{h_i}$. Hence, by the union bound,
$$
\mathrm{P}\left(\max_{1\leq i\leq N}|\hat F_{h_i \gamma }'v_i| > \epsilon\right)\leq 2\exp(\log N - c\epsilon ^2).
$$
Combining this bound with Assumption \ref{as: N and T} gives the asserted claim of this step.

\smallskip
{\bf Step 2.} Here we show that
$$
\|\hat F_{h_i}'x_i(\tilde\beta^{h_i}-\beta)\| = o_P(\sqrt T)\quad\text{uniformly over }i=1,\dots,N.
$$
To do so, for all $i = 1,\dots,N$, denote $z_i = (z_{i1},\dots,z_{iT})'$ and 
$$
\bar x_i = \left(\sum_{m=1}^M\rho_{im}\alpha_{g_i1}^m,\dots,\sum_{m=1}^M\rho_{im}\alpha_{g_iT}^m\right)',
$$
so that $x_i = \bar x_i +z_i$. Then observe that
$$
\|\hat F_{h_i}'z_i\| = o_P(\sqrt T)\quad\text{uniformly over }i=1,\dots,N
$$
by the same argument as that in Step 1, with Assumption \ref{as: spectral v}(ii) playing the role of Assumption \ref{as: spectral v}(i). Also, 
$$
\|\hat F_{h_i}'\bar x_{i}\| = O(\sqrt T)\quad\text{uniformly over }i=1,\dots,N
$$
by Assumption \ref{as: bounded rho}. Moreover, $\|\tilde\beta^0 - \beta\|\vee \|\tilde\beta^1-\beta\| = o_P(1)$ by Theorem \ref{thm: consistency spectral estimator}. Combining these bounds gives the asserted claim of this step.

\smallskip
{\bf Step 3.} Here we show that
$$
\left(\frac{1}{N}\sum_{i=1}^N v_{it}^2\right)\vee\left(\frac{1}{N}\sum_{i=1}^N \|z_{it}\|^2\right) = O_P(1)\quad\text{uniformly over }t=1,\dots,T.
$$
To do so, note that by Assumption \ref{as: spectral v}(i) and Lemma 2.7.6 in \cite{V18}, for all $i=1,\dots,N$ and $t=1,\dots,T$, we have $\|v_{it}^2\|_{\psi_1}\leq C$, and so by Exercise 2.7.10 in \cite{V18}, $\|v_{it}^2-\mathbb E[v_{it}^2]\|_{\psi_1}\leq C$. Hence, by Bernstein's inequality (Corollary 2.8.3 in \cite{V18}), for any $\epsilon > 0$,
$$
\Pr\left(\frac{1}{N}\sum_{i=1}^N (v_{it}^2 - \mathbb E[v_{it}^2]) > \epsilon \right)\leq \exp[-c(\epsilon\wedge\epsilon^2)N],
$$
and so by the union bound,
$$
\Pr\left(\max_{1\leq t\leq T}\frac{1}{N}\sum_{i=1}^N (v_{it}^2 - \mathbb E[v_{it}^2]) > \epsilon\right) \leq T\exp[-c(\epsilon\wedge\epsilon^2)N].
$$
Hence, given that $\mathbb E[v_{it}^2]\leq C$ for all $i=1,\dots,N$ and $t=1,\dots,T$ by Assumption \ref{as: spectral v}(i), it follows from Assumption \ref{as: N and T} that
$$
\frac{1}{N}\sum_{i=1}^N v_{it}^2 = O_P(1)\quad\text{uniformly over }t=1,\dots,T.
$$
Thus, given that
$$
\frac{1}{N}\sum_{i=1}^N \|z_{it}^2\| = O_P(1)\quad\text{uniformly over }t=1,\dots,T
$$
can be proven using the same argument, with Assumption \ref{as: spectral v}(i) replaced by \ref{as: spectral v}(ii), the asserted claim of this step follows.

\smallskip
{\bf Step 4.} Here we show that
$$
\|\hat B^0 - \bar B^0\|\vee\|\hat B^1-\bar B^1\| =o_P(1),
$$
where
$$
\bar B^h = \frac{2}{NT}\sum_{i\in\mathcal I_h}(y_i - x_i\beta)(y_i - x_i\beta)',\quad h=0,1.
$$
To do so, fix $h=0,1$. Then
\begin{align*}
\hat B^h - \bar B^h
& = \frac{2}{NT}\sum_{i\in\mathcal I_h}x_i(\tilde\beta^h - \beta)(\tilde\beta^h - \beta)'x_i'\\
&\quad -\frac{2}{NT}\sum_{i\in\mathcal I_h}(y_i - x_i\beta)(\tilde\beta^h - \beta)'x_i'\\
&\quad -\frac{2}{NT}\sum_{i\in\mathcal I_h}x_i(\tilde\beta^h - \beta)(y_i - x_i\beta)'.
\end{align*}
All three terms here have the spectral norm $o_P(1)$ but for brevity, we only consider the first term, i.e. we prove that
\begin{equation}\label{eq: three terms consider one}
\left\| \frac{1}{NT}\sum_{i\in\mathcal I_h}x_i(\tilde\beta^h - \beta)(\tilde\beta^h - \beta)'x_i' \right\| = o_P(1)
\end{equation}
and note that the other two terms can be bounded similarly. To prove \eqref{eq: three terms consider one}, recall that $x_i = \bar x_i + z_i$ as in Step 2. Therefore, by Lemma \ref{lem: cs matrices}, the left-hand side of \eqref{eq: three terms consider one} is bounded from above by
$$
2\left\| \frac{1}{NT}\sum_{i\in\mathcal I_h}\bar x_i(\tilde\beta^h - \beta)(\tilde\beta^h - \beta)'\bar x_i' \right\|
+ 2\left\| \frac{1}{NT}\sum_{i\in\mathcal I_h}z_i(\tilde\beta^h - \beta)(\tilde\beta^h - \beta)'z_i' \right\|.
$$
Here,
$$
\left\| \frac{1}{NT}\sum_{i\in\mathcal I_h}\bar x_i(\tilde\beta^h - \beta)(\tilde\beta^h - \beta)'\bar x_i' \right\|
\leq \frac{1}{NT}\sum_{i\in\mathcal I_h}\|\bar x_i\|^2\|\tilde\beta^h - \beta\|^2 = o_P(1)
$$
by Theorem \ref{thm: consistency spectral estimator} and Assumption \ref{as: bounded rho}. Also, given that $(NT)^{-1}\sum_{i\in\mathcal I_h}z_i(\tilde\beta^h-\beta)(\tilde\beta^h-\beta)'z_i'$ is a positive-definite matrix,
\begin{align*}
&\left\| \frac{1}{NT}\sum_{i\in\mathcal I_h}z_i(\tilde\beta^h - \beta)(\tilde\beta^h - \beta)'z_i' \right\|\\
&\qquad \leq \textrm{tr}\left\{\frac{1}{NT}\sum_{i\in\mathcal I_h}z_i(\tilde\beta^h - \beta)(\tilde\beta^h - \beta)'z_i'   \right\}
 = \frac{1}{NT}\sum_{i\in\mathcal I_h}\textrm{tr}\left\{ (\tilde\beta^h - \beta)'z_i'z_i(\tilde\beta^h - \beta) \right\}\\
&\qquad = \frac{1}{NT}\sum_{i\in\mathcal I_h}\sum_{t=1}^T |z_{it}'(\tilde\beta^h - \beta)|^2
 \leq \frac{1}{NT}\sum_{i\in\mathcal I_h}\sum_{t=1}^T \|z_{it}\|^2\|\tilde\beta^h - \beta\|^2 = o_P(1)
\end{align*}
by Step 3 and Theorem \ref{thm: consistency spectral estimator}. Combining presented bounds, we obtain the asserted claim of this step.

\smallskip
{\bf Step 5.} Here we show that
$$
\|\hat B^0 - F\Lambda F'\|\vee \|\hat B^1 - F\Lambda F'\|=o_P(1),
$$
To do so, fix $h=0,1$. Then, denoting
$$
\check B = \frac{1}{NT}\sum_{i=1}^N(y_i - x_i\beta)(y_i - x_i\beta)',
$$
we have
\begin{equation}\label{eq: one more approximation classification argument}
\|\bar B^h - \check B\| = o_P(1),
\end{equation}
where the matrix $\bar B^h$ is defined in the previous step. To see why this is so, observe that
$$
\bar B^h - \check B = \frac{2}{NT}\sum_{i=1}^N \Big(1\{h_i = 1-h\} - \mathbb E[1\{h_i = 1-h\}] \Big)(y_i - x_i\beta)(y_i - x_i\beta)'.
$$
Thus, recalling that $y_i = x_i\beta + \alpha_{g_i} + v_i$, we have
\begin{align*}
\|\bar B^h-\check B\| & \leq \left\|\frac{2}{NT}\sum_{i=1}^N \Big(1\{h_i = 1-h\} - \mathbb E[1\{h_i = 1-h\}] \Big) \alpha_{g_i}\alpha_{g_i}' \right\|\\
&\quad + \left\|\frac{2}{NT}\sum_{i=1}^N \Big(1\{h_i = 1-h\} - \mathbb E[1\{h_i = 1-h\}] \Big) v_{i}v_{i}' \right\|\\
&\quad + \left\|\frac{4}{NT}\sum_{i=1}^N \Big(1\{h_i = 1-h\} - \mathbb E[1\{h_i = 1-h\}] \Big) \alpha_{g_i}v_{i}' \right\|.
\end{align*}
Here, $\|\alpha_{g_i}\alpha_{g_i}'\| = \|\alpha_{g_i}\|^2 \leq CT$ for all $i=1,\dots,N$, and so applying the expectation version of Bernstein's matrix inequality (Exercise 5.4.11 in \cite{V18}) conditionally on $(x_1,y_1),\dots,(x_N,y_N)$,
\begin{multline*}
\mathbb E\left[ \left\| \frac{1}{NT}\sum_{i=1}^N \Big(1\{h_i = 1-h\} - \mathbb E[1\{h_i = 1-h\}] \Big) \alpha_{g_i}\alpha_{g_i}'  \right\| \right] \\
\leq C\left(\frac{\sqrt{NT^2\log T}}{NT} + \frac{T\log T}{NT}\right) = o(1),
\end{multline*}
where the last bound follows from Assumption \ref{as: N and T}. Hence,
$$
\left\| \frac{1}{NT}\sum_{i=1}^N \Big(1\{h_i = 1-h\} - \mathbb E[1\{h_i = 1-h\}] \Big) \alpha_{g_i}\alpha_{g_i}'  \right\| = o_P(1)
$$
by Markov's inequality. In addition,
$$
\left\|\frac{1}{NT}\sum_{i=1}^N \Big(1\{h_i = 1-h\} - \mathbb E[1\{h_i = 1-h\}] \Big) v_{i}v_{i}' \right\|
\leq \left\|\frac{1}{NT}\sum_{i=1}^N v_i v_i'\right\| = o_P(1)
$$
by Lemma \ref{lem: cs matrices} and \eqref{eq: useful bound from proof of spectral consistency} in the proof of Theorem \ref{thm: consistency spectral estimator}. Moreover, by Lemma \ref{lem: cs matrices},
\begin{multline*}
\left\|\frac{1}{NT}\sum_{i=1}^N \Big(1\{h_i = 1-h\} - \mathbb E[1\{h_i = 1-h\}] \Big) \alpha_{g_i}v_{i}' \right\|\\
\leq \sqrt{\left\|\frac{1}{NT}\sum_{i=1}^N \alpha_{g_i}\alpha_{g_i}'\right\|}\sqrt{\left\|\frac{1}{NT}\sum_{i=1}^N v_{i}v_{i}'\right\|} = o_P(1).
\end{multline*}
Combining these bounds, we obtain \eqref{eq: one more approximation classification argument}.

Now, given that $\check B = (NT)^{-1}\sum_{i=1}^N (\alpha_{g_i}+v_i)(\alpha_{g_i}+v_i)'$, $\alpha_{g_i} =\sqrt T Fp_{g_i}$, and $N^{-1}\sum_{i=1}^N p_{g_i}p_{g_i}' = \Lambda$, we have
$$
\|\check B - F\Lambda F'\| \leq \left\|\frac{2}{NT}\sum_{i=1}^N \alpha_{g_i}v_i'\right\| + \left\|\frac{1}{NT}\sum_{i=1}^N v_iv_i'\right\| = o_P(1)
$$
by the arguments above. Combining this bound with \eqref{eq: one more approximation classification argument} and Step 4 and using the triangle inequality gives the asserted claim of this step.

\smallskip
{\bf Step 6.} Here we show that there exist orthogonal $G\times G$ matrices $O_0$ and $O_1$ such that for all $\gamma=1,\dots,G$ and $h=0,1$,
$$
\|(\hat F_{h}O_h-F)'F p_{\gamma}\| = o_P(1).
$$
To do so, fix $h=0,1$ and note that by Step 5, there exists a sequence $\{\psi_n\}_{n\geq 1}$ of positive numbers such that $\psi_n\to0$ as $n\to\infty$ and
\begin{equation}\label{eq: matrix estimation psi}
\mathrm P(\|\hat B^h - F\Lambda F'\|>\psi_n)\leq \psi_n.
\end{equation}
Also, recall that $\Lambda = \textrm{diag}(\lambda_1,\dots,\lambda_G)$, where $\lambda_1\geq\dots\lambda_G\geq 0$. In addition, set $\lambda_0 = +\infty$ and $\lambda_{G+1} = 0$ and let $\bar G$ be the largest integer $\gamma =0,\dots,G$ such that $\lambda_{\gamma} - \lambda_{\gamma +1} > \sqrt{\psi_n}$. Then $\lambda_{\bar G+1}\leq G\sqrt{\psi_n}$. In addition, by the Davis-Kahan theorem (see Theorem 2 in \cite{YWS15}), there exists an orthogonal $G\times G$ matrix $O_h$ such that, denoting $\tilde F = \hat F_h O_h$ and letting $\tilde F_{\gamma}$ and $F_{\gamma}$ denote the $\gamma$-th columns of $\tilde F$ and $F$, respectively, we have
$$
\max_{1\leq \gamma\leq \bar G}\|\tilde F_{\gamma} - F_{\gamma}\| \leq 4\sqrt G\|\hat B^h - F\Lambda F'\| / \sqrt{\psi_n},
$$
and so by \eqref{eq: matrix estimation psi},
$$
\max_{1\leq \gamma\leq \bar G}\|\tilde F_{\gamma} - F_{\gamma}\| \leq 4\sqrt G \sqrt{\psi_n}
$$
with probability at least $1 - \psi_n$. Therefore,
\begin{align*}
&\frac{1}{N}\sum_{i=1}^N \|(\tilde F-F)'F p_{g_i}\|^2
 = \mathrm{tr}\left(\frac{1}{N}\sum_{i=1}^N(\tilde F-F)'F p_{g_i}p_{g_i}'F' (\tilde F-F)\right)\\
&\qquad  = \mathrm{tr}\left( (\tilde F-F)'F\Lambda F' (\tilde F-F) \right)
 = \textrm{tr}\left( (\tilde F - F)'\sum_{\gamma_1=1}^G \lambda_{\gamma_1}F_{\gamma_1}F_{\gamma_1}' (\tilde F - F) \right) \\
 &\qquad  =  \sum_{\gamma_1,\gamma_2=1}^G \lambda_{\gamma_1}\{F_{\gamma_1}'(\tilde F_{\gamma_2} - F_{\gamma_2})\}^2 \leq \sum_{\gamma_1=1}^{\bar G}\sum_{\gamma_2=1}^G \lambda_{\gamma_1}\{F_{\gamma_1}'(\tilde F_{\gamma_2} - F_{\gamma_2})\}^2 + G^3\sqrt{\psi_n}\\
&\qquad \leq \sum_{\gamma_1=1}^{\bar G} \lambda_{\gamma_1}\{F_{\gamma_1}'(\tilde F_{\gamma_1} - F_{\gamma_1})\}^2 + \sum_{\gamma_1=1}^{\bar G}\sum_{\gamma_2\neq \gamma_1}\lambda_{\gamma_1}\{F_{\gamma_1}'\tilde F_{\gamma_2} \}^2 + G^3\sqrt{\psi_n} = o_P(1),
\end{align*}
where the middle term is bounded by Lemma \ref{lem: orthogonality of two bases} and all $\lambda_{\gamma}$ are bounded by
$$
\max_{1\leq\gamma\leq G}\lambda_{\gamma} = \|\Lambda\| = \|F\Lambda F'\| =\left\|\frac{1}{NT}\sum_{i=1}^N \alpha_{g_i}\alpha_{g_i}'\right\|\leq C,
$$
where the last inequality follows from Assumption \ref{as: bounded rho}(ii). Combining this bound with Assumption \ref{as: group sizes new}, we conclude that for all $\gamma=1,\dots,G$,
\begin{align*}
\|(\hat F_{h}O_h-F)'F p_{\gamma}\|^2 & \leq \frac{C}{N}\sum_{i=1}^N \|(\hat F_{h}O_h-F)'F p_{g_i}\|^2\\
& = \frac{C}{N}\sum_{i=1}^N \|(\tilde F-F)'F p_{g_i}\|^2 = o_P(1),
\end{align*}
which gives the asserted claim of this step.

\smallskip
{\bf Step 7.} Here we show that for all $\gamma = 1,\dots,G$ and $h=0,1$,
$$
\|(\hat F_h O_h - F)p_{\gamma}\| = o_P(1).
$$
To do so, fix $h=0,1$ and note that
\begin{align*}
\frac{1}{N}\sum_{i=1}^N \|(\hat F_h O_h - F)p_{g_i}\|^2
& = \mathrm{tr}\left(\frac{1}{N}\sum_{i=1}^N (\hat F_h O_h - F)p_{g_i}p_{g_i}' (\hat F_h O_h - F)' \right)\\
& = \mathrm{tr}\Big( (\hat F_h O_h - F)\Lambda (\hat F_h O_h - F)' \Big)\\
& = \mathrm{tr}\Big( \Lambda (\hat F_h O_h - F)'(\hat F_h O_h - F)  \Big)\\
& = \sum_{\gamma=1}^G \lambda_{\gamma} \|(\hat F_h O_h - F)_{\gamma}\|^2 = o_P(1)
\end{align*}
by the same argument as that used in Step 6. Combining this bound with Assumption \ref{as: group sizes new}, we conclude that for all $\gamma=1,\dots,G$,
$$
\|(\hat F_h O_h - F)p_{\gamma}\|^2 \leq \frac{C}{N}\sum_{i=1}^N \|(\hat F_{h}O_h-F) p_{g_i}\|^2 = o_P(1),
$$
which gives the asserted claim of this step.

\smallskip
{\bf Step 8.} Here we complete the proof. To do so, we have
\begin{align*}
\|\hat A_i - \alpha_{g_i}\|
& = \| \hat F_{h_i}\hat F_{h_i}'(y_i -x_i\tilde\beta^{h_i}) -\sqrt T Fp_{g_i} \|\\
& = \sqrt T \| \hat F_{h_i}\hat F_{h_i}'Fp_{g_i} - Fp_{g_i} \| + o_P(\sqrt T)\\
& = \sqrt T \| \hat F_{h_i}O_{h_i} (\hat F_{h_i} O_{h_i})'Fp_{g_i} - Fp_{g_i} \| + o_P(\sqrt T)\\
& = \sqrt T \| \hat F_{h_i}O_{h_i} p_{g_i} - F p_{g_i} \| + o_P(\sqrt T) = o_P(\sqrt T)
\end{align*}
uniformly over $i=1,\dots,N$, where the second line follows from Steps 1 and 2, the third from $O_{h_i}$ being orthogonal, and the fourth from Steps 6 and 7. This gives the asserted claim of this step and completes the proof of the theorem.
\end{proof}

\begin{proof}[Proof of Theorem \ref{thm: main consistency}]
Throughout the proof, for any $\mathrm a\in\mathbb R^T$ and $R>0$, we use $B(\mathrm a,R)$ to denote the ball in $\mathbb R^T$ with center $\mathrm a$ and radius $R$, i.e. $B(\mathrm a,R) = \{u\in\mathbb R^T\colon \|u-\mathrm a\|\leq R\}$. Also, we use $c$ and $C$ to denote strictly positive constants that can change from place to place but can be chosen to depend on $C_1$, $C_2$, $C_3$, $C_4$, $C_5$, $c_1$, $c_2$, and $c_3$ only.

We proceed in five steps. The second and third steps follow closely the arguments in the proof of Theorem 1 in \cite{BM15} but we provide all the details for reader's convenience.

\smallskip
{\bf Step 1.} Here we show that
\begin{equation}\label{eq: thm 3 step 0}
\max_{1\leq i\leq N}\mathbb E\left[\sum_{t=1}^T v_{it}^2\right]\leq CT\quad\text{and}\quad\max_{1\leq i\leq N}\max_{j\neq i}\mathbb E\left[\left|\sum_{t=1}^T v_{it}v_{jt}\right|\right]\leq C\sqrt T.
\end{equation}
To prove the first inequality, note that for all $i=1,\dots,N$, we have $\mathbb E[v_{it}^2] \leq C\|v_{it}\|_{\psi_2}^2\leq C$ by (2.15) in \cite{V18} and Assumption \ref{as: spectral v}(i). To prove the second inequality note that for all $i,j=1,\dots,N$ with $i\neq j$, we have
$$
\mathrm P\left(\left|\sum_{t=1}^T v_{it}v_{jt}\right| > \epsilon \mid v_{j1},\dots,v_{jT}\right)\leq 2\exp\left(-\frac{c\epsilon^2}{\sum_{t=1}^T v_{jt}^2}\right)
$$
for all $\epsilon > 0$ by (2.14) in \cite{V18} and Assumption \ref{as: spectral v}(i). Therefore, given that the function $r\mapsto \exp(-c/r)$ is concave on $\mathbb R_{+}$ for any $c>0$, we have
$$
\mathrm P\left(\left|\sum_{t=1}^T v_{it}v_{jt}\right| > \epsilon\right)\leq 2\exp\left(-\frac{c\epsilon^2}{\sum_{t=1}^T \mathbb E[v_{jt}^2]}\right)\leq 2\exp\left(-\frac{c\epsilon^2}{T}\right)
$$
by Jensen's inequality and the first inequality in \eqref{eq: thm 3 step 0}. Hence,
$$
\mathbb E\left[\left|\sum_{t=1}^T v_{it}v_{jt}\right|\right] =\int_0^{\infty}\mathrm P\left(\left|\sum_{t=1}^T v_{it}v_{jt}\right| > \epsilon\right)d\epsilon \leq C\sqrt T,
$$
which gives the second inequality in \eqref{eq: thm 3 step 0}.

\smallskip
{\bf Step 2.} For all $b\in\mathcal B$, $a=\{a_{\gamma t}\}_{\gamma,t=1}^{G,T}\in\mathcal A_{G,T}$, and $\nu=\{\nu_i\}_{i=1}^N\in\{1,\dots,G\}^N$, denote
$$
Q(b,a,\nu) = \frac{1}{NT}\sum_{i=1}^N\sum_{t=1}^T(y_{it} - x_{it}'b - a_{\nu_i t})^2
$$
and
$$
\bar Q(b,a,\nu) = \frac{1}{NT}\sum_{i=1}^N\sum_{t=1}^T(x_{it}'(\beta - b) + \alpha_{g_i t} - a_{\nu_i t})^2 + \frac{1}{NT}\sum_{i=1}^N\sum_{t=1}^T v_{it}^2.
$$
In this step, we show that
$$
Q(b,a,\nu) - \bar Q(b,a,\nu) = o_P(1)
$$
uniformly over $b\in\mathcal B$, $a\in\mathcal A_{G,T}$, and $\nu \in\{1,\dots,G\}^N$. To do so, note that
$$
Q(b,a,\nu) - \bar Q(b,a,\nu) = -\frac{2}{NT}\sum_{i=1}^N\sum_{t=1}^T v_{it}\Big( x_{it}'(b-\beta) + a_{\nu_i t} - \alpha_{g_i t}\Big).
$$
Here, we have
\begin{align*}
\mathbb E\left[\left(\frac{1}{NT}\sum_{i=1}^N\sum_{t=1}^T v_{it}\alpha_{g_it}\right)^2\right] 
&= \frac{1}{(NT)^2}\sum_{i=1}^N\mathbb E\left[\left( \sum_{t=1}^T v_{it}\alpha_{g_it} \right)^2\right]\\
& \leq \frac{C}{(NT)^2} \sum_{i=1}^N\sum_{t=1}^T \alpha_{g_it}^2 \leq \frac{C}{NT}\to0,
\end{align*}
where the first inequality follows from Assumption \ref{as: spectral v}(i) and (2.15) in \cite{V18} and the second from Assumption \ref{as: bounded rho}(ii). Thus, by Markov's inequality,
$$
\frac{1}{NT}\sum_{i=1}^N\sum_{t=1}^T v_{it}\alpha_{g_it} = o_P(1).
$$
Further, denoting $\bar x_{it} = \sum_{m=1}^M \rho_{im}\alpha_{g_it}^m$, so that $x_{it} = \bar x_{it} + z_{it}$, we have
$$
\left\| \frac{1}{NT}\sum_{i=1}^N\sum_{t=1}^T v_{it}\bar x_{it}'(b-\beta) \right\| 
\leq C\left\| \frac{1}{NT}\sum_{i=1}^N\sum_{t=1}^T v_{it}\bar x_{it} \right\|  = o_P(1)
$$
uniformly over $b\in\mathcal B$ by the same argument as that we have just used and Assumptions \ref{as: bounded rho} and \ref{as: compact}. Moreover,
$$
\left\| \frac{1}{NT}\sum_{i=1}^N\sum_{t=1}^T v_{it}z_{it}'(b-\beta) \right\| 
\leq C\left\| \frac{1}{NT}\sum_{i=1}^N\sum_{t=1}^T v_{it}z_{it} \right\|  = o_P(1)
$$
uniformly over $b\in\mathcal B$ by Assumptions \ref{as: spectral vz}(i) and \ref{as: compact}.
In addition,
\begin{align*}
& \left|\frac{1}{NT}\sum_{i=1}^N\sum_{t=1}^T v_{it}a_{\nu_i t}\right|
 = \left|\frac{1}{NT}\sum_{\gamma = 1}^G\sum_{t=1}^Ta_{\gamma t}\sum_{i\colon \nu_i = \gamma} v_{it}\right|\\
& \qquad \leq \frac{1}{NT}\sum_{\gamma = 1}^G\sqrt{\sum_{t=1}^T a_{\gamma t}^2}\sqrt{\sum_{t=1}^T \left(\sum_{i\colon \nu_i = \gamma} v_{it}\right)^2} 
\leq \frac{C}{N\sqrt T}\sum_{\gamma = 1}^G\sqrt{\sum_{t=1}^T\left(\sum_{i\colon \nu_i = \gamma} v_{it}\right)^2} \\
& \qquad \leq \frac{C}{N\sqrt T}\sum_{\gamma = 1}^G\sqrt{\sum_{i,j = 1}^N\left|\sum_{t=1}^T v_{it} v_{jt}\right|},
\end{align*}
where the second line follows from the Cauchy-Schwarz inequality and Assumption \ref{as: compact}. In turn,
$$
\mathbb E\left[\sum_{i,j = 1}^N\left|\sum_{t=1}^T v_{it} v_{jt}\right|\right] \leq C(N^2 \sqrt{T} + NT)
$$
by Step 1. Hence, by Markov's inequality,
$$
\left|\frac{1}{NT}\sum_{i=1}^N\sum_{t=1}^Tv_{it}a_{\nu_i t}\right| = o_P(1)
$$
uniformly over $a\in\mathcal A_{G,T}$ and $\nu\in\{1,\dots,G\}^N$ since $T\to\infty$ as $N\to\infty$. Combining presented bounds gives the asserted claim of this step.

\smallskip
{\bf Step 3.} Here we show that
$$
\bar Q(b,a,\nu) - \bar Q(\beta,\alpha,g) \geq c_3\|b-\beta\|^2
$$ 
for all $b\in\mathcal B$, $a\in\mathcal A_{G,T}$, and $\nu\in\{1,\dots,G\}^N$ with probability $1 - o(1)$, where $\alpha = \{\alpha_{\gamma t}\}_{\gamma,t=1}^{G,T}$. To do so, note that
\begin{align*}
\bar Q(b,a,\nu) - \bar Q(\beta,\alpha,g)
& = \frac{1}{NT}\sum_{i=1}^N\sum_{t=1}^T\Big( x_{it}'(b-\beta) + a_{\nu_i t} - \alpha_{g_it} \Big)^2\\
&\geq \min_{\tilde a\in\mathcal A_{G,T}}\frac{1}{NT}\sum_{i=1}^N\sum_{t=1}^T\Big( x_{it}'(b-\beta) + \tilde a_{\nu_i t} - \alpha_{g_it} \Big)^2\\
&\geq \frac{1}{NT}\sum_{i=1}^N\sum_{t=1}^T\Big(( x_{it} - \bar x_{\nu,\nu_i,g_i,t})'(b-\beta)\Big)^2 \geq c_3\|b-\beta\|^2
\end{align*}
for all $b\in\mathcal B$, $a\in\mathcal A_{G,T}$, and $\nu\in\{1,\dots,G\}^N$ with probability $1 - o(1)$ by Assumption \ref{as: eigenvalues}. The asserted claim of this step follows.

\smallskip
{\bf Step 4.} Denote
$$
\mathcal G_{\gamma} = \Big\{i=1,\dots,N\colon \hat g_i =\gamma \Big\},\quad\text{for all }\gamma = 1,\dots,G
$$
and
$$
\check{\alpha}_{\gamma t} = \frac{1}{|\mathcal G_{\gamma}|}\sum_{i\in\mathcal G_{\gamma}}\alpha_{g_i t},\quad\text{for all }\gamma = 1,\dots,G, \ t = 1,\dots,T.
$$
In this step, we show that
\begin{equation}\label{g-alpha approximation}
\frac{1}{NT}\sum_{i=1}^N\sum_{t=1}^T(\check{\alpha}_{\hat g_i t} - \alpha_{g_i t})^2 = o_P(1).
\end{equation}
To do so, note that by \eqref{eq: A estimator bound} in the proof of Theorem \ref{thm: classifier consistency new}, there exist sequences $\{\lambda_N\}_{N\geq 1}$ and $\{\delta_N\}_{N\geq 1}$ of positive numbers such that $\lambda_N/\sqrt T\to0$ and $\delta_N\to0$ as $N\to\infty$ and the event
\begin{equation}\label{eq: proof thm 3 step 3 start}
\max_{1\leq i\leq N}|\hat A_i - \alpha_{g_i}| \leq \lambda_N
\end{equation}
holds with probability at least $1-\delta_N$ (note that the proof of \eqref{eq: A estimator bound} did not use Assumption \ref{as: well separated groups}, which is not imposed here). We claim that on the event \eqref{eq: proof thm 3 step 3 start}, $\hat\lambda\leq C_G\lambda_N$, where $C_G>0$ is a constant depending only on $G$. To see why this is so, suppose that \eqref{eq: proof thm 3 step 3 start} is satisfied and consider the subset $\mathrm A_0$ of the set $\{\alpha_1,\dots,\alpha_G\}$ and a constant $R_0>0$ defined by the following algorithm:

\medskip
{\em Step 1:} set $\gamma=1$, $\mathrm A_{1} = \{\alpha_1,\dots,\alpha_G\}$, and $R_1 = \lambda_N$;

{\em Step 2:} if
\begin{equation}\label{eq: exploding balls}
\Big\{ \cup_{\mathrm a\in\mathrm A_{\gamma}}\big(B(\mathrm a,8R_{\gamma} + 6\lambda_N)\setminus B(\mathrm a,R_{\gamma})\big) \Big\}\cap \mathrm A_{\gamma}=\emptyset,
\end{equation}

\ \ \ \ \  \ \  \  \  \  \  then set $\mathrm A_0 = \mathrm A_{\gamma}$ and $R_0 = R_{\gamma}$ and stop;

{\em Step 3:} replace $\gamma$ by $\gamma + 1$ and set $R_{\gamma} = 9R_{\gamma-1} + 6\lambda_N$;

{\em Step 4:} let $\mathrm A_{\gamma}$ be the smallest subset of $A_{\gamma-1}$ such that
$$
\{\alpha_1,\dots,\alpha_G\}\subset \cup_{\mathrm a\in\mathrm A_{\gamma}} B(\mathrm a,R_{\gamma});\footnote{If there are several smallest sets, choose one of them at random.}
$$

{\em Step 5:} go to step 2.

\medskip
\noindent
Observe that on Step 4 of this algorithm, we have $|\mathrm A_{\gamma}|\leq |\mathrm A_{\gamma-1}| - 1$. Indeed, if Step 4 is performed for given $\gamma$, it follows from Step 2 that there exist $\mathrm a_1,\mathrm a_2\in\mathrm A_{\gamma-1}$ such that $\mathrm a_2\in B(\mathrm a_1,8R_{\gamma-1} + 6\lambda_N)$, and so $B(\mathrm a_2,R_{\gamma-1})\subset B(\mathrm a_1,9R_{\gamma-1} + 6\lambda_N) = B(\mathrm a_1,R_{\gamma})$, letting us drop $\mathrm a_2$ from $\mathrm A_{\gamma-1}$ while constructing $\mathrm A_{\gamma}$ and yielding $|\mathrm A_{\gamma}|\leq |\mathrm A_{\gamma-1}| - 1$. In turn, the latter implies that the algorithm will stop in a finite number of steps and, in fact, Step 3 will be performed at most $G-1$ times. Hence, $R_0$ satisfies $R_0\leq \bar C_G\lambda_N$, where $\bar C_G>0$ is a constant depending only on $G$. In addition,
\begin{equation}\label{eq: covering balls}
\{\alpha_1,\dots,\alpha_G\}\subset \cup_{\mathrm a\in\mathrm A_{0}} B(\mathrm a,R_{0})
\end{equation}
by construction. 

Further, since \eqref{eq: exploding balls} is satisfied with $\gamma = 0$ by construction, it follows that
\begin{equation}\label{eq: separated balls}
\mathrm a_1,\mathrm a_2\in\mathrm A_0,\text{ implies that }\|\mathrm a_2 - \mathrm a_1\|\leq R_0\text{ or }\|\mathrm a_2 - \mathrm a_1\|>8R_0 + 6\lambda_N. 
 \end{equation}
 This allows us to partition $\mathrm A_0$ into equivalence subclasses as follows. Say that $\mathrm a_1\sim \mathrm a_2$ if $\|\mathrm a_2 - \mathrm a_1\|\leq R_0$. Then for any $\mathrm a_1,\mathrm a_2,\mathrm a_3\in\mathrm A_0$, we have that $\mathrm a_1\sim \mathrm a_2$ and $\mathrm a_2\sim\mathrm a_3$ imply $\mathrm a_1 \sim \mathrm a_3$, meaning that the relation $\sim$ is actually an equivalence relation. Thus, we can partition $\mathrm A_0$ into $k\leq G$ equivalence subclasses $\mathrm A_{0,1},\dots,\mathrm A_{0,k}$ such that for any $\mathrm a_1,\mathrm a_2\in\mathrm A_0$ we have $\mathrm a_1\sim\mathrm a_2$ if and only if $\mathrm a_1$ and $\mathrm a_2$ belong to the same subclass. Then it follows from \eqref{eq: proof thm 3 step 3 start},  \eqref{eq: covering balls}, and \eqref{eq: separated balls} that for each $i=1,\dots,N$, there exists a unique $\gamma(i)\in\{1,\dots,k\}$ such that $\|\hat A_i - \mathrm a\|\leq R_0 + \lambda_N$ for some $\mathrm a \in \mathrm A_{0,\gamma(i)}$. Thus, for any $i_1,i_2=1,\dots,N$, we have that $\|\hat A_{i_1} - \hat A_{i_2}\|\leq 3R_0 + 2\lambda_N$ if $\gamma(i_1) = \gamma(i_2)$ and that $\|\hat A_{i_1} - \hat A_{i_2}\| > 6R_0 + 4\lambda_N$ if $\gamma(i_1) \neq \gamma(i_2)$. In turn, the latter implies that if we run the Classification Algorithm from Section \ref{sec: estimator} with $\lambda = 3R_0 + 2\lambda_N$, we obtain $m(\lambda) = k$ groups $\mathcal A_{1},\dots,\mathcal A_k$ such that any two units $i_1,i_2 = 1,\dots,N$ are classified to the same group if and only if $\gamma(i_1)=\gamma(i_2)$. To see why this is so, suppose that for some $\gamma = 1,\dots,k$, we have two units $i_1,i_2$ that are classified to the same group $\mathcal A_{\gamma}$ but are such that $\gamma(i_1)\neq \gamma(i_2)$. For this $\gamma$, let $i_1\to i_2 \to\dots\to i_{|\mathcal A_{\gamma}|}$ be the order in which units are added to $\mathcal A_{\gamma}$ by the Classification Algorithm, and let $r$ be the smallest number in the set $\{2,\dots,|\mathcal A_{\gamma}|\}$ such that $\gamma(i_r)\neq \gamma(i_{r-1})$. Then $\|\hat A_{i_r} - \hat A_{i_1}\| > 6R_0 + 4\lambda_N$ and $\|\hat A_{i_l} - \hat A_{i_1}\|\leq 3R_0 + 2\lambda_N$ for all $l=1,\dots,r-1$. This implies that
$$
\left\| \hat A_{i_r} - \frac{1}{r-1}\sum_{l=1}^{r-1}\hat A_{i_l} \right\| > (6R_0 + 4\lambda_N) - (3R_0 + 2\lambda_N) = 3R_0 + 2\lambda_N,
$$
yielding a contradiction. Now suppose that there are two units $i_1,i_2$ that are classified to different groups $\mathcal A_{\gamma_1}$ and $\mathcal A_{\gamma_2}$ but are such that $\gamma(i_1)=\gamma(i_2)$. For these $\gamma_1$ and $\gamma_2$, let $i_1\to i_2 \to \dots\to i_{|\mathcal A_{\gamma_1}| + |\mathcal A_{\gamma_2}|}$ be the order in which units are added to $\mathcal A_{\gamma_1}$ and $\mathcal A_{\gamma_2}$ by the Classification Algorithm. Assume, without loss of generality, that the first unit, $i_1$, is added to $\mathcal A_{\gamma_1}$, and let $r$ be the smallest number in the set $\{2,\dots,|\mathcal A_{\gamma_1}| + |\mathcal A_{\gamma_2}|\}$ such that the unit $i_r$ is added to the set $\mathcal A_{\gamma_2}$. Then $\|\hat A_{i_r} - \hat A_{i_l}\|\leq 3R_0 + 2\lambda_N$ for all $l=1,\dots,r-1$, and so
$$
\left\| \hat A_{i_r} - \frac{1}{r-1}\sum_{l=1}^{r-1}\hat A_{i_l} \right\| \leq 3R_0 + 2\lambda_N,
$$
yielding a contradiction, and finishing the proof of our claim that the Classification Algorithm from Section \ref{sec: estimator} with $\lambda = 3R_0 + 2\lambda_N$ yields $m(\lambda) = k$ groups $\mathcal A_{1},\dots,\mathcal A_k$ such that any two units $i_1,i_2 = 1,\dots,N$ are classified to the same group if and only if $\gamma(i_1)=\gamma(i_2)$. The latter than implies that $m(3R_0 +2\lambda_N) = k \leq G$, and so $\hat\lambda \leq 3R_0 + 2\lambda_N \leq C_G\lambda_N$ for some $C_G>0$ depending only on $G$, as desired.

Next, we claim that \eqref{eq: proof thm 3 step 3 start} implies even more: it implies that there exists a constant $\tilde C_G>0$ depending only on $G$ such that for any $\gamma=1,\dots,G$ and any $i_1,i_2\in\mathcal G_{\gamma}$, we have 
\begin{equation}\label{eq: thm 3 step 3 crucial}
\|\alpha_{g_{i_2}} - \alpha_{g_{i_1}}\|\leq \tilde C_G\lambda_N.
\end{equation}
To see why this is so, suppose again that \eqref{eq: proof thm 3 step 3 start} is satisfied. As we have proven above, it is then not possible that the Classification Algorithm with $\lambda = \hat\lambda \leq 3R_0 + 2\lambda_N$  generates groups $\mathcal A_{1},\dots,\mathcal A_{m(\lambda)}$ such that there are two units $i_1,i_2$ the are classified to the same group $\mathcal A_{\gamma}$ but are such that $\gamma(i_1)\neq \gamma(i_2)$. In turn, for any $i_1,i_2$ such that $\gamma(i_1)=\gamma(i_2)$, we have $\|\hat A_{i_2} - \hat A_{i_1}\| \leq 3R_0 + 2\lambda_N$, and so $\|\alpha_{g_{i_2}} - \alpha_{g_{i_1}}\| \leq 3R_0 + 4\lambda_N \leq \tilde C_G\lambda_N$ for some $\tilde C_G>0$, as desired.

We are now ready to finish this step. On \eqref{eq: proof thm 3 step 3 start}, by the triangle inequality and \eqref{eq: thm 3 step 3 crucial}, we have
\begin{align*}
\left(\frac{1}{T}\sum_{t=1}^T(\check\alpha_{\hat g_it} - \alpha_{g_i t})^2\right)^{1/2}
&\leq \frac{1}{|\mathcal G_{\hat g_i}|}\sum_{j\in\mathcal G_{\hat g_i}}\left(\frac{1}{T}\sum_{t=1}^T(\alpha_{g_jt}-\alpha_{g_it})^2\right)^{1/2}\\
& = \frac{1}{\sqrt T|\mathcal G_{\hat g_i}|}\sum_{j\in\mathcal G_{\hat g_i}}\|\alpha_{g_j}-\alpha_{g_i}\|
 \leq \frac{1}{\sqrt T|\mathcal G_{\hat g_i}|}\sum_{j\in\mathcal G_{\hat g_i}}\tilde C_G \lambda_N = o(1)
\end{align*}
uniformly over $i=1,\dots,N$. This gives the asserted claim of this step because \eqref{eq: proof thm 3 step 3 start} holds with probability approaching one.

\smallskip
{\bf Step 5.} Here we finish the proof. We have
\begin{align*}
\bar Q(\hat\beta,\hat\alpha,\hat g) &
= Q(\hat\beta,\hat\alpha,\hat g) + o_P(1) \\
& \leq Q(\beta,\check\alpha,\hat g) + o_P(1) = \bar Q(\beta,\check\alpha,\hat g) + o_P(1),
\end{align*}
where the equalities follow from Step 2 and the inequality from \eqref{eq: pooled ols}. Thus, for some constant $c>0$,
\begin{align*}
c\|\hat\beta - \beta\|^2 & 
\leq \bar Q(\hat\beta,\hat\alpha,\hat g) - \bar Q(\beta,\alpha,g) \\
& = \bar Q(\hat\beta,\hat\alpha,\hat g) - \bar Q(\beta,\check\alpha,\hat g) + \bar Q(\beta,\check\alpha,\hat g) - \bar Q(\beta,\alpha,g) \leq o_P(1)
\end{align*}
by Steps 3 and 4. The asserted claim of the theorem follows.
\end{proof}

\newpage

\appendix

\begin{center}
\large{\textbf{Supplementary Materials for ``Spectral and Post-Spectral Estimators for Grouped Panel Data Models'' by D. Chetverikov and E. Manresa}}
\end{center}


\section{Technical Lemmas}\label{sec: technical lemmas}
\begin{lemma}\label{lem: eigenvalue comparison}
Let $A$ and $B$ be two symmetric $N\times N$ matrices, and suppose that the matrix $B$ has only $K\leq N$ non-zero eigenvalues $\lambda_1^B,\dots,\lambda_K^B$. Further, let $\lambda_1^A,\dots,\lambda_K^A$ be the $K$ largest in absolute value eigenvalues of the matrix $A$. Then
$$
\left| \sum_{k=1}^K \lambda_k^A - \sum_{k=1}^K\lambda_k^B \right| \leq 3K\|A-B\|.
$$
\end{lemma}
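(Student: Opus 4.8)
The plan is to compare both sums to a common intermediate quantity built from a matched, \emph{signed} ordering of the eigenvalues, and to control the two resulting discrepancies separately; the factor $3$ should then appear as $1+2$. Write $\delta = \|A-B\|$ and let $a_1\ge\dots\ge a_N$ and $b_1\ge\dots\ge b_N$ be the eigenvalues of $A$ and $B$ listed in signed decreasing order, so that Weyl's inequality gives $|a_j-b_j|\le\delta$ for every $j$. Since $B$ has exactly $K$ nonzero eigenvalues, in this ordering its positive eigenvalues occupy positions $1,\dots,p$, its zeros occupy the middle block $p+1,\dots,N-q$, and its negative eigenvalues occupy positions $N-q+1,\dots,N$, where $p+q=K$. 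Hence $\sum_{k=1}^K\lambda_k^B$, being the sum of the nonzero eigenvalues of $B$, equals $\sum_{j=1}^p b_j + \sum_{j=N-q+1}^N b_j$.

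First I would introduce the bridge quantity $T_A = \sum_{j=1}^p a_j + \sum_{j=N-q+1}^N a_j$, i.e. the sum of $A$'s eigenvalues over the same index set $\mathcal T = \{1,\dots,p\}\cup\{N-q+1,\dots,N\}$. A term-by-term application of Weyl's bound immediately gives $\big|T_A - \sum_{k=1}^K\lambda_k^B\big| \le \sum_{j\in\mathcal T}|a_j-b_j| \le K\delta$, which accounts for the ``$1$'' in the factor $3$.

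The crux is to compare $\sum_{k=1}^K\lambda_k^A$ (the sum over the index set $\mathcal K$ of the $K$ eigenvalues of $A$ largest in absolute value) with $T_A$. Both are sums of $K$ eigenvalues of $A$, so their difference is $\sum_{j\in\mathcal K\setminus\mathcal T}a_j - \sum_{j\in\mathcal T\setminus\mathcal K}a_j$, where $|\mathcal K\setminus\mathcal T| = |\mathcal T\setminus\mathcal K| =: r \le K$ because $|\mathcal K| = |\mathcal T| = K$. The key observation is that every eigenvalue entering these two leftover sums is small. For an index $j$ in the middle block $\mathcal T^c = \{p+1,\dots,N-q\}$ one has $b_j=0$, so $|a_j| = |a_j-b_j| \le \delta$; since $\mathcal K\setminus\mathcal T\subseteq\mathcal T^c$, each such $a_j$ satisfies $|a_j|\le\delta$. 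For $\mathcal T\setminus\mathcal K$, if it is nonempty then $\mathcal K\setminus\mathcal T$ contains some index $j^*$ with $|a_{j^*}|\le\delta$, and because $\mathcal K$ collects the largest absolute values, every $j\in\mathcal T\setminus\mathcal K\subseteq\mathcal K^c$ obeys $|a_j|\le\min_{j'\in\mathcal K}|a_{j'}|\le|a_{j^*}|\le\delta$. Each leftover sum is therefore bounded by $r\delta$, yielding $\big|\sum_{k=1}^K\lambda_k^A - T_A\big| \le 2r\delta \le 2K\delta$, the ``$2$'' in the factor $3$. Combining the two estimates by the triangle inequality gives $\big|\sum_{k=1}^K\lambda_k^A - \sum_{k=1}^K\lambda_k^B\big| \le 3K\delta$, as claimed.

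I expect the main obstacle to be exactly the mismatch of orderings: the sum for $A$ is taken over the eigenvalues largest in \emph{absolute value}, whereas Weyl's inequality is naturally phrased for the signed decreasing order used for $B$. The intermediate quantity $T_A$ together with the smallness of the ``middle'' eigenvalues (those sitting at the zero positions of $B$) is what reconciles the two orderings and keeps the leftover index sets small enough to be controlled by $K\delta$ rather than $N\delta$.
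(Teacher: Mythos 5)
Your proof is correct, and it takes a genuinely mirrored route rather than reproducing the paper's argument. Both proofs obtain the factor $3K$ as $1+2$: a term-by-term Weyl bound contributes $K\|A-B\|$, and the mismatch between the absolute-value ordering and the signed ordering contributes $2K\|A-B\|$; the difference is which side of the comparison hosts the bridge. The paper indexes $A$'s eigenvalues by absolute value from the start, invokes Weyl's inequality in the form of a one-to-one matching $f$ with $|\lambda^A_k-\lambda^B_{f(k)}|\le\|A-B\|$, and bridges through $\sum_{k\le K}\lambda^B_{f(k)}$; its ``$2$'' then comes from showing that every nonzero eigenvalue of $B$ matched to a non-top-$K$ eigenvalue of $A$ satisfies $|\lambda^B_j|\le 2\|A-B\|$, via a chain of two Weyl applications combined with the ordering property. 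You instead keep the signed decreasing order throughout, bridge through $T_A=\sum_{j\in\mathcal T}a_j$ (the sum of $A$'s eigenvalues at the positions of $B$'s nonzero ones), and your ``$2$'' comes from two leftover index sets of equal size $r\le K$, each of whose eigenvalues is bounded by $\|A-B\|$ --- directly for $\mathcal K\setminus\mathcal T$, where $b_j=0$, and by the minimum-over-$\mathcal K$ comparison for $\mathcal T\setminus\mathcal K$. What your route buys: it needs only the textbook signed-order form of Weyl's inequality, avoids the bijection bookkeeping, and yields the marginally sharper intermediate bound $(K+2r)\|A-B\|$. What the paper's route buys: placing the case analysis on the $B$ side lets the zero eigenvalues collapse the leftover sum immediately, at the cost of the slightly subtler per-term bound $2\|A-B\|$. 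Either argument is complete and correct.
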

\begin{remark}
This lemma does not follow from Weyl's inequality immediately because $\lambda_1^A,\dots,\lambda_K^A$ are the largest {\em in absolute value} eigenvalues of $A$.\qed
\end{remark}
\begin{proof}
Let $\lambda_{K+1}^B,\dots,\lambda_N^B$ be the remaining eigenvalues of $B$, so that 
\begin{equation}\label{eq: remaining eigenvalues B}
\lambda_{K+1}^B = \dots \lambda_N^B =0,
\end{equation}
and let $\lambda_{K+1}^A,\dots,\lambda_N^A$ be the remaining eigenvalues of $A$, so that 
\begin{equation}\label{eq: remaining eigenvalues A}
\min(|\lambda_1^A|,\dots,|\lambda_K^A|)\geq \max(|\lambda_{K+1}^A|,\dots,|\lambda_N^A|).
\end{equation}
By Weyl's inequality (\cite{V18}, Theorem 4.5.3), one can construct a one-to-one function $f\colon\{1,\dots,N\}\to\{1,\dots,N\}$ such that
\begin{equation}\label{eq: weyl}
|\lambda_k^A - \lambda_{f(k)}^B|\leq \|A-B\|,\quad\text{for all }k=1,\dots,N.
\end{equation}
Using this function, define $\mathcal K = \{j = 1\dots,K\colon f^{-1}(j)>K\}$ and note that if this set is non-empty, then there exists $k=1,\dots,K$ such that $f(k)>K$, and so for any $j \in\mathcal K$,
$$
|\lambda_j^B|\leq |\lambda^A_{f^{-1}(j)}| + \|A-B\| \leq |\lambda^A_k| + \|A-B\| \leq |\lambda^B_{f(k)}| + 2\|A-B\| = 2\|A-B\|,
$$
where the first inequality follows from \eqref{eq: weyl} and the triangle inequality, the second from \eqref{eq: remaining eigenvalues A}, the third from \eqref{eq: weyl} and the triangle inequality, and the fourth from \eqref{eq: remaining eigenvalues B}. Hence,
$$
\left|\sum_{k=1}^K \lambda_k^B - \sum_{k=1}^K \lambda_{f(k)}^B\right| = \left| \sum_{j\in\mathcal K} \lambda_j^B \right| \leq 2K\|A-B\|.
$$
In addition,
$$
\left| \sum_{k=1}^K \lambda_k^A - \sum_{k=1}^K\lambda_{f(k)}^B \right| \leq K\|A-B\|
$$
by \eqref{eq: weyl}. Combining the last two inequalities gives the asserted claim.
\end{proof}

\begin{lemma}\label{lem: cs matrices}
Let $\mu_1,\dots,\mu_N$ and $\nu_1,\dots,\nu_N$ be two sequences of vectors in $\mathbb R^T$. Define
$$
A = \sum_{i=1}^N \mu_i\mu_i',\quad B = \sum_{i=1}^N \nu_i\nu_i',\quad C = \sum_{i=1}^N\mu_i\nu_i',\ \ \text{and}\ \ D = \sum_{i=1}^N (\mu_i + \nu_i)(\mu_i + \nu_i)'.
$$
Then
$\|C\| \leq \sqrt{\|A\|}\sqrt{\|B\|}$
and $\|D\| \leq 2(\|A\|+\|B\|)$.
\end{lemma}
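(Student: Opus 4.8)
The plan is to rely on the variational characterizations of the operator norm. Since $A$, $B$, and $D$ are symmetric and positive semidefinite (each being a sum of outer products), their operator norms coincide with the suprema of their quadratic forms over the unit sphere $\mathcal S^T$; that is, $\|A\| = \sup_{u\in\mathcal S^T} u'Au$, and similarly for $B$ and $D$. Moreover, each such quadratic form decomposes as a sum of squares, e.g.\ $u'Au = \sum_{i=1}^N (u'\mu_i)^2$. These identities reduce both asserted bounds to elementary scalar inequalities.

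To bound $\|C\|$, I would note that $C$ is in general \emph{not} symmetric, so the relevant characterization is the bilinear one, $\|C\| = \sup_{u,w\in\mathcal S^T} u'Cw$. Writing $u'Cw = \sum_{i=1}^N (u'\mu_i)(w'\nu_i)$ and applying the Cauchy--Schwarz inequality over the summation index $i$ gives $u'Cw \leq \big(\sum_{i=1}^N (u'\mu_i)^2\big)^{1/2}\big(\sum_{i=1}^N (w'\nu_i)^2\big)^{1/2} = (u'Au)^{1/2}(w'Bw)^{1/2}$. Taking the supremum over $u,w\in\mathcal S^T$ and invoking the identities from the first step yields $\|C\| \leq \sqrt{\|A\|}\sqrt{\|B\|}$.

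For $\|D\|$, I would expand the quadratic form and apply the elementary inequality $(s+t)^2 \leq 2(s^2+t^2)$ termwise: for any $u\in\mathcal S^T$, $u'Du = \sum_{i=1}^N (u'\mu_i + u'\nu_i)^2 \leq 2\sum_{i=1}^N \big((u'\mu_i)^2 + (u'\nu_i)^2\big) = 2(u'Au + u'Bu) \leq 2(\|A\|+\|B\|)$. Taking the supremum over $u$ gives the claim. Alternatively, one could expand $D = A + B + C + C'$ and use the triangle inequality together with the bound on $\|C\|$ and the AM--GM estimate $\sqrt{\|A\|}\sqrt{\|B\|}\leq(\|A\|+\|B\|)/2$, but the direct sum-of-squares argument is cleaner.

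There is no serious obstacle here; the argument is routine. The only point requiring care is that $C$ need not be symmetric, so one must use the bilinear-form expression for $\|C\|$ rather than a quadratic-form one, and then apply Cauchy--Schwarz in the correct variable, namely the summation index $i$ rather than the ambient dimension $T$.
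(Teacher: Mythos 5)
Your proof is correct, and it differs from the paper's in two respects worth noting. For the bound on $\|C\|$, the paper runs the same Cauchy--Schwarz-over-$i$ computation but with a \emph{single} unit vector, bounding $x'Cx \leq \sqrt{x'Ax}\sqrt{x'Bx}$ and then taking the supremum over $x\in\mathcal S^T$; since $C$ is not symmetric, $\sup_{\|x\|=1}x'Cx$ is in general only the norm of the symmetric part of $C$ (it can even vanish when $C\neq 0$, e.g.\ for skew-symmetric $C$), so the paper's write-up is slightly imprecise on exactly the point you flagged. Your version with the bilinear form $\sup_{u,w\in\mathcal S^T}u'Cw$ is the careful one, and the fix to the paper's argument is precisely your computation with two distinct vectors. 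For the bound on $\|D\|$, the paper takes the route you mention only as an alternative: triangle inequality on the four sums $\sum_i\mu_i\mu_i'$, $\sum_i\mu_i\nu_i'$, $\sum_i\nu_i\mu_i'$, $\sum_i\nu_i\nu_i'$, then the first claim plus the AM--GM estimate $\sqrt{\|A\|}\sqrt{\|B\|}\leq(\|A\|+\|B\|)/2$. Your direct termwise argument via $(s+t)^2\leq 2(s^2+t^2)$ is marginally cleaner and more self-contained, as it does not route through the bound on $\|C\|$ at all; the paper's route has the mild appeal of reusing the first claim but inherits its reliance on the $\|C\|$ bound.
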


\begin{proof}
For any $x\in\mathbb R^T$ such that $\|x\|=1$, 
\begin{align*}
x'Cx & = \sum_{i=1}^N (x'\mu_i)(\nu_i'x) \leq \sqrt{\sum_{i=1}^N (x'\mu_i)^2}\sqrt{\sum_{i=1}^N (\nu_i'x)^2} \\
& =  \sqrt{x'Ax}\sqrt{x'Bx} \leq \sqrt{\|A\|}\sqrt{\|B\|},
\end{align*}
by the Cauchy-Schwarz inequality. Taking the supremum over all $x\in\mathbb R^T$ with $\|x\|=1$ of the left- and right-hand sides of this chain of inequalities gives the first asserted claim.

To prove the second asserted claim, note that
\begin{align*}
\|D\| &\leq \left\|\sum_{i=1}^N \mu_i \mu_i'\right\| + \left\|\sum_{i=1}^N \mu_i \nu_i'\right\| + \left\|\sum_{i=1}^N \nu_i \mu_i'\right\|+ \left\|\sum_{i=1}^N \nu_i \nu_i'\right\|\\
& \leq \|A\| + \sqrt{\|A\|}\sqrt{\|B\|} + \sqrt{\|B\|}\sqrt{\|A\|} + \|B\|\\
& \leq \|A\| + (\|A\|+\|B\|)/2 + (\|B\|+\|A\|)/2 + \|B\| = 2(\|A\|+\|B\|).
\end{align*}
This completes the proof of the lemma.
\end{proof}

\begin{lemma}\label{lem: orthogonality of two bases}
Let $F_1$, $F_2$, and $F_3$ be vectors in $\mathbb R^T$ and suppose that (i) $\|F_2 - F_1\|\leq \sqrt 2$, (ii) $F_2'F_3 = 0$, and (iii) $\|F_1\|=\|F_2\|=\|F_3\|=1$. Then $|F_1'F_3|\leq 3\|F_2 - F_1\|$.
\end{lemma}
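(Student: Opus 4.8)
The plan is to use the orthogonality hypothesis (ii) to rewrite the scalar $F_1'F_3$ in terms of the difference $F_2 - F_1$, and then close the estimate with a single application of the Cauchy--Schwarz inequality. Concretely, I would first add and subtract $F_2$ inside the inner product, writing
\[
F_1'F_3 = \big(F_2 + (F_1 - F_2)\big)'F_3 = F_2'F_3 + (F_1 - F_2)'F_3.
\]
By hypothesis (ii), $F_2'F_3 = 0$, so the first term vanishes and $F_1'F_3 = (F_1 - F_2)'F_3$.

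The second and final step is to bound this remaining inner product. By the Cauchy--Schwarz inequality together with the normalization $\|F_3\| = 1$ from hypothesis (iii),
\[
|F_1'F_3| = \big|(F_1 - F_2)'F_3\big| \leq \|F_1 - F_2\|\,\|F_3\| = \|F_2 - F_1\|,
\]
which is in particular at most $3\|F_2 - F_1\|$, giving the asserted claim.

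I do not expect any genuine obstacle: the argument is a two-line application of Cauchy--Schwarz, and it in fact delivers the sharper constant $1$ rather than $3$, using neither the quantitative bound (i) nor the normalizations $\|F_1\| = \|F_2\| = 1$. Those extra hypotheses are simply what happens to be available when the lemma is invoked, with $F_1 = F_{\gamma_1}$, $F_2 = \tilde F_{\gamma_1}$, and $F_3 = \tilde F_{\gamma_2}$, to control the off-diagonal terms $\{F_{\gamma_1}'\tilde F_{\gamma_2}\}^2$ in Step 6 of the proof of Theorem \ref{thm: classifier consistency new}; the generous constant $3$ leaves harmless slack there. If one preferred an argument that draws on all the hypotheses, an equivalent route is to decompose $F_1 = aF_2 + w$ with $a = F_1'F_2$ and $w \perp F_2$, so that $\|F_2 - F_1\|^2 = 2(1-a)$ and $|F_1'F_3| = |w'F_3| \leq \|w\| = \sqrt{(1-a)(1+a)} \leq \sqrt{2(1-a)} = \|F_2 - F_1\|$, where the last inequality uses $a = F_1'F_2 \leq \|F_1\|\,\|F_2\| = 1$.
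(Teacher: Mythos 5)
Your proof is correct, and it takes a genuinely different and strictly simpler route than the paper's. The paper argues metrically: it converts the inner product into a distance via the identity $F_1'F_3 = 1 - \|F_3 - F_1\|^2/2$ (valid by (iii)), observes that $\|F_3 - F_2\| = \sqrt 2$ by (ii) and (iii), and then applies the triangle inequality twice -- a lower bound on $\|F_3 - F_1\|$ to get $F_1'F_3 \leq \sqrt 2\|F_2 - F_1\|$, and an upper bound to get $F_1'F_3 \geq -3\|F_2 - F_1\|$ -- using hypothesis (i) along the way to absorb the quadratic term $\|F_2 - F_1\|^2 \leq \sqrt 2\,\|F_2 - F_1\|$. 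That detour through squared distances is precisely what produces the slack constant $3$ in the statement. Your argument instead splits off the component orthogonal to $F_3$, writing $F_1'F_3 = (F_1 - F_2)'F_3$ by (ii), and closes with one application of Cauchy--Schwarz and $\|F_3\| = 1$, yielding the sharper constant $1$ and dispensing with hypothesis (i) and the normalizations of $F_1$ and $F_2$ altogether. What each approach buys: yours is shorter, sharper, and proves a more general statement; the paper's buys nothing beyond perhaps geometric intuition, and since the lemma is invoked in Step 6 of the proof of Theorem \ref{thm: classifier consistency new} with $\|F_2 - F_1\|$ made small by the Davis--Kahan bound, the looser constant is harmless there -- but your version would slot in verbatim and slightly simplify the paper.
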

\begin{proof}
By (iii),
\begin{equation}\label{eq: usual quadratic expansion lemma}
F_1'F_3 = 1 - \frac{\|F_3 - F_1\|^2}{2}.
\end{equation}
By the triangle inequality,
$$
\|F_3 - F_1\| \geq \|F_3 - F_2\| - \|F_2 - F_1\|, 
$$
and so, given that $\|F_3-F_2\|=\sqrt 2$ by (ii), it follows from (i) that
\begin{equation}\label{eq: new lemma eq 1}
\|F_3-F_1\|^2 \geq \Big(\|F_3 - F_2\| - \|F_2 - F_1\|\Big)^2 \geq 2 - 2\sqrt 2\|F_2-F_1\|.
\end{equation}
Combining \eqref{eq: usual quadratic expansion lemma} and \eqref{eq: new lemma eq 1},
\begin{equation}\label{eq: new lemma eq 2}
F_1'F_3 \leq 1 - 1 + \sqrt 2\|F_2-F_1\| =\sqrt 2\|F_2-F_1\|.
\end{equation}
Also, again by the triangle inequality,
$$
\|F_3-F_1\| \leq \|F_3-F_2\| + \|F_2-F_1\|,
$$
and so, by (i),
\begin{align}
\|F_3-F_1\|^2 & \leq \Big(\|F_3-F_2\| + \|F_2-F_1\|\Big)^2 \nonumber\\
& \leq 2 + 2\sqrt 2\|F_2-F_1\| + \sqrt 2\|F_2-F_1\| = 2+3\sqrt 2\|F_2-F_1\|.\label{eq: new lemma eq 3}
\end{align}
Combining \eqref{eq: usual quadratic expansion lemma} and \eqref{eq: new lemma eq 3},
$
F_1'F_3 \geq -3\|F_2-F_1\|.
$
Combining this inequality with \eqref{eq: new lemma eq 2} gives the asserted claim.
\end{proof}

\begin{lemma}\label{lem: weird progression}
Let $\theta\in(-1,1)$ be a real number and $T\geq 2$ be an integer. Then for all $u = (u_1,\dots,u_T)'\in\mathbb R^T$ such that $\sum_{t=1}^T u_t^2\leq 1$, we have
$$
\sum_{t=1}^{T-1}\left( \sum_{r=t}^{T-1} u_{r+1}\theta^{r-t} \right)^2\leq \frac{1}{(1-\theta)^2}.
$$ 
\end{lemma}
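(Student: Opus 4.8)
The plan is to recognize the inner sum as a one-sided geometric average of the $u$'s, dispatch it with a weighted Cauchy--Schwarz inequality, and then change the order of summation so that everything collapses into the squared $\ell^2$-norm of $u$. First I would reindex the inner sum: setting $s = r+1$ it becomes $\sum_{s=t+1}^{T} u_s\,\theta^{s-t-1}$, so that writing $k = s-t$ the whole quantity reads $\sum_{t=1}^{T-1}\big(\sum_{k=1}^{T-t} u_{t+k}\theta^{k-1}\big)^2$. This exhibits it as a truncated convolution of $u$ against the geometric kernel $\theta^{k-1}$, $k\geq 1$.

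Next I would split each weight as $\theta^{k-1}=\theta^{(k-1)/2}\cdot\theta^{(k-1)/2}$ and apply Cauchy--Schwarz to every inner sum,
\begin{equation*}
\Big(\sum_{k=1}^{T-t} u_{t+k}\theta^{k-1}\Big)^2 \leq \Big(\sum_{k=1}^{T-t}\theta^{k-1}\Big)\Big(\sum_{k=1}^{T-t} u_{t+k}^2\,\theta^{k-1}\Big) \leq \frac{1}{1-\theta}\sum_{k=1}^{T-t} u_{t+k}^2\,\theta^{k-1},
\end{equation*}
where the first factor is bounded by $\sum_{k\geq 1}\theta^{k-1}=1/(1-\theta)$. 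Summing over $t$ and swapping the order of summation — collecting, for each fixed $s=t+k$, the factor $\sum_{k=1}^{s-1}\theta^{k-1}\leq 1/(1-\theta)$ — yields
\begin{equation*}
\sum_{t=1}^{T-1}\Big(\sum_{k=1}^{T-t} u_{t+k}\theta^{k-1}\Big)^2 \leq \frac{1}{1-\theta}\sum_{s=2}^{T} u_s^2\sum_{k=1}^{s-1}\theta^{k-1} \leq \frac{1}{(1-\theta)^2}\sum_{s=1}^{T} u_s^2 \leq \frac{1}{(1-\theta)^2},
\end{equation*}
using $\sum_t u_t^2\leq 1$ at the last step. This is exactly the claimed bound, with the stated constant produced verbatim.

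The one point requiring genuine care — and which I expect to be the main obstacle — is the sign of $\theta$: the weighted Cauchy--Schwarz step is legitimate only when the weights $\theta^{k-1}$ are nonnegative, i.e.\ for $\theta\in[0,1)$. For $\theta\in(-1,0)$ I would first pass to $|\sum_k u_{t+k}\theta^{k-1}|\leq \sum_k |u_{t+k}|\,|\theta|^{k-1}$ and then run the identical argument with $|\theta|^{k-1}$, which reproduces the geometric bound with $1-|\theta|$ in the denominator. Alternatively, one can read the estimate off operator-theoretically: the map $u\mapsto w$ is a finite section of $(I-\theta S)^{-1}S$ with $S$ the shift, and since $\|S\|=1$ the Neumann series gives $\ell^2\to\ell^2$ norm at most $(1-|\theta|)^{-1}$, hence a squared bound $(1-|\theta|)^{-2}$; a compression only decreases the norm. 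I would present the elementary Cauchy--Schwarz version as the main proof, since it is self-contained and delivers the stated constant directly in the relevant regime $\theta\geq 0$.
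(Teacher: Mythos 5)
Your proof is correct on the domain where the lemma can possibly hold, and it takes a genuinely different route from the paper. The paper proceeds by induction on $T$: it peels off the first term of each inner sum, bounds the cross terms via $2ab \leq a^2+b^2$, applies the induction hypothesis to the shifted vector, and closes with the algebraic identity $1 + \frac{2\theta}{1-\theta} + \frac{\theta^2}{(1-\theta)^2} = \frac{1}{(1-\theta)^2}$. Your argument --- weighted Cauchy--Schwarz against the geometric kernel followed by an interchange of summation --- is a direct, one-pass proof (essentially Young's inequality for convolution with the kernel $\theta^{k-1}$, or a Schur test for the associated triangular operator); it is cleaner, and it makes visible exactly where each factor of $\frac{1}{1-\theta}$ comes from.

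Your caution about the sign of $\theta$ is not a cosmetic point but a genuine issue that the paper overlooks: the lemma as stated is false for $\theta\in(-1,0)$. Take $T=2$, $u=(0,1)$, $\theta=-1/2$: the left side equals $u_2^2=1$ while the right side equals $\frac{1}{(1-\theta)^2}=\frac{4}{9}$. The paper's own induction carries the same hidden restriction --- its base case asserts $1\leq \frac{1}{(1-\theta)^2}$, which requires $\theta\geq 0$, and its induction step multiplies $2ab\leq a^2+b^2$ by $\theta^{r-t}$ and sums geometric series as if all powers of $\theta$ were nonnegative. The correct statement for general $\theta\in(-1,1)$ replaces $1-\theta$ by $1-|\theta|$, which is exactly what your fallback argument (passing to absolute values) delivers; that weaker form suffices for the lemma's only use in the paper (the sub-Gaussian norm bounds in Appendix D), where the bound merely needs to be a constant depending on $\theta$. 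So your treatment is not only a valid alternative but, on this point, more careful than the paper's.
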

\begin{proof}
We proceed by induction. When $T=2$, we have
$$
\sum_{t=1}^{T-1}\left( \sum_{r=t}^{T-1} u_{r+1}\theta^{r-t} \right)^2 = u_2^2 \leq 1\leq \frac{1}{(1-\theta)^2},
$$
so that the claim holds. Now suppose that the claim holds for all $T=2,\dots,k$. We will prove that the claim holds for $T=k+1$. To do so, fix any $u = (u_1,\dots,u_T)'\in\mathbb R^T$ such that $\sum_{t=1}^T u_t^2\leq 1$ and observe that
\begin{align*}
\sum_{t=1}^{T-1}\left( \sum_{r=t}^{T-1} u_{r+1}\theta^{r-t} \right)^2
& = \sum_{t=1}^{T-2}\left( u_{t+1} + \sum_{r=t+1}^{T-1} u_{r+1}\theta^{r-t} \right)^2 + u_T^2\\
& = \sum_{t=1}^{T-2} \left(u_{t+1}^2 + 2u_{t+1}\sum_{r=t+1}^{T-1} u_{r+1}\theta^{r-t} + \left(\sum_{r=t+1}^{T-1} u_{r+1}\theta^{r-t}\right)^2\right) + u_T^2\\
&\leq 1 + \sum_{t=1}^{T-2}\sum_{r=t+1}^{T-1}(u_{t+1}^2 + u_{r+1}^2)\theta^{r-t} + \theta^2 \sum_{t=1}^{T-2}\left(\sum_{r=t}^{T-2}u_{r+2}\theta^{r-t}\right)^2,
\end{align*}
where the third line follows from $\sum_{t=1}^T u_t^2\leq 1$ and an elementary inequality $2ab\leq a^2 + b^2$. Also, by the induction hypothesis,
$$
\sum_{t=1}^{T-2}\left(\sum_{r=t}^{T-2}u_{r+2}\theta^{r-t}\right)^2 \leq \frac{1}{(1-\theta)^2}.
$$
In addition,
$$
\sum_{t=1}^{T-2}\sum_{r=t+1}^{T-1}(u_{t+1}^2 + u_{r+1}^2)\theta^{r-t} \leq 2\sum_{l=1}^{T-2}\sum_{t=1}^T\theta^l u_t^2 \leq \frac{2\theta}{1-\theta}.
$$
Hence,
$$
\sum_{t=1}^{T-1}\left( \sum_{r=t}^{T-1} u_{r+1}\theta^{r-t} \right)^2 \leq 1 +  \frac{2\theta}{1-\theta} +  \frac{\theta^2}{(1-\theta)^2} = \frac{1}{(1-\theta)^2},
$$
which completes the induction argument and thus gives the asserted claim for all $T\geq 2$.
\end{proof}

\section{Proofs for Remaining Results from Main Text}\label{sec: proofs for extensions}
\begin{proof}[Proof of Theorem \ref{thm: asymptotic distribution}]
By Theorem \ref{thm: classifier consistency new}, $\hat\beta = \hat\beta^0$ with probability $1 - o(1)$ for $\hat\beta^0$ appearing in \eqref{eq: correct specification}. In turn, by the Frisch-Waugh-Lovell theorem,
$$
\hat\beta^0 = \left(\sum_{i=1}^N\sum_{t=1}^T \check x_{it}\check x_{it}'\right)^{-1}\left(\sum_{i=1}^N\sum_{t=1}^T\check x_{it} y_{it}\right),
$$
and so
\begin{align*}
\sqrt{NT}(\hat\beta^0 - \beta) 
& = \left(\frac{1}{NT}\sum_{i=1}^N\sum_{t=1}^T \check x_{it}\check x_{it}'\right)^{-1}\left(\frac{1}{\sqrt{NT}}\sum_{i=1}^N\sum_{t=1}^T\check x_{it}v_{it}\right)\to_D N(\check\Sigma^{-1}\Omega\check\Sigma^{-1})
\end{align*}
by Slutsky's lemma and Assumption \ref{as: distribution}. The asserted claim follows.
\end{proof}

\begin{proof}[Proof of Theorem \ref{thm: dynamic model}]
The asserted claim follows from the same arguments as those in the proofs of Theorems \ref{thm: consistency spectral estimator}--\ref{thm: asymptotic distribution} as long as we can show that there exists a constant $C>0$ such that $\|\mathring \rho_{im}\|\vee |\mathring \alpha_{\gamma t}^m|\leq C$ for all $i=1,\dots,N$, $\gamma=1,\dots,G$, $t=1,\dots,T$, and $m=1,\dots,2M$, which would correspond to Assumption \ref{as: bounded rho} in the context of dynamic model. To do so, observe that
\begin{align*}
\max_{1\leq m\leq 2M}\|\mathring\rho_{im}\| & \leq \max_{1\leq m\leq M}(\|\rho_{im}\| + |\rho_{im}^y|) \\
&\leq  \max_{1\leq m\leq M} (\|\rho_{im}\| + 1 + |\rho_{im}'\beta|) \leq 1 + C_3(1+\|\beta\|)
\end{align*}
for all $i=1,\dots,N$ by Assumption 4.5. Also, 
$$
\max_{1\leq m\leq 2M}|\mathring \alpha_{\gamma t}^m| \leq \max_{1\leq m\leq M} \left(|\alpha_{\gamma t}^m| + \left| \sum_{r=0}^{t-2}\theta^r\alpha_{\gamma t-r-1}^m \right|\right)\leq C_4\left(1 + \frac{1}{1-\theta}\right)
$$
by Assumption 4.6. Therefore, the asserted claim follows if we set
$$
C = 1+C_3(1+\|\beta\|) + C_4\left(1 + \frac{1}{1-\theta}\right).
$$
This completes the proof of the theorem.
\end{proof}

\begin{proof}[Proof of Theorem \ref{thm: hd model}]
We first prove \eqref{eq: main convergence in hd model}. By construction of $\hat\Sigma$ and $\hat S$, it suffices to show that 
$$
\lambda^b_1 + \dots + \lambda^b_{2GM+2} = b'\Sigma b + S'b + L + O_P\left(\frac{1}{T\wedge N}+\sqrt{\frac{\log d}{NT}}\right)
$$
uniformly over $b\in\overline{\mathcal B} = \{0_d\}\cup \{e_k\colon k=1,\dots,d\}\cup \{e_k + e_l\colon k,l=1,\dots,d\}$. To do so, we proceed by appropriately modifying the proof of Theorem \ref{thm: consistency spectral estimator}. Throughout, we use the same notations as in the proof of Theorem \ref{thm: consistency spectral estimator}. We have
\begin{align*}
\textrm{tr}(R) & =-\frac{2}{NT}\sum_{i=1}^{N}\sum_{t=1}^{T}\Big\{(\beta-b)'z_{it}z_{it}(\beta-b)+v_{it}^{2}+2v_{it}z_{it}'(\beta-b)\Big\}\\
 & =-\frac{2}{NT}\sum_{i=1}^{N}\sum_{t=1}^{T}\Big\{(\beta-b)'z_{it}z_{it}(\beta-b)+v_{it}^{2}\Big\}+O_{P}\left(\sqrt{\frac{\log d}{NT}}\right)
\end{align*}
uniformly over $b\in\overline{\mathcal B}$ since
$$
\left|\frac{1}{NT}\sum_{i=1}^N\sum_{t=1}^T v_{it}z_{it}'(\beta - b)\right| \leq \left\| \frac{1}{NT}\sum_{i=1}^N\sum_{t=1}^T v_{it}z_{it} \right\|_{\infty} \|\beta - b\|_1 \leq (C_{\beta} + 2) O_{P}\left(\sqrt{\frac{\log d}{NT}}\right)
$$
uniformly over $b\in\overline{\mathcal B}$. Thus,
\begin{align*}
\textrm{tr}(A_0)&=\frac{2}{NT}\sum_{i=1}^{N}\sum_{t=1}^{T}\Big\{(\beta-b)'z_{it}z_{it}(\beta-b)+v_{it}^{2}\Big\}+O_{P}\left(\sqrt{\frac{\log d}{NT}}\right)\\
& = b'\Sigma b + S'b + L + O_{P}\left(\sqrt{\frac{\log d}{NT}}\right)
\end{align*}
uniformly over $b\in\overline{\mathcal B}$ since
\begin{align*}
&\left| \frac{2}{NT}\sum_{i=1}^N\sum_{t=1}^T(\beta - b)'z_{it}z_{it}'(\beta - b) - (\beta - b)'\Sigma(\beta - b) \right|\\
&\qquad \leq \left\| \frac{1}{NT}\sum_{i=1}^N\sum_{t=1}^T z_{it}z_{it}' - \Sigma \right\|_{\infty}\|\beta - b\|_1^2 \leq (C_{\beta} + 2)^2 O_{P}\left(\sqrt{\frac{\log d}{NT}}\right)
\end{align*}
uniformly over $b\in\overline{\mathcal B}$. Also,
$$
\left\|\frac{1}{NT}\sum_{t=1}^T V_tV_t'\right\| = O_P\left(\frac{1}{T\wedge N}\right)
$$
as in the proof of Theorem \ref{thm: consistency spectral estimator}. Further, to prove that
\begin{equation}\label{eq: z matrix convergence}
\left\|\frac{1}{NT}\sum_{t=1}^T Z_{bt}Z_{bt}'\right\| = O_P\left(\frac{1}{T\wedge N}\right)
\end{equation}
uniformly over $b\in\overline{\mathcal B}$, where we denoted $Z_{bt} = (z_{1t}'(\beta - b),\dots,z_{Nt}'(\beta - b))'$, we have $\|u'z_i(\beta - b)\|_{\psi_2} \leq \|u'z_i\beta\|_{\psi_2} + \|u'z_ib\|_{\psi_2} \leq C_z + 2C_2$ uniformly over $u=(u_1,\dots,u_T)'\in\mathcal S^T$ and $b\in\overline{\mathcal B}$ by our assumptions, where we denoted $z_i = (z_{i1},\dots,z_{iT})'$ for all $i=1,\dots,N$. Thus, denoting $\mathcal Z_b = (z_1(\beta - b),\dots,z_N(\beta - b))'$, we have
$$
\mathrm P\left(\max_{u\in\mathcal N}\left|u'\left( \frac{\mathcal Z_b'\mathcal Z_b}{N} - \mathbb E\left[ \frac{\mathcal Z_b'\mathcal Z_b}{N} \right] \right) u\right| \geq \epsilon \right)\leq 2\times 9^T\times\exp(-c(\epsilon\wedge\epsilon^2)N)
$$
for all $b\in\overline{\mathcal B}$, where $\mathcal N$ is the same as in the proof of Theorem \ref{thm: consistency spectral estimator} and $c$ is a constant depending only on $C_z$ and $C_2$. Hence, by the union bound,
$$
\mathrm P\left( \max_{b\in\overline{\mathcal B}}\max_{u\in\mathcal N}\left|u'\left( \frac{\mathcal Z_b'\mathcal Z_b}{N} - \mathbb E\left[ \frac{\mathcal Z_b'\mathcal Z_b}{N} \right] \right) u\right| \geq \epsilon \right)\leq (1+d+d^2)\times 2\times 9^T\times\exp(-c(\epsilon\wedge\epsilon^2)N).
$$
Setting here $\epsilon = c^{-1}(\log 9)(T/N)+1$, we obtain
\begin{multline*}
\mathrm P\left( \max_{b\in\overline{\mathcal B}}\max_{u\in\mathcal N}\left|u'\left( \frac{\mathcal Z_b'\mathcal Z_b}{N} - \mathbb E\left[ \frac{\mathcal Z_b'\mathcal Z_b}{N} \right] \right) u\right| \geq  c^{-1}(\log 9)(T/N)+1 \right)\\
 \leq 2(1+d+d^2)\exp(-cN)\to 0
\end{multline*}
because $\log d = o(N)$. We thus obtain \eqref{eq: z matrix convergence} by the same arguments as those in the proof of Theorem \ref{thm: consistency spectral estimator}. Repeating the remaining arguments of the proof of Theorem \ref{thm: consistency spectral estimator}, we obtain \eqref{eq: main convergence in hd model}.

Next, we prove that \eqref{eq: lambda dominance} and \eqref{eq: second event dominance} imply \eqref{eq: implication l1 norm hd case}. To do so, we assume for the rest of the proof that both \eqref{eq: lambda dominance} and \eqref{eq: second event dominance} are satisfied. Then, by the definition of $\hat\beta_{\lambda}$,
$$
\hat\beta_{\lambda}'\hat\Sigma\hat\beta_{\lambda} +\hat S' \hat\beta_{\lambda} + \lambda\|\hat\beta_{\lambda}\|_1\leq \beta'\hat\Sigma \beta + \hat S'\beta + \lambda\|\beta\|_1.
$$
Also,
$$
(\hat\beta_{\lambda} - \beta)'\hat \Sigma (\hat\beta_{\lambda} - \beta) = \hat\beta_{\lambda}'\hat\Sigma\hat\beta_{\lambda} - \beta'\hat\Sigma\beta + 2(\beta - \hat\beta_{\lambda})'\hat\Sigma\beta.
$$
Taking the sum of these two displays, we obtain
\begin{align*}
(\hat\beta_{\lambda} - \beta)'\hat \Sigma (\hat\beta_{\lambda} - \beta) 
& = (\hat S + 2\hat\Sigma\beta)'(\beta - \hat\beta_{\lambda}) + \lambda\|\beta\|_1 - \lambda\|\hat\beta_{\lambda}\|_1\\
& \leq \|\hat S + 2\hat\Sigma\beta\|_{\infty}\|\hat\beta_{\lambda} - \beta\|_1 + \lambda\|\beta\|_1 - \lambda\|\hat\beta_{\lambda}\|_1\\
& \leq \Big(\|\hat S - S\|_{\infty} + 2 C_{\beta}\|\hat\Sigma - \Sigma\|_{\infty}\Big)\|\hat\beta_{\lambda} - \beta\|_1 + \lambda\|\beta\|_1 - \lambda\|\hat\beta_{\lambda}\|_1,
\end{align*}
where the third line follows by recalling that $S + 2\Sigma\beta=0$ and $\|\beta\|_1\leq C_{\beta}$. Therefore, by \eqref{eq: lambda dominance}, we have
\begin{equation}\label{eq: key rearrangement}
(\hat\beta_{\lambda} - \beta)'\hat \Sigma (\hat\beta_{\lambda} - \beta) \leq (\lambda / c_{\lambda})\|\hat\beta_{\lambda} - \beta\|_1 + \lambda\|\beta\|_1 - \lambda\|\hat\beta_{\lambda}\|_1.
\end{equation}
Further, denote $\delta = \hat\beta_{\lambda} - \beta$, $\mathcal T = \{k=1,\dots,d\colon \beta_k\neq 0\}$, and $\mathcal T^c = \{1,\dots,d\}\setminus \mathcal T$. Also, let $\delta_{\mathcal T} = (\delta_{\mathcal T 1},\dots,\delta_{\mathcal T d})'$ be a $d\times 1$ vector such that $\delta_{\mathcal T k} = \delta_k 1\{k\in\mathcal T\}$ and, similarly, let $\delta_{\mathcal T^c} = (\delta_{\mathcal T^c 1},\dots,\delta_{\mathcal T^c d})'$ be a $d\times 1$ vector such that $\delta_{\mathcal T^c k} = \delta_k 1\{k\in\mathcal T^c\}$. Then, given that $(\hat\beta_{\lambda} - \beta)'\hat \Sigma (\hat\beta_{\lambda} - \beta)\geq 0$, it follows from \eqref{eq: key rearrangement} that
\begin{align*}
0&\leq (1 / c_{\lambda})\|\hat\beta_{\lambda} - \beta\|_1 + \|\beta\|_1 - \|\hat\beta_{\lambda}\|_1\\
&\leq  (1/c_{\lambda})(\|\delta_{\mathcal T} \|_1 + \|\delta_{\mathcal T^c}\|_1) + \| \delta_{\mathcal T}\|_1 - \|\delta_{\mathcal T^c}\|_1,
\end{align*}
where the second inequality follows from observing that $\beta_k = \beta_k1\{k\in\mathcal T\}$ for all $k=1,\dots,d$. Therefore,
$
(1-1/c_{\lambda})\|\delta_{\mathcal T^c}\|_1 \leq (1+1/c_{\lambda})\|\delta_{\mathcal T}\|_1,
$
and so $\|\delta_{\mathcal T^c}\|_1 \leq \bar c_{\lambda}\|\delta_{\mathcal T}\|_1$. Thus,
\begin{align*}
\delta'\hat\Sigma\delta 
& = \delta'\Sigma\delta - \delta'(\Sigma - \hat\Sigma)\delta \geq c_{\Sigma}\|\delta\|^2 - \|\delta\|_1^2\|\hat\Sigma - \Sigma\|_{\infty} \\
&  \geq c_{\Sigma}\|\delta\|^2 - (1+\bar c_{\lambda})^2\|\delta_{\mathcal T}\|_1^2\|\hat\Sigma - \Sigma\|_{\infty}\\
&\geq c_{\Sigma}\|\delta\|^2 - s(1+\bar c_{\lambda})^2\|\delta\|^2\|\hat\Sigma - \Sigma\|_{\infty}\geq c_{\Sigma}\|\delta\|^2/2
\end{align*}
by \eqref{eq: second event dominance}. Substituting this bound into \eqref{eq: key rearrangement}, we obtain
$$
c_{\Sigma}\|\delta\|^2/2 \leq \lambda(1 + 1/c_{\lambda})\|\delta_{\mathcal T}\|_1\leq \sqrt s\lambda (1 + 1/c_{\lambda})\|\delta\|.
$$
Rearranging this bound gives gives the second inequality in \eqref{eq: implication l1 norm hd case}. To obtain the first inequality in \eqref{eq: implication l1 norm hd case}, observe that
$$
\|\delta\|_1 =\|\delta_{\mathcal T}\|_1 + \|\delta_{\mathcal T^c}\|_1 \leq (1+\bar c_{\lambda})\|\delta_{\mathcal T}\|_1 \leq \sqrt s(1+\bar c_{\lambda})\|\delta\|.
$$
This completes the proof of the theorem.
\end{proof}

\begin{proof}[Proof of Theorem \ref{thm: interactive model}]
The proof is closely related to that of Theorem \ref{thm: consistency spectral estimator}, with the main difference is that we now have
$$
f_{it} = (\kappa_i + \omega_i(\beta - b))'\phi_t,\quad\text{for all }i=1,\dots,N, \ t=1,\dots,T.
$$
This difference in turn requires us to change the calculation of the bound on the number of non-zero eigenvalues of the matrix $A_0$. Recalling that $F_t = (f_{1t},\dots,f_{Nt})'$ for all $t=1,\dots,T$, we now have
$$
\sum_{t=1}^T F_t(F_t + Z_t + V_t)' = \sum_{t=1}^T \mathcal K \phi_t (F_t + Z_t + V_t)' = \mathcal K\sum_{t=1}^T\phi_t(F_t + Z_t + V_t)',
$$
where $\mathcal K = (\kappa_1 + \omega_1(\beta - b),\dots,\kappa_N + \omega_N(\beta - b))'$ and $Z_t$ and $V_t$ are the same as in the proof of Theorem \ref{thm: consistency spectral estimator}. Thus, given that $\mathcal K$ is an $N\times J$ matrix, it follows that the rank of the matrix $\sum_{t=1}^T F_t(F_t + Z_t + V_t)'$ is at most $J$. Similarly, the rank of the matrix $\sum_{t=1}^T (Z_t + V_t)F_t'$ is also at most $J$, as the column rank coincides with the row rank. We have thus replaced $GM$ in the proof of Theorem \ref{thm: consistency spectral estimator} by $J$. The rest of the proof is the same as that of Theorem \ref{thm: consistency spectral estimator}.
\end{proof}

\section{Randomized Algorithm for Calculating Eigenvalues of Large Matrices}\label{sec: randomized algorithm}
To calculate the spectral estimator $\tilde\beta$ in Section \ref{sec: estimator}, we had to calculate $2G M + 2$ largest in absolute value eigenvalues of the $N\times N$ matrix $A^b$. When $N$ is large, calculating these eigenvalues exactly may be difficult. Fortunately, there exists a class of randomized algorithms that allow to calculate these eigenvalues approximately with minimal efforts. In this section, we describe one such algorithm. Our discussion here mostly follows \cite{HMT11}, where an interested reader can find several other related algorithms. 

For brevity of notation, suppose that we have an $N\times N$ symmetric matrix $A$ and we would like to calculate its $k$ largest in absolute value eigenvalues, $\lambda_{(1)},\dots,\lambda_{(k)}$, ordered so that $|\lambda_{(1)}|\geq \dots\geq |\lambda_{(k)}|$. Consider the following algorithm:

\medskip
\noindent
{\bf Randomized Algorithm for Calculating Eigenvalues.}

{\em Step 1:} choose an oversampling parameter $p > 0$, e.g. $p=5$ or $10$;

{\em Step 2:} set a multiplication parameter $q=[\log N]$;

{\em Step 3:} draw an $N\times(k+p)$ random matrix $\Omega = \{\Omega_{ij}\}_{i,j=1}^{N,k+p}\stackrel{iid}{\sim} N(0,1)$;

{\em Step 4:} compute the $N\times(k+p)$ matrix $Y = A^{q+1}\Omega$;

{\em Step 5:} compute QR decomposition $Y = QR$ with $Q$ having orthonormal columns;

{\em Step 6:} compute the $(k+p)\times N$ matrix $B=Q'A$;

{\em Step 7:} compute eigenvectors $\tilde s_1,\dots,\tilde s_{k+p}$ of the $(k+p)\times(k+p)$ matrix $BB'$;

{\em Step 8:} compute $N\times 1$ vectors $s_j = B'\tilde s_j$ for $j=1,\dots,k+p$;

{\em Step 9:} compute $\hat\lambda_j = \textrm{sign}(s_j'As_j)(\|B'B s_j\|/\|s_j\|)^{1/2}$ for $j=1,\dots,k+p$;

{\em Step 10:} order values $\hat\lambda_1,\dots,\hat\lambda_{k+p}$ into $\hat\lambda_{(1)},\dots,\hat\lambda_{(k+p)}$ so that $|\hat\lambda_{(1)}|\geq\dots\geq|\hat\lambda_{(k+p)}|$;

{\em Step 11:} return $\hat\lambda_{(1)},\dots,\hat\lambda_{(k)}$.
\medskip

The result of this algorithm is $k$ values $\hat\lambda_{(1)},\dots,\hat\lambda_{(k)}$. These are estimators of $k$ largest in absolute value eigenvalues $\lambda_{(1)},\dots,\lambda_{(k)}$ of the matrix $A$. As follows from results in \cite{HMT11}, these estimators are consistent as $N\to\infty$ under conditions to be discussed below even though they are based on a realization of the random matrix $\Omega$. Specifically, Corollary 10.10 in \cite{HMT11} shows that
$$
\mathbb E[\|A - QQ'A\|]\leq\left(1 + \sqrt{\frac{k}{p-1}} +\frac{e\sqrt{N(k+p)}}{p}\right)^{\frac{1}{q+p+1}}|\lambda_{(k+1)}|,
$$
where $\lambda_{(k+1)}$ is the $(k+1)$th largest in absolute value eigenvalue of $A$. Therefore, by Markov's inequality,
\begin{equation}\label{eq: hmt}
\|A-QQ'A\| = O_P(|\lambda_{(k+1)}|).
\end{equation}
In turn, by the triangle inequality, the fact that $Q'Q$ is the identity matrix, and the definition $B=Q'A$,
\begin{align*}
\|A'A - B'B\| 
& = \|A'A - A'QQ'QQ'A\|\\
&\leq \|A\|\|A - QQ'A\| + \|QQ'A\|\|A-QQ'A\| \leq 2\|A\|\|A-QQ'A\|
\end{align*}
Thus, given that $\hat\lambda_{(1)}^2,\dots,\hat\lambda_{(k)}^2$ are $k$ largest eigenvalues of the matrix $B'B$ by construction (see Steps 8,9, and 10 in the algorithm above), it follows from Weyl's inequality that $\hat\lambda_{(1)}^2,\dots,\hat\lambda_{(k)}^2$ are consistent estimators of $\lambda_{(1)}^2,\dots,\lambda_{(k)}^2$ as long as $\|A\|=O_P(1)$ and $|\lambda_{(k+1)}|=o_P(1)$ as $N\to\infty$. Hence, $\hat\lambda_{(j)}\to\lambda_{(j)}$ for all $j=1,\dots,k$ under the same conditions by the Davis-Kahane theorem. 

The described algorithm can be applied in Section \ref{sec: estimator} to calculate the spectral estimator $\tilde\beta$ with $A = A^b$ and $k=2G M + 2$. In this case, the aforementioned conditions $\|A\|=O_P(1)$ and $|\lambda_{(k+1)}|=o_P(1)$ are satisfied under Assumptions \ref{as: spectral v} and \ref{as: spectral vz} by the proof of Theorem \ref{thm: consistency spectral estimator}.

\section{Relation between Assumptions 3.1--3.12 and Assumptions 4.1--4.12}\label{sec: additional details on dynamic model}
In this section, we explain what conditions one has to impose on top of Assumptions 3.1--3.12 to obtain Assumptions 4.1--4.12. To start with, note that Assumptions 4.1 and 4.5--4.9 coincide with Assumptions 3.1 and 3.5--3.9. Also, Assumption 4.11 follows immediately from Assumption 3.11 if we define $\mathring{\mathcal B} = [-1,1]\times\mathcal B$, as proposed in the main text. In addition, Assumptions 3.10 and 3.12 depend on $x_{it}$ but their versions corresponding to the dynamic model, i.e. Assumptions 4.10 and 4.12, are discussed in \cite{BM15}, where some interpretations as well as low-level conditions are provided. We thus only need to discuss Assumptions 4.2, 4.3, and 4.4.

To this end, suppose that Assumptions 3.1, 3.2, 3.3, and 3.4 are satisfied and, in addition, suppose that the noise variables $v_{it}$ satisfy \eqref{eq: model with predetermined v's}. Moreover, suppose that there exists a constant $C_y>0$ such that $\|y_{i0}\|_{\psi_2}\leq C_y$ for all $i=1,\dots,N$. Then for all $u=(u_1,\dots,u_T)'\in\mathcal S^T$, we have
$$
\left\| \sum_{t=1}^T u_t\sum_{r=0}^{t-2}\theta^rv_{it-r-1} \right\|_{\psi_2} = \left\| \sum_{t=1}^{T-1}v_{it}\sum_{r=t}^{T-1} u_{r+1}\theta^{r-t} \right\|_{\psi_2} \leq \frac{C_1}{1-\theta}
$$
by Assumption \ref{as: spectral v}(i) and Lemma \ref{lem: weird progression} and, similarly,
$$
\left\| \sum_{t=1}^T u_t\sum_{r=0}^{t-2}\theta^rz_{it-r-1}'\beta \right\|_{\psi_2} = \left\| \sum_{t=1}^{T-1}z_{it}'\beta\sum_{r=t}^{T-1} u_{r+1}\theta^{r-t} \right\|_{\psi_2} \leq \frac{d_zC_2\|\beta\|_{\infty}}{1-\theta}
$$
by Assumption \ref{as: spectral v}(ii) and Lemma \ref{lem: weird progression}, where $d_z$ is the dimension of $z_{it}$. Also, 
$$
\left\|\sum_{t=1}^T u_t \theta^{t-1}y_{i0}\right\|_{\psi_2} \leq \frac{\|y_{i0}\|_{\psi_2}}{1-\theta}\leq \frac{C_y}{1-\theta}.
$$
Hence, by the triangle inequality,
$$
\left\|\sum_{t=1}^T u_t z_{it}^y\right\|_{\psi_2} \leq \frac{C_y+C_1+d_zC_2\|\beta\|_{\infty}}{1-\theta},
$$
and so Assumption 4.2 is satisfied.

Next,
\begin{align*}
\mathbb E\left[ \left(\sum_{i=1}^N\sum_{t=1}^T v_{it}z_{it}^y\right)^2 \right]
& = \sum_{i=1}^N\sum_{t=1}^T \mathbb E\left[(v_{it}z_{it}^y)^2\right]
\leq \sum_{i=1}^N\sum_{t=1}^T\sqrt{\mathbb E[v_{it}^4]\mathbb E[(z_{it}^y)^4]}\leq CNT
\end{align*}
for some constant $C>0$ by \eqref{eq: model with predetermined v's}, the triangle inequality, Assumption \ref{as: spectral v}, the fact that $\|y_{i0}\|_{\psi_2}\leq C_y$ for all $i=1,\dots,N$, and (2.15) in \cite{V18}. Thus, Assumption 4.3 is satisfied as well. Finally, regarding Assumption 4.4, observe that the convergence $(NT)^{-1}\sum_{i=1}^N\sum_{t=1}^T \mathring z_{it}\mathring z_{it}' = \mathring{\Sigma} + O_P(1/\sqrt{NT})$ holds by a law of large numbers as long as the dependence of $(z_{it},v_{it})$'s across $t$ is not too strong. Also, by Assumption 3.4, the matrix $\mathring\Sigma$ has at most one zero eigenvalue, whereas Assumption 4.4 requires that $\mathring\Sigma$ has no zero eigenvalues, so that $\mathring\Sigma$ is invertible. Although it seems difficult to provide low-level conditions implying invertibility of $\mathring\Sigma$, as it depends on the auto-covariance structure of the random processes $(z_{it}',v_{it})'$, $t=1,\dots,T$, we note that $\mathring\Sigma$ can be consistently estimated (as it is done in the process of constructing the spectral estimator), and so the invertibility condition is testable.

\section{Motivating Example}\label{sec: motivating example}
To motivate equation (3), we provide a specific example in the context of agricultural production and environmental economics. Suppose a production process has an input that is potentially polluting, $x$, such as a pesticide. In addition, suppose we want to assess the incidence of the use of this substance in the environment's pollution, such as water or soil. See, for example, \cite{nicholls1988factors}. The incidence of polluting substances on the environment depends on the intensity of use, but also on the characteristics of the environment, such as soil permeability, rainfall, etc. 

Suppose we have the following model to quantify the effect of the use of input $x$ on some measure of pollution:
\begin{eqnarray*}
    y_{it} = x_{it}\beta + \alpha_{g_it} + \varepsilon_{it}
\end{eqnarray*}
where $y_{it}$  is a measure of pollution, $x_{it}$ is the quantity of pesticide, $\alpha_{g_it}$ are unobserved characteristics of the region, $g_i$, such as soil permeability, rain incidence, and other characteristics that influence the presence of chemicals in the environment, and $\varepsilon$ is an unobserved shock. The farmer chooses to purchase pesticide in the local market as to maximize expected profits:
\begin{equation}
    x_{it} = \arg \max_{x}  \mathbb{E} \left[\pi_i(x,\alpha_{g_it},z_{it},\nu_{it}) |  \alpha_{g_i1}, \ldots \alpha_{g_iT}, z_{i1}, \ldots, z_{iT}\right] - C(x, p^{x}_{g_it})/p_{g_it}
    \label{optim}
\end{equation}
where $\pi_i(x,\alpha_{gt},z_{it},\varepsilon_{it})$ is the production function of farmer $i$, and $C(x, p^{x}_{g_it})$ is the cost associated to purchase $x$ when the price of the input is $p^{x}_{g_it}$, and $p_{g_it}$ is the price of the produced good in the local market of farmer $i$. The production process depends on the use of observable inputs, such as $x$, but also soil and rain characteristics specific to the region, that affect both the prevalence of pesticides in the environment but also the agricultural process, and $z_{it}$, a variable that affects production, such as quality of inputs, observed to the farmer but not to the econometrician, which does not affect the environment. Finally $\nu$ is an idiosyncratic shock in the production process, unforeseeable by both the farmer and the econometrician. 

Assume $\pi_i(x,\alpha_{g_it},z_{it},\nu_{it}) = \tilde{\pi}_i(x,\alpha_{g_it},z_{it}) + \nu_{it}$, and $\mathbb{E} \left[\nu_{it} |  \alpha_{g_i1}, \ldots \alpha_{g_iT}, z_{i1}, \ldots, z_{iT}\right] = 0$. In addition, let $\tilde{\pi}_i'(x,\alpha_{g_it},z_{it}) = \beta_i x + \alpha_{g_it} + z_{it}$, and $C'(x, p^{x}_{g_it}) = p^x_{g_it}$ be the partial derivative with respect to the first argument, respectively for $\tilde{\pi}_i$ and $C$. The F.O.C. in (\ref{optim}) implies that $x_{it}$ is defined as:
\begin{eqnarray*}
    \beta_i x_{it} + \alpha_{gt} + z_{it} - \frac{p^{x}_{gt}}{p_{gt}} = 0,
\end{eqnarray*}
which implies: $x_{it} = \frac{1}{\beta_i} \left(\frac{p^{x}_{gt}}{p_{gt}} - \alpha_{gt}\right) - \frac{1}{\beta_i} z_{it}$,
which takes the form of equation (3) with $M = 2$.

\newpage


\begin{landscape}

     \begin{table}
\caption{ Mean Absolute Error (MAE) and misclassification results: $\sigma^2 = 1$,  $M=1$}
     \hspace*{.2cm}
\begin{tabular}{cccccccccccc}

\hline\hline
 & & &  &  & \multirow{2}{*}{$G=2$} &    & &\tabularnewline
 & & &  &  &  &    & &\tabularnewline
\hline\hline 
\multirow{2}{*}{$ \ \ T\ \ $} & {\multirow{2}{*}{ $\ \ N \ \ $}} & \multicolumn{1}{c|}{} & \multicolumn{7}{c|}{ Mean Absolute Error} & \multicolumn{2}{c}{Misclassification}\tabularnewline
 & & \multicolumn{1}{c|}{} & $ \ $S$ \ $ & \ P-S \ &  LS & Pen NN  & $ \ \ \ $I-GFE$ \ \ \ $ & $ \ \ \ $GFE$ \ \ \ $ &\multicolumn{1}{c|}{$ \ \ $Oracle$ \ \ $} & $ \ $S$ \ $ & $ \ \ \ $GFE$ \ \ $\tabularnewline
\hline 
  \multirow{3}{*}{20} & 100 & & 0.035 &  0.018 &   0.059 &  0.151 &  0.016 &  0.034 &  0.014 &  0.009 &  0.058 \cr
    & 200 & &  0.024 &  0.009 &   0.019 &  0.150 &  0.009 &  0.009 &  0.008 &  0.002 &  0.003 \cr
    & 400 & & 0.016 &  0.007 &   0.010 &  0.151 &  0.007 &  0.008 &  0.006 &  0.003 &  0.006 \cr
   \multirow{3}{*}{50} & 100 & & 0.024 &  0.007 &  0.012 &  0.149 &  0.007 &  0.007 &  0.007 &  0.000 &  0.000 \cr
    & 200 & & 0.015 &  0.006 &  0.008 &  0.150 &  0.006 &  0.006 &  0.006 &  0.000 &  0.000 \cr
    & 400 & & 0.010 &  0.004 &  0.006 &  0.149 &  0.004 &  0.004 &  0.004 &  0.000 &  0.000 \cr
   \multirow{3}{*}{100} & 100 & & 0.014 &  0.006 &  0.010 &  0.153 &  0.006 &  0.006 &  0.006 &  0.000 &  0.000 \cr
    & 200 & & 0.008 &  0.004 &  0.006 &  0.151 &  0.004 &  0.004 &  0.004 &  0.000 &  0.000 \cr
    & 400 & & 0.004 &  0.002 &  0.004 &  0.152 &  0.002 &  0.002 &  0.002 &  0.000 &  0.000 \cr
    
\hline\hline
 & & &  &  & \multirow{2}{*}{$G=7$} &  &   &\tabularnewline
 & & &  &  &  &  &  & &\tabularnewline
\hline\hline 
\multirow{2}{*}{$ \ \ T\ \ $} & {\multirow{2}{*}{ $\ \ N \ \ $}} & \multicolumn{1}{c|}{} & \multicolumn{7}{c|}{ Mean Absolute Error} & \multicolumn{2}{c}{Misclassification}\tabularnewline
 & & \multicolumn{1}{c|}{} & $ \ $S$ \ $ & \ P-S \ &  LS &  Pen NN  & $ \ \ \ $I-GFE$ \ \ \ $ & $ \ \ \ $GFE$ \ \ \ $ &\multicolumn{1}{c|}{$ \ \ $Oracle$ \ \ $} & $ \ $S$ \ $ & $ \ \ \ $GFE$ \ \ $\tabularnewline
\hline 
\multirow{3}{*}{20} & 100 & &  0.116 &  0.109 &   0.144 &  0.151 &  0.023 &  0.144 &  0.011 &  0.346 &  0.705 \cr
    & 200 & &  0.057 &  0.085 &   0.149 &  0.152 &  0.029 &  0.149 &  0.010 &  0.234 &  0.744 \cr
    & 400 & & 0.025 &  0.062 &   0.142 &  0.148 &  0.055 &  0.147 &  0.006 &  0.139 &  0.764 \cr
   \multirow{3}{*}{50} & 100 & & 0.034 &  0.014 &   0.142 &  0.147 &  0.007 &  0.147 &  0.007 &  0.006 &  0.651 \cr
    & 200 & & 0.015 &  0.008 &   0.143 &  0.148 &  0.006 &  0.137 &  0.006 &  0.001 &  0.645 \cr
    & 400 & & 0.009 &  0.004 &   0.062 &  0.150 &  0.004 &  0.137 &  0.004 &  0.000 &  0.622 \cr
   \multirow{3}{*}{100} & 100 & & 0.022 &  0.006 &  0.148 &  0.151 &  0.006 &  0.110 &  0.006 &  0.000 &  0.364 \cr
    & 200 & & 0.009 &  0.003 &   0.083 &  0.152 &  0.003 &  0.112 &  0.003 &  0.000 &  0.289 \cr
    & 400 & & 0.006 &  0.003 &   0.006 &  0.150 &  0.003 &  0.078 &  0.003 &  0.000 &  0.203
   \end{tabular}
\end{table}
\end{landscape}

\newpage


\begin{landscape}

     \begin{table}

\caption{ Mean Absolute Error (MAE)  and misclassification results: $\sigma^2 = 4$,  $M=1$}
     \hspace*{.2cm}
\begin{tabular}{cccccccccccc}

\hline\hline
 & & &  &  & \multirow{2}{*}{$G=2$} &    & &\tabularnewline
 & & &  &  &  &  &   & \tabularnewline
\hline\hline 
\multirow{2}{*}{$ \ \ T\ \ $} & {\multirow{2}{*}{ $\ \ N \ \ $}} & \multicolumn{1}{c|}{} & \multicolumn{7}{c|}{ Mean Absolute Error} & \multicolumn{2}{c}{Misclassification}\tabularnewline
 & & \multicolumn{1}{c|}{} & $ \ $S$ \ $ & \ P-S \ &  LS   & Pen NN  & $ \ \ \ $I-GFE$ \ \ \ $ & $ \ \ \ $GFE$ \ \ \ $ &\multicolumn{1}{c|}{$ \ \ $Oracle$ \ \ $} & $ \ $S$ \ $ & $ \ \ \ $GFE$ \ \ $\tabularnewline
\hline 

   \multirow{3}{*}{20} & 100 & &  0.026 &  0.005 &   0.018 &  0.153 &  0.005 &  0.005 &  0.005 &  0.000 &  0.000 \cr
    & 200 & & 0.015 &  0.003 &   0.014 &  0.159 &  0.003 &  0.003 &  0.003 &  0.000 &  0.000 \cr
    & 400 & & 0.009 &  0.002 &   0.009 &  0.157 &  0.002 &  0.002 &  0.002 &  0.000 &  0.000 \cr
   \multirow{3}{*}{50} & 100 & & 0.015 &  0.003 &   0.014 &  0.156 &  0.003 &  0.003 &  0.003 &  0.000 &  0.000 \cr
    & 200 & & 0.009 &  0.002 &    0.009 &  0.157 &  0.002 &  0.002 &  0.002 &  0.000 &  0.000 \cr
    & 400 & & 0.007 &  0.001 &    0.007 &  0.144 &  0.001 &  0.001 &  0.001 &  0.000 &  0.000 \cr
   \multirow{3}{*}{100} & 100 & & 0.010 &  0.002 &    0.007 &  0.140 &  0.002 &  0.002 &  0.002 &  0.000 &  0.000 \cr
    & 200 & & 0.007 &  0.001 &    0.006 &  0.063 &  0.001 &  0.001 &  0.001 &  0.000 &  0.000 \cr
    & 400 & & 0.004 &  0.001 &    0.005 &  0.011 &  0.001 &  0.001 &  0.001 &  0.000 &  0.000 \cr
    
\hline\hline
 & & &  &  & \multirow{2}{*}{$G=7$} &    &\tabularnewline
 & & &  &  &  &   &\tabularnewline
\hline\hline 
\multirow{2}{*}{$ \ \ T\ \ $} & {\multirow{2}{*}{ $\ \ N \ \ $}} & \multicolumn{1}{c|}{} & \multicolumn{7}{c|}{ Mean Absolute Error} & \multicolumn{2}{c}{Misclassification}\tabularnewline
 & & \multicolumn{1}{c|}{} & $ \ $S$ \ $ & \ P-S \ &  LS &  Pen NN  & $ \ \ \ $I-GFE$ \ \ \ $ & $ \ \ \ $GFE$ \ \ \ $ &\multicolumn{1}{c|}{$ \ \ $Oracle$ \ \ $} & $ \ $S$ \ $ & $ \ \ \ $GFE$ \ \ $\tabularnewline
\hline 
   \multirow{3}{*}{20} & 100 & & 0.055 &  0.005 &    0.099 &  0.155 &  0.004 &  0.055 &  0.004 &  0.000 &  0.154 \cr
    & 200 & & 0.029 &  0.003 &    0.035 &  0.158 &  0.003 &  0.073 &  0.003 &  0.000 &  0.216 \cr
    & 400 & & 0.016 &  0.002 &    0.013 &  0.158 &  0.002 &  0.060 &  0.002 &  0.000 &  0.195 \cr
   \multirow{3}{*}{50} & 100 & & 0.026 &  0.003 &    0.014 &  0.157 &  0.003 &  0.049 &  0.003 &  0.000 &  0.135 \cr
    & 200 & & 0.012 &  0.002 &    0.009 &  0.157 &  0.002 &  0.048 &  0.002 &  0.000 &  0.138 \cr
    & 400 & & 0.007 &  0.002 &    0.006 &  0.157 &  0.002 &  0.071 &  0.002 &  0.000 &  0.219 \cr
   \multirow{3}{*}{100} & 100 & & 0.023 &  0.002 &    0.011 &  0.156 &  0.002 &  0.048 &  0.002 &  0.000 &  0.122 \cr
    & 200 & & 0.009 &  0.001 &    0.007 &  0.161 &  0.001 &  0.062 &  0.001 &  0.000 &  0.165 \cr
    & 400 & & 0.005 &  0.001 &    0.004 &  0.158 &  0.001 &  0.036 &  0.001 &  0.000 &  0.101
   \end{tabular}
\end{table}
\end{landscape}

\newpage


\begin{landscape}

 \begin{table}

\caption{ Mean Absolute Error (MAE) and misclassification results: $\sigma^2 = 1$,  $M=2$}
     \hspace*{.2cm}
\begin{tabular}{cccccccccccc}

\hline\hline
 & & &  &  & \multirow{2}{*}{$G=2$} &   &  &\tabularnewline
 & & &  &  &  &  &  & & \tabularnewline
\hline\hline 
\multirow{2}{*}{$ \ \ T\ \ $} & {\multirow{2}{*}{ $\ \ N \ \ $}} & \multicolumn{1}{c|}{} & \multicolumn{7}{c|}{ Mean Absolute Error} & \multicolumn{2}{c}{Misclassification}\tabularnewline
 & & \multicolumn{1}{c|}{} & $ \ $S$ \ $ & \ P-S \ &  LS &  Pen NN  & $ \ \ \ $I-GFE$ \ \ \ $ & $ \ \ \ $GFE$ \ \ \ $ &\multicolumn{1}{c|}{$ \ \ $Oracle$ \ \ $} & $ \ $S$ \ $ & $ \ \ \ $GFE$ \ \ $\tabularnewline
\hline 
\multirow{3}{*}{20} & 100 & &  0.048 &  0.019 &    0.016 &  0.171 &  0.015 &  0.016 &  0.014 &  0.015 &  0.004 \cr
    & 200 & & 0.028 &  0.009 &    0.011 &  0.169 &  0.008 &  0.008 &  0.008 &  0.004 &  0.005 \cr
    & 400 & & 0.019 &  0.007 &    0.008 &  0.169 &  0.007 &  0.007 &  0.006 &  0.003 &  0.003 \cr
\multirow{3}{*}{50} & 100 & & 0.022 &  0.008 &    0.010 &  0.168 &  0.008 &  0.008 &  0.008 &  0.000 &  0.000 \cr
    & 200 & & 0.016 &  0.007 &    0.007 &  0.168 &  0.007 &  0.007 &  0.007 &  0.000 &  0.000 \cr
    & 400 & & 0.009 &  0.003 &    0.006 &  0.168 &  0.003 &  0.003 &  0.003 &  0.000 &  0.000 \cr
\multirow{3}{*}{100} & 100 & & 0.013 &  0.005 &   0.008 &  0.168 &  0.005 &  0.005 &  0.005 &  0.000 &  0.000 \cr
    & 200 & & 0.011 &  0.004 &   0.005 &  0.169 &  0.004 &  0.004 &  0.004 &  0.000 &  0.000 \cr
    & 400 & & 0.006 &  0.003 &   0.004 &  0.167 &  0.003 &  0.003 &  0.003 &  0.000 &  0.000 \cr

\hline\hline
 & & &  &  & \multirow{2}{*}{$G=7$} &   & & &\tabularnewline
 & & &  &  &  &  &  & &\tabularnewline
\hline\hline 
\multirow{2}{*}{$ \ \ T\ \ $} & {\multirow{2}{*}{ $\ \ N \ \ $}} & \multicolumn{1}{c|}{} & \multicolumn{7}{c|}{ Mean Absolute Error} & \multicolumn{2}{c}{Misclassification}\tabularnewline
 & & \multicolumn{1}{c|}{} & $ \ $S$ \ $ & \ P-S \ &  LS &  Pen NN  & $ \ \ \ $I-GFE$ \ \ \ $ & $ \ \ \ $GFE$ \ \ \ $ &\multicolumn{1}{c|}{$ \ \ $Oracle$ \ \ $} & $ \ $S$ \ $ & $ \ \ \ $GFE$ \ \ $\tabularnewline
\hline 
\multirow{3}{*}{20} & 100 & &  0.104 &  0.069 &   0.146 &  0.167 &  0.015 &  0.166 &  0.012 &  0.169 &  0.639 \cr
    & 200 & &  0.075 &  0.057 &    0.108 &  0.169 &  0.012 &  0.159 &  0.009 &  0.144 &  0.641 \cr
    & 400 & & 0.038 &  0.046 &    0.040 &  0.170 &  0.014 &  0.153 &  0.006 &  0.103 &  0.608 \cr
   \multirow{3}{*}{50} & 100 & & 0.050 &  0.013 &    0.094 &  0.169 &  0.009 &  0.081 &  0.009 &  0.004 &  0.204 \cr
    & 200 & & 0.024 &  0.006 &    0.011 &  0.168 &  0.005 &  0.014 &  0.005 &  0.002 &  0.019 \cr
    & 400 & & 0.014 &  0.004 &    0.006 &  0.167 &  0.004 &  0.004 &  0.004 &  0.000 &  0.000 \cr
   \multirow{3}{*}{100} & 100 & & 0.025 &  0.006 &    0.011 &  0.166 &  0.006 &  0.017 &  0.006 &  0.000 &  0.019 \cr
    & 200 & & 0.012 &  0.004 &    0.006 &  0.168 &  0.004 &  0.004 &  0.004 &  0.000 &  0.000 \cr
    & 400 & & 0.007 &  0.003 &    0.003 &  0.168 &  0.003 &  0.003 &  0.003 &  0.000 &  0.000
   \end{tabular}
\end{table}
\end{landscape}

\newpage


\begin{landscape}

 \begin{table}

\caption{ Mean Absolute Error (MAE) and misclassification results: $\sigma^2 = 4$,  $M=2$}
     \hspace*{.2cm}
\begin{tabular}{cccccccccccc}

\hline\hline
 & & &  &  & \multirow{2}{*}{$G=2$} &  &  & &\tabularnewline
 & & &  &  &  &  &  & &\tabularnewline
\hline\hline 
\multirow{2}{*}{$ \ \ T\ \ $} & {\multirow{2}{*}{ $\ \ N \ \ $}} & \multicolumn{1}{c|}{} & \multicolumn{7}{c|}{ Mean Absolute Error} & \multicolumn{2}{c}{Misclassification}\tabularnewline
 & & \multicolumn{1}{c|}{} & $ \ $S$ \ $ & \ P-S \ &   LS &  Pen NN  & $ \ \ \ $I-GFE$ \ \ \ $ & $ \ \ \ $GFE$ \ \ \ $ &\multicolumn{1}{c|}{$ \ \ $Oracle$ \ \ $} & $ \ $S$ \ $ & $ \ \ \ $GFE$ \ \ $\tabularnewline
\hline 
 \multirow{3}{*}{20} & 100 & & 0.036 &  0.005 &    0.006 &  0.167 &  0.005 &  0.005 &  0.005 &  0.000 &  0.000 \cr
    & 200 & &  0.017 &  0.003 &    0.004 &  0.169 &  0.003 &  0.003 &  0.003 &  0.000 &  0.000 \cr
    & 400 & & 0.012 &  0.003 &    0.004 &  0.167 &  0.003 &  0.003 &  0.003 &  0.000 &  0.000 \cr
   \multirow{3}{*}{50} & 100 & & 0.016 &  0.003 &    0.004 &  0.122 &  0.003 &  0.003 &  0.003 &  0.000 &  0.000 \cr
    & 200 & & 0.012 &  0.002 &    0.003 &  0.087 &  0.002 &  0.002 &  0.002 &  0.000 &  0.000 \cr
    & 400 & & 0.006 &  0.001 &    0.002 &  0.038 &  0.001 &  0.001 &  0.001 &  0.000 &  0.000 \cr
   \multirow{3}{*}{100} & 100 & & 0.009 &  0.002 &    0.003 &  0.030 &  0.002 &  0.002 &  0.002 &  0.000 &  0.000 \cr
    & 200 & & 0.007 &  0.002 &    0.002 &  0.009 &  0.002 &  0.002 &  0.002 &  0.000 &  0.000 \cr
    & 400 & & 0.005 &  0.001 &    0.001 &  0.001 &  0.001 &  0.001 &  0.001 &  0.000 &  0.000 \cr
    
\hline\hline
 & & &  &  & \multirow{2}{*}{$G=7$} &  &  & &\tabularnewline
 & & &  &  &  &  &  & &\tabularnewline
\hline\hline 
\multirow{2}{*}{$ \ \ T\ \ $} & {\multirow{2}{*}{ $\ \ N \ \ $}} & \multicolumn{1}{c|}{} & \multicolumn{7}{c|}{ Mean Absolute Error} & \multicolumn{2}{c}{Misclassification}\tabularnewline
 & & \multicolumn{1}{c|}{} & $ \ $S$ \ $ & \ P-S \ &  LS &  Pen NN  & $ \ \ \ $I-GFE$ \ \ \ $ & $ \ \ \ $GFE$ \ \ \ $ &\multicolumn{1}{c|}{$ \ \ $Oracle$ \ \ $} & $ \ $S$ \ $ & $ \ \ \ $GFE$ \ \ $\tabularnewline
\hline 
  \multirow{3}{*}{20} & 100 & & 0.103 &  0.006 &    0.007 &  0.164 &  0.004 &  0.004 &  0.004 &  0.003 &  0.000 \cr
    & 200 & & 0.063 &  0.003 &    0.005 &  0.169 &  0.003 &  0.003 &  0.003 &  0.000 &  0.000 \cr
    & 400 & & 0.028 &  0.002 &    0.004 &  0.169 &  0.002 &  0.002 &  0.002 &  0.000 &  0.000 \cr
   \multirow{3}{*}{50} & 100 & & 0.036 &  0.003 &    0.005 &  0.171 &  0.003 &  0.003 &  0.003 &  0.000 &  0.000 \cr
    & 200 & & 0.015 &  0.002 &    0.003 &  0.169 &  0.002 &  0.002 &  0.002 &  0.000 &  0.000 \cr
    & 400 & & 0.011 &  0.001 &    0.002 &  0.168 &  0.001 &  0.001 &  0.001 &  0.000 &  0.000 \cr
   \multirow{3}{*}{100} & 100 & & 0.021 &  0.002 &    0.003 &  0.173 &  0.002 &  0.002 &  0.002 &  0.000 &  0.000 \cr
    & 200 & & 0.010 &  0.001 &    0.002 &  0.167 &  0.001 &  0.001 &  0.001 &  0.000 &  0.000 \cr
    & 400 & & 0.005 &  0.001 &    0.001 &  0.166 &  0.001 &  0.001 &  0.001 &  0.000 &  0.000
   \end{tabular}
\end{table}
\end{landscape}

\end{document}